\documentclass[12pt]{article}
\usepackage{amsmath}
\usepackage{graphicx}
\usepackage{enumerate}
\usepackage{natbib}
\usepackage{url} 


\addtolength{\oddsidemargin}{-.5in}%
\addtolength{\evensidemargin}{-1in}%
\addtolength{\textwidth}{1in}%
\addtolength{\textheight}{1.7in}%
\addtolength{\topmargin}{-1in}%

\usepackage{amsmath}
\usepackage{graphicx,psfrag,epsf}
\usepackage{enumerate}
\usepackage{natbib}
\usepackage{url} 
\usepackage{amsthm}
\usepackage{booktabs}
\usepackage{framed}  
\usepackage{caption}
\usepackage{pgfplots}
\usepgfplotslibrary{dateplot}
\usetikzlibrary{snakes}
\usepackage{float}
\usepackage{amsfonts}
\usepackage{multirow, booktabs}
\usepackage{cleveref}
\usepackage{subcaption}
\usepackage[ruled,vlined]{algorithm2e}
\usepackage[T1]{fontenc}
\usepackage[utf8]{inputenc}
\usepackage{authblk}
\usepackage[multiple]{footmisc}
\usepackage{blindtext,titlefoot}
\usepackage{sectsty}
\usepackage{xcolor}
\usepackage{tikz}
\usepackage{amsmath}
\usepackage{graphicx}
\usepackage{comment}
\usepackage{amsfonts}
\usepackage{bbm}
\usepackage{amssymb}
\usepackage{setspace}
\usepackage{natbib}
\usepackage[figuresright]{rotating}
\usepackage{adjustbox}
\usepackage{pifont}
\usetikzlibrary{positioning, shapes.geometric}

\usepackage{tabularx}
\usepackage{ragged2e} 
\newcolumntype{L}{>{\RaggedRight}X} 
\usepackage{lipsum} 

\usepackage[ruled,vlined]{algorithm2e}
\usepackage{amsmath, amssymb}
\usepackage{amsfonts, multirow, epsfig, subfig}
\usepackage{graphicx, pdflscape, verbatim, enumerate, colortbl, setspace}
\usepackage{setspace, color,bm}
\usepackage[normalem]{ulem}
\usepackage{cite}
\usepackage{multirow}
\usepackage{booktabs,array}
\usepackage{url}
\usepackage{bbm}

\newtheorem{proposition}{Proposition}
\theoremstyle{definition}
\newtheorem{condition}{Condition}

\newtheorem{lemma}{Lemma}

\newtheorem{theorem}{Theorem}

\theoremstyle{definition}
\newtheorem{remark}{Remark}

\newtheorem{assumption}{Assumption}

\makeatletter
\newcommand*{\rom}[1]{\expandafter\@slowromancap\romannumeral #1@}
\makeatother

\begin{document}

\def\spacingset#1{\renewcommand{\baselinestretch}%
{#1}\small\normalsize} \spacingset{1}

\sectionfont{\bfseries\large\sffamily}%
%
\newcommand*\emptycirc[1][1ex]{\tikz\draw (0,0) circle (#1);} 
\newcommand*\halfcirc[1][1ex]{%
  \begin{tikzpicture}
  \draw[fill] (0,0)-- (90:#1) arc (90:270:#1) -- cycle ;
  \draw (0,0) circle (#1);
  \end{tikzpicture}}
\newcommand*\fullcirc[1][1ex]{\tikz\fill (0,0) circle (#1);} 

\subsectionfont{\bfseries\sffamily\normalsize}%
%


\def\spacingset#1{\renewcommand{\baselinestretch}%
{#1}\small\normalsize} \spacingset{1}

\begin{center}
    \Large \bf Randomization-Based Inference for Average Treatment Effects in Inexactly Matched Observational Studies
\end{center}


\begin{center}
  \large $\text{Jianan Zhu}^{1}$, $\text{Jeffrey Zhang}^{2}$, $\text{Zijian Guo}^{3}$, and $\text{Siyu Heng}^{*, 1}$
\end{center}

\begin{center}
   \large \textit{$^{1}$Department of Biostatistics, New York University}
\end{center}

\begin{center}
   \large \textit{$^{2}$Data Science Institute, University of Chicago}
\end{center}

\begin{center}
   \large \textit{$^{3}$Center for Data Science, Zhejiang University}
\end{center}

\let\thefootnote\relax\footnotetext{$^{*}$Corresponding Author: Siyu Heng, Department of Biostatistics, School of Global Public Health, New York University, New York, NY 10003, U.S.A. (email: siyuheng@nyu.edu). }

\bigskip

\begin{abstract}

Matching is a widely used causal inference design that aims to approximate a randomized experiment using observational data by forming matched sets of treated and control units based on similarities in their covariates. Ideally, treated units are exactly matched with controls on these covariates, enabling randomization-based inference for treatment effects as in a randomized experiment, under the assumption of no unobserved covariates. However, inexact matching often occurs, leading to residual covariate imbalance after matching. Previous matched studies have typically overlooked this issue and relied on conventional randomization-based inference, assuming that some covariate balance criteria are met. Recent research, however, has shown that this approach can introduce significant bias and proposed methods to correct for bias arising from inexact matching in randomization-based inference. These methods, however, are primarily focused on the constant treatment effect and its extensions (i.e., Fisher’s sharp null) and do not apply to average treatment effects (i.e., Neyman’s weak null). To address this gap, we introduce a new method--inverse post-matching probability weighting--for conducting randomization-based inference for average treatment effects under inexact matching. Our theoretical and simulation results indicate that, compared to conventional randomization-based inference methods, our approach significantly reduces bias and improves coverage rates in the presence of inexact matching.
\end{abstract}

\noindent%
{\it Keywords:} Bias correction; Finite-population causal inference; Matching; Neyman's weak null; Randomization.

\spacingset{1.73} 

\section{Introduction}
\label{intro}
Matching is one of the most commonly used causal inference frameworks in observational studies. It seeks to mimic a randomized experiment with observational (non-experimental) data by matching each treated unit with control units based on proximity in covariate values. Ideally, treated and control units are exactly matched on covariates so that the treatments are as-if randomly assigned within each matched set, and randomization-based inference can therefore be conducted (assuming no unobserved covariates) as in a randomized experiment (\citealp{rosenbaum2002observational, rosenbaum2020design}). However, matching is typically inexact in practice, especially when continuous or multiple covariates exist. Previous matched studies have routinely ignored inexact matching, relying on the conventional randomization-based inference as long as the matched dataset satisfies some prespecified covariate balance criteria (e.g., the absolute standardized mean difference for each covariate is less than 0.2; \citealp{rosenbaum2020design}) or passes some balance tests (e.g., \citealp{gagnon2019classification, branson2021randomization}). However, recent studies suggested that this routine practice can introduce severe bias to randomization-based inference (\citealp{guo2023statistical, pimentel2024covariate}). To correct for bias in randomization-based inference due to inexact matching, there are two existing approaches: subclassification based on the extent of post-matching covariate imbalance (\citealp{rosenbaum1988permutation}) and covariate-adaptive randomization-based inference (\citealp{pimentel2024covariate}). However, these two approaches focus primarily on the constant treatment effect model and its extensions (i.e., Fisher's sharp null). They do not apply to the average treatment effect (i.e., Neyman's weak null), which allows unlimited effect heterogeneity and does not rely on any treatment effect models.

To fill this important gap, we propose a new approach -- inverse post-matching probability weighting (IPPW) -- to conduct randomization-based inference for the sample average treatment effect in potentially inexactly matched observational studies (of which exact matching is a special case). The core idea of IPPW is to incorporate the \textit{post-matching} covariate imbalance information to re-weight both the \textit{post-matching} difference-in-means estimator and the corresponding Neyman-type variance estimator. To our knowledge, this is the first randomization-based inference method for handling inexact matching beyond the constant treatment effect model and its extensions (i.e., Fisher's sharp null); see Table~\ref{tab: contribution} for a summary. Both the theoretical and simulation results suggest that the proposed IPPW method is promising to reduce estimation bias and improve the coverage rate of confidence intervals for the sample average treatment effect under inexact matching. We have also developed an open-source \texttt{R} package \texttt{RIIM} (\textbf{R}andomization-Based \textbf{I}nference under \textbf{I}nexact \textbf{M}atching) for implementation of our methods.

\vspace{0.3cm}

\begin{table}[htbp]
\setlength{\tabcolsep}{4pt}
\centering
\captionsetup{font=small}
\caption{Applicability of some existing methods and our proposed method for randomization-based inference in matched observational studies. The column ``General Matching Design'' indicates whether the method can be applied beyond pair matching (e.g., matching with multiple controls). }
\small
\scalebox{0.87}{
\begin{tabular}{c|c c c c}
Randomization-Based Inference & Constant Effect & Average Effect & General Matching Design  & Inexact Matching \\
\hline
Rosenbaum (1987; \textit{Biometrika}) &  \checkmark & \ding{55}  & \checkmark &  \ding{55}  \\
Rosenbaum (1988; \textit{JRSSC}) &  \checkmark & \ding{55}  &  \ding{55} & \checkmark  \\
Fogarty (2018; \textit{JRSSB})   & \checkmark & \checkmark & \checkmark & \ding{55} \\
Pimentel \& Huang (2024; \textit{JRSSB})  & \checkmark & \ding{55} & \checkmark & \checkmark \\
This Work  & \checkmark & \checkmark & \checkmark & \checkmark \\

\end{tabular}
}
\label{tab: contribution}

\end{table}
\vspace{-0.6cm}
\begin{remark}
Existing work shows that, after matching, the matched dataset generally cannot be treated as if it arose from a randomized experiment with uniform treatment assignment within matched sets, even under large sample asymptotics. For example, \citet{guo2023statistical} show that the matching discrepancies can remain statistically meaningful in large pair-matched observational studies, which may invalidate standard post-matching randomization tests. \citet{savje2022inconsistency} further show that matching without replacement is in general asymptotically biased for the average treatment effect on the treated when treated units, if no bias correction after matching was performed. 
\end{remark}

\begin{remark}
    There are some existing methods for correcting for the bias associated with inexact matching under the \textit{super-population} inference framework (e.g., \citealp{abadie2011bias, guo2023statistical}). Our work, focusing on randomization-based (finite-population) causal inference under inexact matching, is intrinsically different from these existing methods in terms of the target causal estimands, the sources of randomness, and the statistical methodologies; see Remark S2 in Appendix E for details.
\end{remark}

\vspace{-0.5cm}
\section{Review: Randomization-Based Inference Under Exact Matching}\label{sec: review}

Consider a general matching design with $I$ matched sets and $N$ total units. In matched set $i \in \{1,\dots,I\}$, there are $n_{i}$ units  (so $N=\sum_{i=1}^{I}n_{i}$), among which $m_{i}$ units received the treatment, $n_{i}-m_{i}$ units received the control, and $\min \{m_{i}, n_{i}-m_{i}\}=1$. This general setting covers many widely used matching designs (\citealp{rosenbaum2002observational, rosenbaum2020design}). For example, if $n_{i}=2$ and $m_{i}=1$ for all $i$, the study design is pair matching. When $m_{i}=1$  and $n_{i}-m_{i} \geq 2$ for all $i$, the study design is matching with multiple controls. If $\min \{m_{i}, n_{i}-m_{i}\}=1$ for all $i$ is the only constraint, the matching design is full matching (\citealp{hansen2004full}). Then, for unit $j$ in matched set $i$, let $Z_{ij}$ denote its treatment indicator (i.e., $Z_{ij}=1$ if receiving the treatment and $Z_{ij}=0$ if receiving the control), $\mathbf{x}_{ij}=(x_{ij1}, \dots, x_{ijK})$ its $K$-dimensional (observed) covariates, and $Y_{ij}$ its observed outcome. Following the potential outcomes framework (\citealp{neyman1923application, rubin1974estimating}), we have $Y_{ij}=Z_{ij}Y_{ij}(1)+(1-Z_{ij})Y_{ij}(0)$, where $Y_{ij}(1)$ and $Y_{ij}(0)$ denote the potential outcome under treatment and that under control, respectively. Let $\mathbf{Z}=(Z_{11}, \dots, Z_{In_{I}})$ denote the treatment indicators vector, $\mathbf{Y}=(Y_{11}, \dots, Y_{In_{I}})$ the observed outcomes vector, and $\mathbf{X}=(\mathbf{x}_{11}, \dots, \mathbf{x}_{In_{I}})$ all the covariates information in the dataset, including the intercept term. Let $\mathcal{Z}=\{\mathbf{Z}\in\{0,1\}^{N}:\sum_{j=1}^{n_{i}}Z_{ij}=m_{i}\}$ denote the collection of all possible treatment assignments after matching. 
\begin{assumption}\label{assumption: ignorability}(No Unobserved Covariates): $(Y_{ij}(1),Y_{ij}(0))\perp\!\!\!\perp Z_{ij}\mid \mathbf{x}_{ij}$.
\end{assumption}
\begin{assumption}\label{assumption: positivity}(Positivity Assumption): $\text{pr}(Z_{ij}=1 \mid\mathbf{x}_{ij})\in [\delta, 1-\delta]$ for some fixed $\delta\in (0, 0.5]$.
\end{assumption}
Under Assumptions~\ref{assumption: ignorability} and \ref{assumption: positivity}, as well as exact matching (i.e., $\mathbf{x}_{ij}=\mathbf{x}_{ij^{\prime}}$ for all $i, j, j^{\prime}$), the treatments are as-if randomly assigned within each matched set (\citealp{rosenbaum2002observational, rosenbaum2020design, ding2024first}): 
\begin{equation}
\label{eqn: randomization assumption}
    \text{pr}(Z_{ij}=1 \mid \mathcal{Z}, \mathbf{X})=m_{i}/n_{i}, \text{ for $i=1,\dots, I, j=1,\dots, n_{i}$}.
\end{equation}
In randomization-based inference, all the potential outcomes are fixed, and the only probability distribution that enters into inference is the randomization assumption (\ref{eqn: randomization assumption}), which holds by design in a randomized experiment or an exactly matched observational study (\citealp{rosenbaum2002observational, rosenbaum2020design, fogarty2018mitigating, basse2024randomization}). Then, researchers can apply randomization-based inference to various causal estimands. For example, for inferring the sample average treatment effect $\lambda=N^{-1}\sum_{i=1}^{I}\sum_{j=1}^{n_{i}}\{Y_{ij}(1)-Y_{ij}(0)\}$, researchers can adopt the commonly used difference-in-means estimator $\widehat{\lambda}$ (\citealp{neyman1923application, li2017general, fogarty2018mitigating}), where $\widehat{\lambda}=\sum_{i=1}^{I}(n_{i}/N)\widehat{\lambda}_{i}$, with $\widehat{\lambda}_{i}=\sum_{j=1}^{n_{i}}\{Z_{ij}Y_{ij}/m_{i}-(1-Z_{ij})Y_{ij}/(n_{i}-m_{i})\}$. Under exact matching (in which (\ref{eqn: randomization assumption}) holds), the post-matching difference-in-means estimator $\widehat{\lambda}$ is an unbiased estimator for $\lambda$, and an asymptotically valid variance estimator for $\widehat{\lambda}$ can be derived using the methods in \citet{fogarty2018mitigating} to facilitate randomization-based inference. 

\vspace{-0.5cm}
 
\section{Our Proposed Approach: Randomization-Based Inference via Inverse Post-Matching Probability Weighting}\label{sec: IPPW method}
\subsection{The IPPW estimator with oracle post-matching treatment assignment probabilities}\label{sec: oracle}

In practice, we rarely expect covariates to be exactly matched between the treated and control units, i.e., $\mathbf{x}_{ij}= \mathbf{x}_{ij^{\prime}}$ may not hold in practice. Actually, previous work \citep{savje2022inconsistency, guo2023statistical} has shown that such post-matching covariate discrepancies can substantially bias downstream randomization-based inference, even when the sample size goes to infinity. To address this issue, we propose a randomization-based inference method for inferring the sample average treatment effect $\lambda$ in inexactly matched observational studies. Our approach consists of two components. First, we propose a new randomization-based estimator, called the inverse post-matching probability weighting (IPPW) estimator, to correct for bias due to inexact matching by re-weighting the \textit{post-matching} difference-in-means estimator according to discrepancies of \textit{post-matching} treatment assignment probabilities. Second, we derive a new variance estimator for the proposed IPPW estimator, which is the first randomization-based and model-free variance estimator that is (asymptotically) valid under inexact matching. In Section~\ref{sec: oracle}, we derive the IPPW estimator and its variance estimator under oracle propensity scores, which serves as an intermediate step toward deriving the IPPW estimator and its variance estimator with estimated propensity scores in Section~\ref{sec: M-estimation}.

Specifically, we let $e_{ij}$ denote the propensity score of unit $j$ in matched set $i$, and $\mathbf{e}=(e_{11},\dots,e_{In_{I}})$ the propensity score vector. Then, we can express each post-matching treatment assignment probability $p_{ij}=\text{pr}(Z_{ij}=1\mid\mathcal{Z},\mathbf{X})$ in terms of $(e_{i1}, \dots, e_{in_{i}})$. Specifically, for a matched set $i$ with one treated and one or multiple controls (i.e., $m_{i}=1$), following the arguments in \citet{pimentel2024covariate}, we have 
$p_{ij} =\text{pr}(Z_{ij}=1  \mid \mathcal{Z}, \mathbf{X})=\frac{\text{odds}\{e_{ij}\}}{\sum_{j'=1}^{n_i}\text{odds}\{e_{ij'}\}}.$
If the matched set $i$ has one control and multiple treated units (i.e., $n_{i}-m_{i}=1$ and $m_{i}>1$), we have $p_{ij}=\text{pr}(Z_{ij}=1  \mid \mathcal{Z}, \mathbf{X})=1-\frac{\text{odds}\{1-e_{ij}\}}{\sum_{j'=1}^{n_i}\text{odds}\{1-e_{ij'}\}}.$
Then, the oracle form of the proposed IPPW estimator (under the oracle $p_{ij}$) is defined as
\begin{equation}\label{eqn: IPPW oracle}
    \widehat{\lambda}_{*} = \sum_{i=1}^{I}\frac{n_{i}}{N}\widehat{\lambda}_{*, i},  \text{ where }  \widehat{\lambda}_{*, i}=\frac{1}{n_{i}}\sum_{j=1}^{n_{i}}\Big(\frac{Z_{ij}Y_{ij}}{p_{ij}}-\frac{(1-Z_{ij})Y_{ij}}{1-p_{ij}}\Big).
\end{equation}
As shown in (\ref{eqn: IPPW oracle}), the idea of IPPW is to re-weight the \textit{post-matching} difference-in-means estimator \textit{within each matched set} based on discrepancies in $p_{ij}$ due to inexact matching. Under exact matching, each $p_{ij}=m_{i}/n_{i}$, so the IPPW estimator $\widehat{\lambda}_{*}$ reduces to the post-matching difference-in-means estimator $\widehat{\lambda}$.

\begin{proposition}\label{prop: unbiasdness} Under Assumptions~\ref{assumption: ignorability} and \ref{assumption: positivity}, we have $E(\widehat{\lambda}_{*}\mid \mathcal{Z})=\lambda$.
\end{proposition}
Proposition~\ref{prop: unbiasdness} shows that the (oracle) IPPW estimator $\widehat{\lambda}_{*}$ is unbiased for estimating $\lambda$, even under inexact matching. The proofs of all the theoretical results are in Appendix B in the supplemental material. We next derive an asymptotically valid variance estimator for $\widehat{\lambda}_{*}$, by adapting the approach of \citet{kang2016full} and \citet{fogarty2018mitigating} to the inexact matching case. Specifically, note that $\widehat{\lambda}_{*}=I^{-1}\sum_{i=1}^{I}V_{i}$, where $V_i=w_{i} \widehat{\lambda}_{*,i}$ with $w_i=In_i/N$ representing the ratio between the matched set size $n_{i}$ and the average size $N/I$. Then, we propose the following variance estimator for $\widehat{\lambda}_{*}$: 
\begin{equation}\label{eqn: oracle variance estimator}
    S^2_{*}=\frac{1}{I(I-1)}\sum_{i=1}^I\big\{V_i-\widehat{\lambda}_*\big\}^2, \text{ where $\widehat{\lambda}_{*}=\frac{1}{I}\sum_{i=1}^{I}V_{i}$.}
\end{equation}
Under some mild regularity conditions (i.e., Conditions 1--2 stated below), Theorem~\ref{thm: CI of the SATE true} shows that the confidence interval based on $\widehat{\lambda}_{*}$ and $S^2_{*}$, as well as a finite-population central limit theorem, is asymptotically valid without any modeling assumptions on treatment effects, even under inexact matching. In Appendix B.2, we extend the variance estimator $S^{2}_{*}$ to accommodate covariate adjustment to further improve efficiency, as well as discuss other options of variance estimators for $\widehat{\lambda}_{*}$. 

\begin{condition}\label{condition: bounded main}(Bounded Matched Sets and Bounded Outcomes): There exists some constant $C_{1}<\infty$ such that $n_i \leq C_1$ for all $i$. Also, there exists a constant $M<\infty$ such that $|Y_{ij}|\leq M$ for all $i, j$.
\end{condition}

\begin{condition}\label{condition: convergence main} (Convergence of Finite-Population Means): For each matched set $i$, we define $\mu_{i}=E(\widehat{\lambda}_{*,i}\mid\mathcal{Z})$ and $\nu_{i}^2=\text{var} (\widehat{\lambda}_{*,i}\mid\mathcal{Z})$, we have: (i) $I^{-1}\sum_{i=1}^{I}w_{i}\mu_{i}$, $I^{-1}\sum_{i=1}^{I}w_{i}^2\mu_{i}$, and $I^{-1}\sum_{i=1}^{I}w_{i}^2\mu_{i}^2$ converge to some finite values; (ii) $I^{-1}\sum_{i=1}^{I}w_{i}^2\nu_{i}^2$ converges to some finite positive value.
\end{condition}

\begin{theorem}\label{thm: CI of the SATE true}
 Consider the confidence interval $CI^{\lambda}_{*}=[\widehat{\lambda}_{*} - \Phi^{-1}(1-\alpha/2) \times S_{*}, \widehat{\lambda}_{*} + \Phi^{-1}(1-\alpha/2) \times S_{*}]$, where $\alpha\in (0,0.5)$ is some prespecified level, $\Phi$ is the distribution function of $N(0,1)$, and $S_{*}=\sqrt{S_{*}^{2}}$. Assuming independence of treatment assignments across matched sets, Assumptions~\ref{assumption: ignorability} and \ref{assumption: positivity}, and Conditions~\ref{condition: bounded main}--\ref{condition: convergence main}, we have $\lim_{I\rightarrow \infty} \text{pr}\big(\lambda\in CI^{\lambda}_{*} \mid \mathcal{Z}\big) \geq 1-\alpha$. 
\end{theorem}
The two key ingredients for establishing Theorem~\ref{thm: CI of the SATE true} are the asymptotic normality of $\widehat{\lambda}_{*}$ (see Appendix B.3) and the asymptotic conservativeness of $S_{*}^{2}$ (see Appendix B.2). The confidence interval in Theorem~\ref{thm: CI of the SATE true} is sharp when treatment effects are constant across all units in the sample and all strata have the same sample size. In general, however, it is not sharp under treatment effect heterogeneity. To our knowledge, Theorem~\ref{thm: CI of the SATE true} gives the first (asymptotically) valid confidence interval for the sample average treatment effect $\lambda$ under inexact matching, which is universally applicable for general matching designs.

\vspace{-0.3cm}
\subsection{The IPPW estimator with estimated post-matching treatment assignment probabilities}\label{sec: M-estimation}

In practice, the true post-matching probabilities $p_{ij}$ involved in $\widehat{\lambda}_{*}$ and $S_{*}^{2}(Q)$ are unknown. A general and sensible strategy for handling this is the commonly adopted ``plug-in'' strategy (\citealp{rosenbaum1987model, ding2024first, pimentel2024covariate}): we replace each oracle $p_{ij}$ with its estimate $\widehat{p}_{ij}$, obtained by replacing the oracle propensity scores $e_{ij}$ in $p_{ij}$ with the estimated propensity scores $\widehat{e}_{ij}$. Then, we let $\widehat{\lambda}_{\diamond}$ denote the IPPW estimator obtained by replacing each $p_{ij}$ in $\widehat{\lambda}_{*}$ with $\widehat{p}_{ij}$: 
\begin{equation*}\label{eqn: IPPW with estimated PS}
    \widehat{\lambda}_{\diamond} = \sum_{i=1}^{I}(n_{i}/N)\widehat{\lambda}_{\diamond, i},  \text{ where }  \widehat{\lambda}_{\diamond, i}=\frac{1}{n_{i}}\sum_{j=1}^{n_{i}}\Big(\frac{Z_{ij}Y_{ij}}{\widehat{p}_{ij}}-\frac{(1-Z_{ij})Y_{ij}}{1-\widehat{p}_{ij}}\Big).
\end{equation*}

\begin{proposition}\label{prop: consistency}
    Under Assumptions~\ref{assumption: ignorability} and \ref{assumption: positivity}, as well as some regularity conditions specified in Appendix B.4, we have $\widehat{\lambda}_{\diamond}\xrightarrow{a.s.} \lambda$ as $I\rightarrow \infty$ (i.e., the $\widehat{\lambda}_{\diamond}$ is strongly consistent).
\end{proposition}

The convergence rate of $\widehat{\lambda}_{\diamond}$ depends on the convergence rate of the propensity score model. More importantly, Proposition~\ref{prop: consistency} establishes that $\widehat{\lambda}_{\diamond}$ is consistent for estimating $\lambda$, even though the true post-matching treatment assignment probabilities are unknown and must be estimated from the data. To construct a variance estimator of $\widehat{\lambda}_{\diamond}$ and enable inference, we consider two strategies. The first strategy adopts the commonly used ``plug-in'' approach \citep{pimentel2024covariate}: we replace each oracle $p_{ij}$ with its estimate $\widehat{p}_{ij}$ in the variance estimator formula (\ref{eqn: oracle variance estimator}) derived under oracle $p_{ij}$, and denote the resulting variance estimator by $S_{\diamond}^{2}$. Then, the proposed randomization-based confidence interval for $\lambda$ can be expressed as $CI^{\lambda}_{\diamond}=[\widehat{\lambda}_{\diamond} - \Phi^{-1}(1-\alpha/2) \times S_{\diamond}, \widehat{\lambda}_{\diamond} + \Phi^{-1}(1-\alpha/2) \times S_{\diamond}]$, where $S_{\diamond}=\sqrt{S^{2}_{\diamond}}$. In Appendix B.5, we provide a detailed characterization of the finite-sample bias between the plug-in confidence interval $CI^{\lambda}_{\diamond}$ and the oracle confidence interval $CI^{\lambda}_{*}$. In the simulation studies presented in Appendix C, this plug-in strategy performs well and can substantially improve coverage rates compared with the conventional inference methods in matched studies, in both parametric and nonparametric propensity score settings.

The second strategy for obtaining the variance estimator of $\widehat{\lambda}_{\diamond}$ is to extend the finite-population M-estimation theory \citep{xu2021mestimator} to matched data, which takes the propensity score estimation and treatment effect estimation as a joint process to account for uncertainties from both. Unlike M-estimators under the infinite-population paradigm, our approach considers an infinite sequence of ever-large finite populations, in each of which the randomness arises solely from the treatment assignment vector $\mathbf{Z}$. Under full matching, subjects within each finite population are grouped into mutually independent matched sets. The propensity score estimation utilizes all the units within the finite population. As such, we adopt new notations. Specifically, for a given finite population, let $\mathbf{O}_i=\{\mathbf{Y}_i,\mathbf{Z}_i,\mathbf{X}_i\}$ denote the observed data corresponding to matched set $i=1,\cdots,I$, where $\mathbf{Y}_i=(Y_{i1},\dots, Y_{in_{i}})$, $\mathbf{Z}_i=(Z_{i1}, \dots, Z_{in_{i}})$, and $\mathbf{X}_i=(\mathbf{x}_{i1}, \dots, \mathbf{x}_{in_{i}}) \in \mathbbm{R}^{p\times{n_i}}$. Let $\theta_0$ denote the p-dimensional true parameter vector of the propensity score model, and let $\nu_0=\frac{1}{N}\sum_{i=1}^{n_{i}}Y_{ij}(1)$ and $\nu_0'=\frac{1}{N}\sum_{i=1}^{n_{i}}Y_{ij}(0)$ represent the finite-population means of the potential outcomes under treatment and control, respectively. For matched set $i$, denote $\boldsymbol\psi^{full}(\mathbf{O}_i,\theta,\nu,\nu')$ the corresponding estimating equations (see Appendix B.6 for specific forms in specific examples). For the matched data, under the finite-population M-estimation framework, the estimates $(\widehat{\theta},\widehat{\nu},\widehat{\nu}')$ for $(\theta_0,\nu_0,\nu'_0)$ are obtained by solving the following finite-population estimating equations in terms of $(\theta,\nu,\nu') \in \mathbbm{R}^{p+2}$:
\begin{equation}\label{eqn: M-estimator}
    \frac{1}{I}\sum_{i=1}^{I}\boldsymbol\psi^{full}(\mathbf{O}_i,\theta,\nu,\nu')=\mathbf{0}_{(p+2)\times1}.
\end{equation}
We now aim to estimate the asymptotic variance of ($\widehat{\theta},\widehat{\nu},\widehat{\nu}'$). To this end, we apply the sandwich variance construction for finite-population M-estimators proposed in \citet{xu2021mestimator} to the matched dataset. Specifically, we construct the following finite-population asymptotic variance matrix, which is analogous to the asymptotic variance-covariance matrix in the traditional infinite population version of M-estimation:
\begin{equation*}
V(\theta_0,\nu_0.\nu'_0)=A(\theta_0,\nu_0.\nu'_0)^{-1}B(\theta_0,\nu_0.\nu'_0)[A(\theta_0,\nu_0.\nu'_0)^{-1}]^T \in \mathbbm{R}^{(p+2)\times(p+2)},
\end{equation*}
where $A(\cdot)$ and $B(\cdot)$ are defined as follows:
\begin{align*}
    &A(\theta_0,\nu_0.\nu'_0)=\lim_{I \rightarrow \infty} \frac{1}{I} \sum_{i=1}^{I}E\Big[-\nabla_{\theta,\nu,\nu',}\boldsymbol\psi^{full}(\mathbf{O}_i,\theta_0,\nu_0,\nu'_0)\Big]\in \mathbbm{R}^{(p+2)\times(p+2)}, \\    
    &B(\theta_0,\nu_0.\nu'_0)=\lim_{I \rightarrow \infty} \frac{1}{I} \sum_{i=1}^{I}E\Big[\boldsymbol\psi^{full}(\mathbf{O}_i,\theta_0,\nu_0,\nu'_0)\boldsymbol\psi_i^{full}(\mathbf{O}_i,\theta_0,\nu_0,\nu'_0)^T\Big]\in \mathbbm{R}^{(p+2)\times(p+2)}.
\end{align*}
For the detailed expression of $V(\theta_0,\nu_0.\nu'_0)$, please see Appendix B.6. While the variance matrix $V(\theta_0,\nu_0.\nu'_0)$ depends on the oracle knowledge of the true parameters and the moments of the derivatives of the $\boldsymbol{\psi}$ functions, it can be consistently estimated by substituting the estimated (sample-based) parameters and sample-based moments. We denote this sample-based estimator of $V(\theta_0,\nu_0.\nu'_0)$ by $\widehat{V}(\widehat{\theta},\widehat{\nu},\widehat{\nu}')\in \mathbbm{R}^{(p+2)\times(p+2)}$. Recall that the sample average treatment effect $\lambda=\nu_{0}-\nu_{0}^{\prime}$ and the plug-in estimator $\widehat{\lambda}_{\diamond}=\widehat{\nu}-\widehat{\nu}^{\prime}$. Therefore, we construct the variance estimator $S_{\mathcal{M}}^2$ for $\widehat{\lambda}_{\diamond}$ as:
\begin{equation*}
    S_{\mathcal{M}}^2=\frac{1}{I}\widehat{V}(\widehat{\theta},\widehat{\nu},\widehat{\nu}')_{(p+1),(p+1)}+\frac{1}{I}\widehat{V}(\widehat{\theta},\widehat{\nu},\widehat{\nu}')_{(p+2),(p+2)}-\frac{2}{I}\widehat{V}(\widehat{\theta},\widehat{\nu},\widehat{\nu}')_{(p+1),(p+2)}.
\end{equation*}
The detailed expressions of $\widehat{V}(\widehat{\theta},\widehat{\nu},\widehat{\nu}')_{(p+1),(p+1)}$, $\widehat{V}(\widehat{\theta},\widehat{\nu},\widehat{\nu}')_{(p+2),(p+2)}$, and $\widehat{V}(\widehat{\theta},\widehat{\nu},\widehat{\nu}')_{(p+1),(p+2)}$ can be found in Appendix B.6. By adapting the finite-population M-estimation theory \citep{xu2021mestimator} to matched observational studies, we have the following asymptotic normality result for the plug-in estimator $\widehat{\lambda}_{\diamond}$:
\begin{theorem}[Asymptotic Normality of the Plug-in IPPW Estimator]\label{thm: asymptotic normality of plug-in}
Denote $\boldsymbol{\gamma}_{0}=(\theta_0,\nu_0,\nu'_0)$ and $\widehat{\boldsymbol{\gamma}}=(\widehat{\theta},\widehat{\nu},\widehat{\nu}')$. Under Assumptions~\ref{assumption: ignorability} and \ref{assumption: positivity}, independence of treatment assignments across matched sets, and the regularity conditions specified in Appendix B.6 (i.e., the common regularity conditions for finite-population M-estimation), we have $\sqrt{I}(\widehat{\boldsymbol{\gamma}}-\boldsymbol{\gamma}_{0})^T\xrightarrow{d}\mathcal{N}(\mathbf{0}, V_{fp}) \ \text{as} \ I\rightarrow\infty$, where $V_{fp}$ denotes the true asymptotic variance matrix of $\widehat{\boldsymbol{\gamma}}$.
\end{theorem}
Also, we have the following validity guarantee for the variance estimator for the plug-in estimator $\widehat{\lambda}_{\diamond}$:
\begin{theorem}[Valid Variance Estimator for the Plug-in IPPW Estimator]\label{thm: valid variance for the plug-in estimator} Under the setup in Theorem~\ref{thm: asymptotic normality of plug-in}, as $I\rightarrow \infty$, we have (i) $\widehat{V}(\widehat{\theta},\widehat{\nu},\widehat{\nu}')\xrightarrow{p}V(\theta_0,\nu_0.\nu'_0)$, and (ii) $V(\theta_0,\nu_0.\nu'_0)\succcurlyeq V_{fp}$, where the matrix inequality ``$\succcurlyeq$'' denotes that $V(\theta_0,\nu_0.\nu'_0)-V_{fp}$ is a positive semidefinite matrix. Combing (i) and (ii), we have $\lim_{I\rightarrow \infty} S_{\mathcal{M}}^2/\text{var}(\widehat{\lambda}_{\diamond})\xrightarrow{p} b$ for some constant $b>1$.
\end{theorem}
Combining Theorems~\ref{thm: asymptotic normality of plug-in}
and \ref{thm: valid variance for the plug-in estimator}, we have the following theoretical guarantee for randomization-based inference based on the plug-in IPPW estimator $\widehat{\lambda}_{\diamond}$ (i.e., the IPPW etimator with estimated propensity scores).
\begin{theorem}[Valid Confidence Interval based on the Plug-in IPPW Estimator]\label{thm: CI of the SATE estimated}
    Consider the confidence interval $CI^{\lambda}_{\mathcal{M}}=[\widehat{\lambda}_{\diamond} - \Phi^{-1}(1-\alpha/2) \times S_{\mathcal{M}}, \widehat{\lambda}_{\diamond} + \Phi^{-1}(1-\alpha/2) \times S_{\mathcal{M}}]$, where $\widehat{\lambda}_{\diamond}$ is the IPPW estimator with estimated propensity scores, $S_{\mathcal{M}}=\sqrt{S^{2}_{\mathcal{M}}}$, and $\alpha\in(0,0.5)$. Under the setup in Theorems~\ref{thm: asymptotic normality of plug-in} and \ref{thm: valid variance for the plug-in estimator}, we have $\lim_{I \rightarrow \infty}\text{pr}(\lambda \in CI^{\lambda}_{\mathcal{M}}|\mathcal{Z})\geq1-\alpha$ for any level $\alpha\in(0,0.5)$.
\end{theorem}

To our knowledge, Theorem 2 provides the first rigorously valid confidence interval for the sample average treatment effect $\lambda$ in inexactly matched observational studies, without relying on known propensity scores. A simulation study assessing the performance of the proposed IPPW estimator is presented in Appendix C.1. The results show that the proposed IPPW estimator (with both oracle and estimated propensity scores) substantially reduces estimation bias and improves coverage rates, compared with the existing estimators for $\lambda$ (e.g., the conventional matching estimator in randomization-based inference). A data application is also provided in Appendix D.

\vspace{-0.5cm}

\section{Incorporating Outcome Modeling into the IPPW Estimator}\label{sec: AIPPW}

In this section, we show that our framework has the flexibility of incorporating outcome models into the IPPW estimator (in addition to weighting), which will be referred to as the augmented IPPW (AIPPW) estimator. We provide both theoretical insights (e.g., asymptotic biases) and numerical insights (e.g., finite-sample biases) to illustrate when incorporating outcome models may be preferable to the original IPPW estimator. Throughout this section, the outcome models may be estimated either from the study sample itself or from external (out-of-sample) data, yielding an estimated potential outcome function for the treated (denoted as $\widehat{g}_1(\mathbf{x}_{ij})$) and that for the control (denoted as $\widehat{g}_0(\cdot)$). These models may be specified using any parametric or non-parametric approach. By extending the form of the augmented inverse probability weighting (AIPW) estimator in the super-population, unmatched settings (\citealp{robins1994estimation}) to randomization-based (finite-population) inference for matched observed studies, we propose the following form of the AIPPW estimator under the oracle post-matching treatment assignment probabilities $p_{ij}$ (henceforth termed the \textit{oracle AIPPW estimator}):
\begin{equation*}\label{eqn: AIPPW oracle}
    \widehat{\lambda}_{\dagger}=\sum_{i=1}^{I}\frac{n_i}{N}\widehat{\lambda}_{\dagger,i}, \text{ where } \widehat{\lambda}_{\dagger, i}=\frac{1}{n_i}\sum_{j=1}^{n_i}\bigg\{\frac{Z_{ij}}{p_{ij}}
    \Big(Y_{ij}-\widehat{g}_1(\mathbf{x}_{ij})\Big)- 
    \frac{(1-Z_{ij})}{1-p_{ij}}
    \Big(Y_{ij}-\widehat{g}_0(\mathbf{x}_{ij})\Big)+\widehat{g}_1(\mathbf{x}_{ij})-\widehat{g}_0(\mathbf{x}_{ij})\bigg\}.
\end{equation*}
As seen from the expression above, the AIPPW estimator is analogous to the conventional augmented inverse probability weighting (AIPW) estimator in that it combines outcome modeling with weighting. The key distinction is that the weighting component in AIPPW is tailored to the matched design and accommodates inexact matching. Specifically, AIPPW estimators weight observations by the post-matching treatment assignment probabilities $p_{ij}$ rather than the pre-matching propensity scores. Moreover, it is straightforward to show that the oracle AIPPW estimator is unbiased for estimating $\lambda$, even when $\widehat{g}_1(\mathbf{x}_{ij})$ and $\widehat{g}_{0}(\mathbf{x}_{ij})$ are misspecified (see Proposition S3 and its proof in Appendix B.8). That is, when $p_{ij}$ is known, such as in matched or finely stratified randomized experiments (\citealp{imai2008variance, fogarty2018regression, fogarty2018mitigating}), both the oracle IPPW estimator $\widehat{\lambda}_{*}$ and the oracle AIPPW estimator $\widehat{\lambda}_{\dagger}$ are unbiased estimators of $\lambda$. In this setting, variance, or equivalently efficiency, becomes the primary consideration when choosing between $\widehat{\lambda}_{*}$ and $\widehat{\lambda}_{\dagger}$. As suggested by the literature on matched or stratified randomized experiments (e.g., \citealp{fogarty2018regression, liu2020regression}), using suitable outcome models $\widehat{g}_1(\mathbf{x}_{ij})$ and $\widehat{g}_{0}(\mathbf{x}_{ij})$ can reduce the variance of the estimator and thereby improve the efficiency of randomization-based inference.  

However, unlike matched randomized experiments, in matched observational studies, $p_{ij}$ are typically unknown and must be estimated from the observed data. Following principles similar to those in Section 3.2 in the main text, we propose the following AIPPW estimator based on the estimated post-matching treatment assignment probabilities $\widehat{p}_{ij}$, which we refer to as the \textit{plug-in AIPPW estimator}:
\begin{equation*}\label{eqn: AIPPW estimate}
\widehat{\lambda}_{\ddagger}=\sum_{i=1}^{I}\frac{n_i}{N}\widehat{\lambda}_{\ddagger,i}, \text{ where } \widehat{\lambda}_{\ddagger, i}=\frac{1}{n_i}\sum_{j=1}^{n_i}\bigg\{\frac{Z_{ij}}{\widehat{p}_{ij}}
    \Big(Y_{ij}-\widehat{g}_1(\mathbf{x}_{ij})\Big)- 
    \frac{(1-Z_{ij})}{1-\widehat{p}_{ij}}
    \Big(Y_{ij}-\widehat{g}_0(\mathbf{x}_{ij})\Big)+\widehat{g}_1(\mathbf{x}_{ij})-\widehat{g}_0(\mathbf{x}_{ij})\bigg\}.
\end{equation*}

For the plug-in AIPPW estimator $\widehat{\lambda}_{\ddagger}$, we have the following double robustness property:

\begin{proposition}\label{prop: consistency_AIPPW}
    Under Assumptions~\ref{assumption: ignorability} and \ref{assumption: positivity}, independence of post-matching treatment assignments across matched sets, and some mild regularity conditions specified in Appendix B.4, we have $\widehat{\lambda}_{\ddagger}\xrightarrow{a.s.} \lambda$ as $I\rightarrow \infty$ (i.e., the $\widehat{\lambda}_{\ddagger}$ is strongly consistent) provided that either one of the following two consistency conditions hold as $I\rightarrow \infty$: (i) $\widehat{p}_{ij}\xrightarrow{a.s.}p_{ij}$; or (ii) $\widehat{g}_1(\mathbf{x}_{ij})\xrightarrow{a.s.} Y_{ij}(1)$ and $\widehat{g}_0(\mathbf{x}_{ij})\xrightarrow{a.s.} Y_{ij}(0)$.
\end{proposition}
In other words, Proposition~\ref{prop: consistency_AIPPW} shows that $\widehat{\lambda}_{\ddagger}$ remains consistent if either the propensity score model or the outcome model is correctly specified. However, this double robustness does not necessarily imply that the AIPPW estimator should always be preferred to the IPPW estimator, even as $I\rightarrow \infty$. For example, if neither the propensity score model nor the outcome model is correctly specified, incorporating misspecified outcome models may increase both the asymptotic bias and the finite-sample estimation biases. Specifically, suppose that the propensity score model is misspecified, so that $\widehat{p}_{ij} \xrightarrow{a.s.} \widetilde{p}_{ij}$ for some $\widetilde{p}_{ij}\neq p_{ij}$, and that the outcome models are also misspecified, so that $\widehat{g}_{1}(\mathbf{x}_{ij})\xrightarrow{a.s.} \widetilde{g}_1(\mathbf{x}_{ij})$ for some $\widetilde{g}_1(\mathbf{x}_{ij})\neq Y_{ij}(1)$ and $\widehat{g}_{0}(\mathbf{x}_{ij})\xrightarrow{a.s.} \widetilde{g}_0(\mathbf{x}_{ij})$ for some $\widetilde{g}_0(\mathbf{x}_{ij})\neq Y_{ij}(0)$. Then, we can derive the following expressions for the asymptotic estimation biases, relative to the sample average treatment effect $\lambda$, of the IPPW estimator $\widehat{\lambda}_{\diamond}$ and the AIPPW estimator $\widehat{\lambda}_{\ddagger}$:
  \begin{align*} 
             \lim_{I \to \infty}\widehat{\lambda}_{\diamond}-\lambda &\overset{\mathrm{a.s.}}{=}\lim_{I \to \infty}\widehat{\lambda}_{\diamond}-\lim_{I \to \infty}\widehat{\lambda}_*\\
            &=\lim_{I \to \infty}\Bigg\{\frac{1}{N}\sum_{i=1}^{I}\sum_{j=1}^{n_i}\bigg[Z_{ij}Y_{ij}\bigg(\frac{1}{\widetilde{p}_{ij}}-\frac{1}{p_{ij}}\bigg)-(1-Z_{ij})Y_{ij}\bigg(\frac{1}{1-\widetilde{p}_{ij}}-\frac{1}{1-p_{ij}}\bigg)\bigg]\Bigg\},
    \end{align*}
    and
      \begin{align*}      
           \lim_{I \to \infty}\widehat{\lambda}_{\ddagger}-\lambda&\overset{\mathrm{a.s.}}{=}\lim_{I \to \infty}\Bigg\{\frac{1}{N}\sum_{i=1}^{I}\sum_{j=1}^{n_i}\bigg[Z_{ij}Y_{ij}\bigg(\frac{1}{\widetilde{p}_{ij}}-\frac{1}{p_{ij}}\bigg)-(1-Z_{ij})Y_{ij}\bigg(\frac{1}{1-\widetilde{p}_{ij}}-\frac{1}{1-p_{ij}}\bigg)\\
            &\quad \quad \quad \quad \quad \quad \quad \quad \quad \quad \quad \quad +(\widetilde{p}_{ij}-Z_{ij})\bigg(\frac{\widetilde{g}_1(\mathbf{x}_{ij})}{\widetilde{p}_{ij}}+\frac{\widetilde{g}_0(\mathbf{x}_{ij})}{1-\widetilde{p}_{ij}}\bigg)\bigg]\Bigg\}.
    \end{align*}
That is, the asymptotic bias of $\widehat{\lambda}_{\diamond}$ and that of $\widehat{\lambda}_{\ddagger}$ differ by $\{\lim_{I \to \infty}\widehat{\lambda}_{\ddagger}-\lambda\}-\{\lim_{I \to \infty}\widehat{\lambda}_{\diamond}-\lambda\} = \lim_{I\rightarrow \infty}\big\{\frac{1}{N}\sum_{i=1}^{I}\sum_{j=1}^{n_i}(\widetilde{p}_{ij}-Z_{ij})\big(\frac{\widetilde{g}_1(\mathbf{x}_{ij})}{\widetilde{p}_{ij}}+\frac{\widetilde{g}_0(\mathbf{x}_{ij})}{1-\widetilde{p}_{ij}}\big)\big\}$. In particular, if both $\{\lim_{I \to \infty}\widehat{\lambda}_{\ddagger}-\lambda\}-\{\lim_{I \to \infty}\widehat{\lambda}_{\diamond}-\lambda\}$ and $\lim_{I \to \infty}\widehat{\lambda}_{\diamond}-\lambda$ are positive (or negative), we have $|\lim_{I \to \infty}\widehat{\lambda}_{\ddagger}-\lambda|>|\lim_{I \to \infty}\widehat{\lambda}_{\diamond}-\lambda|$, i.e., the asymptotic absolute bias of the AIPPW estimator $\widehat{\lambda}_{\ddagger}$ is larger than that of the IPPW estimator $\widehat{\lambda}_{\diamond}$. In addition, the simulation results on the finite-sample biases of the AIPPW and IPPW estimators (in Appendix C.3) are consistent with the above theoretical comparisons of asymptotic biases. Specifically, the simulation studies in Appendix C.3 demonstrate that outcome-model augmentation can either increase or decrease the magnitude of finite-sample bias of the proposed IPPW estimation framework, depending on how well the outcome models are specified. 



\vspace{-0.5cm}

\section*{Acknowledgement}

The authors thank Rebecca Betensky, Colin Fogarty, Hyunseung Kang, Samuel Pimentel, and Bo Zhang for the helpful discussions and comments. The work of Siyu Heng was supported in part by NIH Grant R21DA060433 and an NYU Research Catalyst Prize. The work of Zijian Guo was partly supported by NIH R01LM013614, NIH R01AG086379, and NSF DMS 2413107, when he was a faculty member at the Rutgers University in the United States.

\vspace{-0.5cm}

\section*{Supplementary Material}
The online supplementary material includes all the technical proofs, additional theoretical and simulation results, additional discussions and remarks, and a data application using our proposed methods.

\renewcommand{\theproposition}{S\arabic{proposition}}
\renewcommand{\thelemma}{S\arabic{lemma}}
\renewcommand{\thecondition}{S\arabic{condition}}
\renewcommand{\theremark}{S\arabic{remark}}
\renewcommand{\thetable}{S\arabic{table}}
\renewcommand{\thetheorem}{S\arabic{theorem}}

\bibliographystyle{apalike}
\bibliography{references}

\def\spacingset#1{\renewcommand{\baselinestretch}%
{#1}\small\normalsize} \spacingset{1}

\begin{center}
    \Large \bf Supplementary Material for ``Randomization-Based Inference for Average Treatment Effects in Inexactly Matched Observational Studies''
\end{center}

\spacingset{1.73}
\section*{Appendix A: Extension of the Proposed Method to Instrumental Variable Studies}\label{sec: Appendix IV}

In the main text, we focus on handling overt bias in randomization-based inference due to inexact matching on observed covariates. In many settings, hidden bias due to unobserved covariates may also exist and cannot be directly adjusted for by the method proposed in Section 3 in the main text. To address this limitation, we show how to combine the IPPW method with the instrumental variable (IV) method to simultaneously address the concerns for observed and unobserved covariates. Following the classic framework of matched IV studies (\citealp{baiocchi2010building, rosenbaum2020design}), we still consider the notations used in Sections 2 and 3 in the main text, with the only differences being that the $Z_{ij}$ is the indicator of a binary observational instrumental variable, and $\mathbf{x}_{ij}$ represent the observed IV-outcome covariates. Under the IV ignorability assumption (i.e., the IVs are independent of the potential outcomes conditional on $\mathbf{x}_{ij}$ being adjusted for), the IV randomization assumption holds conditional on exact matching on the observed IV-outcome covariates (i.e., $\mathbf{x}_{ij}=\mathbf{x}_{ij^{\prime}}$) \citep{baiocchi2010building, kang2016full}. We let $D_{ij}=Z_{ij}D_{ij}(1)+(1-Z_{ij})D_{ij}(0)$ denote the observed value of the actual treatment of interest, where $D_{ij}(1)$ and $D_{ij}(0)$ denote the potential treatment value under $Z_{ij}=1$ and that under $Z_{ij}=0$, respectively. Also, in this section, the notations $Y_{ij}(1)$ and $Y_{ij}(0)$ denote the potential outcome under IV value $Z_{ij}=1$ and that under $Z_{ij}=0$, respectively. 


In matched IV studies, a commonly considered estimand is the \textit{effect ratio} (\citealp{baiocchi2010building, rosenbaum2020design}), defined as the ratio between the average IV effect on the outcome and that on the treatment: effect ratio $\theta=\{\sum_{i=1}^{I}\sum_{j=1}^{n_{i}}Y_{ij}(1)-Y_{ij}(0)\}/\{\sum_{i=1}^{I}\sum_{j=1}^{n_{i}}D_{ij}
\\(1)-D_{ij}(0)\}$, in which we assume that the IV relevance assumption holds (i.e., $\sum_{i=1}^{I}\sum_{j=1}^{n_{i}}D_{ij}\\(1)-D_{ij}(0)>0$). The effect ratio $\theta$ reduces to the classic sample complier average treatment effect when the treatment $D_{ij}$ is binary and when the exclusion restriction assumption (i.e., the IV $Z_{ij}$ can only affect $Y_{ij}$ through its effect on $D_{ij}$) and IV monotonicity assumptions hold (i.e., $D_{ij}(1)\geq D_{ij}(0)$ for all $i, j$). In the previous matched IV studies, researchers have routinely ignored inexact matching and relied on the conventional Wald estimator $\widehat{\theta}$ to conduct randomization-based inference for $\theta$ (\citealp{kang2016full, rosenbaum2020design}). As discussed in previous sections, ignoring inexact matching in randomization-based inference may cause severe bias. To mitigate such bias, we propose a new estimator called the \textit{bias-corrected Wald estimator} and the corresponding bias-corrected variance estimator to conduct randomization-based inference in inexactly matched IV studies. Specifically, consider the null hypothesis $H_{\theta_{0}}:\theta=\theta_{0}$, where $\theta_{0}$ is some prespecified value. We propose the following bias-corrected test statistic for testing $H_{\theta_{0}}$: 
\begin{equation*}
A_{*}(\theta_{0})=\frac{1}{I}\sum_{i=1}^{I}A_{*,i}(\theta_{0}), \text{ where } A_{*,i}(\theta_{0})=\sum_{j=1}^{n_{i}}\frac{Z_{ij}}{p_{ij}}(Y_{ij}-\theta_{0}D_{ij})-\sum_{j=1}^{n_{i}}\frac{1-Z_{ij}}{1-p_{ij}}(Y_{ij}-\theta_{0}D_{ij}).
\end{equation*}
Solving the estimating equation $A_{*}(\theta)=0$ under the oracle $p_{ij}\in (0,1)$ gives the oracle form of the bias-corrected Wald estimator for the effect ratio:
\begin{equation*}
    \widehat{\theta}_{*}=\frac{\sum_{i=1}^{I}\sum_{j=1}^{n_{i}}\frac{1}{p_{ij}(1-p_{ij})}Y_{ij}(Z_{ij}-p_{ij})}{\sum_{i=1}^{I}\sum_{j=1}^{n_{i}}\frac{1}{p_{ij}(1-p_{ij})}D_{ij}(Z_{ij}-p_{ij})}.
\end{equation*}
\setcounter{proposition}{0}
\begin{proposition}\label{prop: consistency_IV}
   Under the IV ignorability and relevance assumptions, as well as regularity conditions~\ref{condition: convergence of estimand IPPW} and~\ref{condition: bounded IPPW} specified in Appendix B.11, we have $\widehat{\theta}_{*}\xrightarrow{a.s.} \theta$ as $I\rightarrow \infty$ (i.e., the $\widehat{\theta}_{*}$ is strongly consistent).
\end{proposition}
To facilitate bias-corrected randomization-based inference using $A_{*}(\theta_{0})$, we also propose the following variance estimator for $A_{*}(\theta_{0})$: $V_{*}^2(\theta_{0})=\{I(I-1)\}^{-1}\sum_{i=1}^{I}\{A_{*,i}(\theta_{0})-A_{*}(\theta_{0})\}^2$. Then, consider the confidence set $CS^{\theta}_{*}=\{\theta_{0}: |A_{*}(\theta_{0})/\sqrt{V^{2}_{*}(\theta_{0})}|\leq\Phi^{-1}(1-\alpha/2)\}$, where $\alpha\in (0,1/2)$.
\setcounter{theorem}{0}
\begin{theorem}\label{the: IV confidence set}
Assuming independence across matched sets, the IV ignorability assumption, the IV relevance assumption, and regularity conditions ~\ref{condition: convergence of estimand IPPW}--\ref{condition: slow} specified in Appendices B.11-13, we have $\liminf_{I\rightarrow \infty}\text{pr}\big(\theta\in CS^{\theta}_{*}\mid \mathcal{Z} \big)\geq 1-\alpha$ for any level $\alpha\in (0,1/2)$.
\end{theorem}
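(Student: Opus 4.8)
The plan is to invert the test and run a Neyman-style effect-ratio argument with three ingredients: the test statistic is \emph{exactly} mean zero at the true $\theta$ (given $\mathcal{Z}$), it obeys a Lindeberg--Feller central limit theorem, and $V_*^2(\theta)$ is conservatively biased upward. Since $\theta\in CS^{\theta}_{*}$ if and only if $|A_{*}(\theta)/\sqrt{V^{2}_{*}(\theta)}|\leq\Phi^{-1}(1-\alpha/2)$, it suffices to show $\liminf_{I\to\infty}\text{pr}\big(|A_{*}(\theta)/\sqrt{V^{2}_{*}(\theta)}|\leq\Phi^{-1}(1-\alpha/2)\mid\mathcal{Z}\big)\geq 1-\alpha$. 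First I would record the structural identity underlying everything. Writing the adjusted potential responses $R_{ij}(z)=Y_{ij}(z)-\theta D_{ij}(z)$, the quantity $A_{*,i}(\theta)$ is exactly the (unnormalized) matched-set-$i$ IPPW contribution applied to the adjusted responses $R_{ij}=Y_{ij}-\theta D_{ij}$. Using $E(Z_{ij}\mid\mathcal{Z})=p_{ij}$ under the IV ignorability assumption (the same step as in the proof of Proposition~\ref{prop: unbiasdness}), one gets $E(A_{*,i}(\theta)\mid\mathcal{Z})=\sum_{j=1}^{n_i}\{R_{ij}(1)-R_{ij}(0)\}$, so that $E(A_{*}(\theta)\mid\mathcal{Z})=I^{-1}\big[\sum_{i,j}\{Y_{ij}(1)-Y_{ij}(0)\}-\theta\sum_{i,j}\{D_{ij}(1)-D_{ij}(0)\}\big]$, which is identically zero by the definition of the effect ratio $\theta$ (the IV relevance assumption guaranteeing the denominator is positive). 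Hence $A_{*}(\theta)$ is mean zero given $\mathcal{Z}$.

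Next I would apply a Lindeberg--Feller (or Lyapunov) central limit theorem to $\sqrt{I}\,A_{*}(\theta)=I^{-1/2}\sum_{i=1}^{I}A_{*,i}(\theta)$: conditional on $\mathcal{Z}$, the $A_{*,i}(\theta)$ are independent across matched sets, mean zero, and each takes finitely many values controlled by the bounded matched-set size, the $p_{ij}$ bounded away from $0$ and $1$, and bounded moments of $(Y,D)$; under regularity conditions S1--S5 (the IV analogues of Conditions~\ref{condition: no extreme pairs}--\ref{condition: bounded fourth moments}) the Lyapunov ratio tends to zero. Writing $\sigma_I^2=\text{var}(\sqrt{I}\,A_{*}(\theta)\mid\mathcal{Z})=I^{-1}\sum_{i}\text{var}(A_{*,i}(\theta)\mid\mathcal{Z})$, which the conditions force to converge to a finite positive limit $v^2$, this yields $\sqrt{I}\,A_{*}(\theta)/\sigma_I\xrightarrow{d}N(0,1)$.

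Then I would analyze $I\,V_*^2(\theta)$. Expanding $\sum_i\{A_{*,i}(\theta)-A_{*}(\theta)\}^2=\sum_i A_{*,i}(\theta)^2-I\,A_{*}(\theta)^2$, using independence, $\mu_i:=E(A_{*,i}(\theta)\mid\mathcal{Z})$ with $\sum_i\mu_i=0$, gives $E\{I\,V_*^2(\theta)\mid\mathcal{Z}\}=\sigma_I^2+(I-1)^{-1}\sum_{i=1}^{I}\mu_i^2\geq\sigma_I^2$, i.e.\ $V_*^2(\theta)$ over-estimates the variance of $A_{*}(\theta)$ by the between-set heterogeneity of the adjusted treatment effects. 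A weak law of large numbers for the independent, uniformly integrable summands $\{A_{*,i}(\theta)-A_{*}(\theta)\}^2$ (again using the bounded-moment conditions and the negligibility of the cross term coming from $A_{*}(\theta)$) gives $I\,V_*^2(\theta)-E\{I\,V_*^2(\theta)\mid\mathcal{Z}\}\xrightarrow{p}0$, hence $I\,V_*^2(\theta)\xrightarrow{p}v^2+b^2$ with $b^2=\lim_I I^{-1}\sum_i\mu_i^2\geq 0$. Combining via Slutsky's theorem, $A_{*}(\theta)/\sqrt{V_*^2(\theta)}=\sqrt{I}\,A_{*}(\theta)/\sqrt{I\,V_*^2(\theta)}\xrightarrow{d}N\big(0,\,v^2/(v^2+b^2)\big)$, a centered normal with variance at most one, so $\text{pr}\big(|A_{*}(\theta)/\sqrt{V_*^2(\theta)}|\leq\Phi^{-1}(1-\alpha/2)\mid\mathcal{Z}\big)\to\text{pr}\big(|N(0,v^2/(v^2+b^2))|\leq\Phi^{-1}(1-\alpha/2)\big)\geq 1-\alpha$, which is the claim. (Proposition~\ref{prop: consistency} is not needed for coverage but confirms $CS^{\theta}_{*}$ is non-degenerate.)

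The main obstacle I anticipate is not the algebra but verifying the two limit-theorem inputs under the finite-population conditions: (a) the Lindeberg/Lyapunov condition for the triangular array $\{A_{*,i}(\theta)\}$, requiring control of the tails of $A_{*,i}(\theta)$ via the bounded fourth-moment-type conditions together with a uniform lower bound on $p_{ij}(1-p_{ij})$ (this plays the role of Conditions~\ref{condition: no extreme pairs} and \ref{condition: bounded fourth moments} in the IV setting); and (b) the WLLN for $I\,V_*^2(\theta)$, where the contribution of $A_{*}(\theta)^2$ must be shown asymptotically negligible. A secondary point to handle cleanly is that $A_{*}(\theta_0)$ and $V_*^2(\theta_0)$ depend on $\theta_0$ only through the contrast $Y_{ij}-\theta_0 D_{ij}$, so all the above need only be carried out at the single true value $\theta$, and the test-inversion step then requires no additional uniformity over $\theta_0$.
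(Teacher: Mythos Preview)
Your proposal is correct and follows essentially the same route as the paper's proof: establish $E\{A_*(\theta)\mid\mathcal{Z}\}=0$ from the definition of the effect ratio, apply a Lyapunov central limit theorem to the standardized $A_*(\theta)$, show $V_*^2(\theta)$ is upward-biased by the between-set term $(I-1)^{-1}\sum_i\mu_i^2$, and combine. The one technical difference is that the paper does not claim full convergence $IV_*^2(\theta)\xrightarrow{p} v^2+b^2$ but only the one-sided statement $\text{pr}\{IV_*^2(\theta)-I\omega_\theta\leq-\epsilon\mid\mathcal{Z}\}\to 0$ (obtained from $I^{-1}\sum_i A_{*,i}^2(\theta)-I^{-1}\sum_i E\{A_{*,i}^2(\theta)\mid\mathcal{Z}\}\xrightarrow{p}0$ and $A_*^2(\theta)\xrightarrow{p}0$), which avoids assuming that $\lim_I I^{-1}\sum_i\mu_i^2$ or $\lim_I \sigma_I^2$ exist---and indeed the actual Conditions S1--S5 (a Lyapunov third-moment condition and a fourth-moment growth bound) do not guarantee those limits, so your Slutsky step as written would need to be replaced by the paper's cruder ``ratio $\leq 1$ with high probability'' argument.
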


To our knowledge, the $CS^{\theta}_{*}$ is the first valid confidence set for the effect ratio $\theta$ (which includes the sample complier average treatment effect as a special case) that has a theoretical guarantee of coverage rate under inexact matching (as stated in Theorem~\ref{the: IV confidence set}). The core idea of the proof of Theorem~\ref{the: IV confidence set} is to extend the arguments in \citet{baiocchi2010building} and \citet{kang2016full} from the perfect randomization case (assuming exact matching) to the biased randomization case (allowing for inexact matching). 

Note that the oracle form of the bias-corrected Wald estimator $\widehat{\theta}_{*}$ and the corresponding confidence set $CS^{\theta}_{*}$ involve the true post-matching IV assignment probabilities $p_{ij}$. In practical applications, we can adopt the commonly used ``plug-in" strategy (\citealp{rosenbaum1987model, ding2024first, pimentel2024covariate}) to replace each $p_{ij}$ with the estimate $\widehat{p}_{ij}$. Our simulation studies in Section C.4 show that the bias-corrected Wald estimator (based on either $\widehat{p}_{ij}$ or $p_{ij}$) outperforms the conventional Wald estimator in terms of estimation bias and coverage rate under the considered settings of inexact matching.

\section*{Appendix B: Technical Proofs and Additional Theoretical Results}\label{sec: Appendix proofs}

\subsection*{B.1: Proof of Proposition 1}

For each matched set $i$, we have 
\begin{align*}
    E(\widehat{\lambda}_{*, i}\mid \mathcal{Z})&=E\Big\{\frac{1}{n_{i}}\sum_{j=1}^{n_{i}}\Big(\frac{Z_{ij}Y_{ij}}{p_{ij}}-\frac{(1-Z_{ij})Y_{ij}}{1-p_{ij}}\Big)\mid \mathcal{Z}\Big\}\\
    &=\frac{1}{n_{i}}\Big\{\sum_{j=1}^{n_{i}}\frac{Y_{ij}(1)E(Z_{ij}\mid\mathcal{Z}) }{p_{ij} }-\sum_{j=1}^{n_{i}}\frac{Y_{ij}(0)(1-E(Z_{ij}\mid \mathcal{Z})) }{1-p_{ij}}\Big\}\\
    &=\frac{1}{n_{i}}\sum_{j=1}^{n_{i}}\{Y_{ij}(1)-Y_{ij}(0)\}.
\end{align*}

Therefore, we have 
\begin{equation*}       
    E(\widehat{\lambda}_{*}\mid\mathcal{Z})=E\Big(\sum_{i=1}^{I}\frac{n_{i}}{N} \widehat{\lambda}_{*, i}\mid \mathcal{Z}\Big)=\sum_{i=1}^{I}\frac{n_{i}}{N}E(\widehat{\lambda}_{*, i}\mid \mathcal{Z})=\frac{1}{N}\sum_{i=1}^{I}\sum_{j=1}^{n_{i}}\{Y_{ij}(1)-Y_{ij}(0)\}.
\end{equation*}

\subsection*{B.2: Statement and Proof of Proposition S2}
There is an asymptotically valid general class of variance estimator for $\widehat{\lambda}_{*}$. The key idea is to extend the arguments in \citet{fogarty2018mitigating} from the perfect randomization setting to the biased randomization (inexact matching) setting. Specifically, let $Q$ be any $I \times L$ matrix with $I > L$ ($I$ is the number of matched sets). For example, a canonical choice for $Q$ is the $I\times 1$ matrix (vector) with all the entries being one (i.e., a unit vector). Another common choice for $Q$ is an $I\times2$ matrix with the first column being all ones and the second being matched set sample weights $w_{i}=In_{i}/N$, $i=1,\dots, I$. In addition, the matrix $Q$ could contain covariate information aggregated at the matched set level. For example, when $K<I-1$, we can set $Q=(\mathbf{1}_{I\times 1}, \overline{\mathbf{x}}_{1}, \dots, \overline{\mathbf{x}}_{K})$, where $\mathbf{1}_{I\times 1}=(1,\dots, 1)^{T}$ is an $I$-dimensional unit vector and each $\overline{\mathbf{x}}_{k}=(n_{1}^{-1}\sum_{j=1}^{n_{1}}x_{1jk}, \dots, n_{I}^{-1}\sum_{j=1}^{n_{I}}x_{Ijk})^{T}$ is a vector recording the mean value of the $k$-th covariate within each matched set. Let $H_Q=Q(Q^TQ)^{-1}Q^T$ be the corresponding hat matrix of $Q$, and $h_{Qii}$ the $i$-th diagonal element of $H_Q$. Next, define $y_i=\widehat{\lambda}_{*,i}/\sqrt{1-h_{Qii}}$ and $\mathbf{y}=(y_1,\ldots,y_{I})$. Let $\mathcal{I}$ be an $I\times I$ identity matrix and $W$ be an $I \times I$ diagonal matrix with the $i$-th diagonal entry being $w_i=In_i/N$. Then, we propose the following variance estimator $S^2_{*}(Q)$ for $\widehat{\lambda}_{*}$: $S^2_{*}(Q)=I^{-2}\mathbf{y}W(\mathcal{I}-H_Q)W\mathbf{y}^{T}$. When $Q={I\times 1}$ (the unit vector), the general variance estimator $S^{2}_{*}(Q)$ reduces to $S^{2}_{*}$, i.e., it coincides with the variance estimator proposed in Section~3.1 of the main text.
\begin{proposition}\label{prop: variance}
    Assuming independence of treatment assignments across matched sets, along with Assumptions 1 and 2 in the main text. Then, for any prespecified $Q$, we have $E(S^2_{*}(Q)\mid \mathcal{Z})\geq \text{var}(\widehat{\lambda}_{*}\mid \mathcal{Z})$. 
\end{proposition}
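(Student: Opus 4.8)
The plan is to evaluate $E(S^2_{*}(Q)\mid\mathcal{Z})$ exactly and exhibit it as $\text{var}(\widehat{\lambda}_{*}\mid\mathcal{Z})$ plus a nonnegative quadratic form, so that the stated inequality is immediate. First, since treatment assignments are independent across matched sets and $\widehat{\lambda}_{*}=I^{-1}\sum_{i=1}^{I}w_{i}\widehat{\lambda}_{*,i}$, I would record $\text{var}(\widehat{\lambda}_{*}\mid\mathcal{Z})=I^{-2}\sum_{i=1}^{I}w_{i}^{2}\nu_{i}^{2}$, using the notation $\mu_{i}=E(\widehat{\lambda}_{*,i}\mid\mathcal{Z})$, $\nu_{i}^{2}=\text{var}(\widehat{\lambda}_{*,i}\mid\mathcal{Z})$ from Condition~\ref{condition: convergence}. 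Next, setting $u_{i}=w_{i}y_{i}=w_{i}\widehat{\lambda}_{*,i}/\sqrt{1-h_{Qii}}$ (so $u_{i}$ is the $i$-th entry of $W\mathbf{y}^{T}$) and $\mathbf{u}=(u_{1},\dots,u_{I})$, the definition of $S^2_{*}(Q)$ gives $S^2_{*}(Q)=I^{-2}\mathbf{u}(\mathcal{I}-H_{Q})\mathbf{u}^{T}=I^{-2}\{\sum_{i}(1-h_{Qii})u_{i}^{2}-\sum_{i\neq j}(H_{Q})_{ij}u_{i}u_{j}\}$, where $(H_{Q})_{ij}$ is the $(i,j)$ entry of $H_{Q}$.

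Then I would take conditional expectations term by term. Since $E(\widehat{\lambda}_{*,i}^{2}\mid\mathcal{Z})=\nu_{i}^{2}+\mu_{i}^{2}$, we get $(1-h_{Qii})E(u_{i}^{2}\mid\mathcal{Z})=w_{i}^{2}(\nu_{i}^{2}+\mu_{i}^{2})$, and by independence across matched sets $E(u_{i}u_{j}\mid\mathcal{Z})=v_{i}v_{j}$ for $i\neq j$, where $v_{i}:=w_{i}\mu_{i}/\sqrt{1-h_{Qii}}$. Writing $\mathbf{v}=(v_{1},\dots,v_{I})$ and using $\sum_{i\neq j}(H_{Q})_{ij}v_{i}v_{j}=\mathbf{v}H_{Q}\mathbf{v}^{T}-\sum_{i}h_{Qii}v_{i}^{2}$ together with the identity $w_{i}^{2}\mu_{i}^{2}=(1-h_{Qii})v_{i}^{2}$, the diagonal terms collapse and I arrive at
\[
E\big(S^2_{*}(Q)\mid\mathcal{Z}\big)=I^{-2}\Big\{\sum_{i=1}^{I}w_{i}^{2}\nu_{i}^{2}+\mathbf{v}(\mathcal{I}-H_{Q})\mathbf{v}^{T}\Big\}=\text{var}(\widehat{\lambda}_{*}\mid\mathcal{Z})+I^{-2}\,\mathbf{v}(\mathcal{I}-H_{Q})\mathbf{v}^{T}.
\]
Because $H_{Q}=Q(Q^{T}Q)^{-1}Q^{T}$ is the orthogonal projection onto the column space of $Q$, the matrix $\mathcal{I}-H_{Q}$ is symmetric and idempotent, hence positive semidefinite, so $\mathbf{v}(\mathcal{I}-H_{Q})\mathbf{v}^{T}\geq 0$ and the conclusion follows.

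The bookkeeping in the middle step is the only real labor; conceptually the key point is that the $\sqrt{1-h_{Qii}}$ rescaling built into $y_{i}$ is engineered precisely so that multiplying $E(u_{i}^{2}\mid\mathcal{Z})$ by $(1-h_{Qii})$ exactly removes the leverage inflation, after which the off-diagonal cross terms reassemble into a projected quadratic form in the matched-set means $w_{i}\mu_{i}$, which is automatically nonnegative regardless of the values of the $\mu_{i}$. The one technical caveat I would flag is that the construction presupposes $h_{Qii}<1$ for every $i$, so that $y_{i}$ and $v_{i}$ are well defined; this is a mild maintained condition on $Q$ (it can fail only if some coordinate vector lies in the column space of $Q$). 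Notably, no assumption on the treatment effects enters the argument, so the conservativeness holds under arbitrary effect heterogeneity, consistent with the Neyman-type target $\lambda$.
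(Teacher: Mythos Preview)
Your proof is correct and follows essentially the same approach as the paper: both compute $E(S^{2}_{*}(Q)\mid\mathcal{Z})$ exactly and exhibit it as $\text{var}(\widehat{\lambda}_{*}\mid\mathcal{Z})$ plus a nonnegative quadratic form in $\mathcal{I}-H_{Q}$. The only cosmetic difference is that the paper invokes the standard identity $E(\mathbf{y}A\mathbf{y}^{T})=\boldsymbol{\beta}A\boldsymbol{\beta}^{T}+\text{tr}(A\Sigma)$ (with $\boldsymbol{\beta}=E(\mathbf{y}\mid\mathcal{Z})$ and $\Sigma$ diagonal by independence) to reach $I^{-2}\boldsymbol{\beta}W(\mathcal{I}-H_{Q})W\boldsymbol{\beta}^{T}+\text{var}(\widehat{\lambda}_{*}\mid\mathcal{Z})$ in one stroke, whereas you expand the quadratic form entrywise and reassemble; since your $\mathbf{v}$ equals $W\boldsymbol{\beta}^{T}$, the two final expressions are identical.
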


\begin{proof}
Recall that for any random vector $\mathbf{y}$, we have $E(\mathbf{y}A\mathbf{y}^{T})=\mathbf{\beta} A\mathbf{\beta}^{T}+\text{tr}(A\Sigma)$ holds if $A$ is a symmetric matrix, $\mathbf{\beta}$ is the expectation vector of $\mathbf{y}$, and $\Sigma$ is the covariance matrix of $\mathbf{y}$. Since $W(\mathcal{I}-H_{Q})W$ is symmetric, we have:
\begin{align*}       
    I^2E\{S^2_{*}(Q)\mid\mathcal{Z}\}&=\mathbf{\beta}W(\mathcal{I}-H_{Q})W\mathbf{\beta}^{T}+\text{tr}(W(\mathcal{I}-H_{Q})W\Sigma) \\ 
    &=\mathbf{\beta}W(\mathcal{I}-H_{Q})W\mathbf{\beta}^{T}+\sum^{I}_{i=1}w_{i}^2
    \text{var}(y_{i}\mid\mathcal{Z})\times(1-h_{Qii}) \\ 
    &=\mathbf{\beta}W(\mathcal{I}-H_{Q})W\mathbf{\beta}^{T}+\sum^{I}_{i=1}w_{i}^2\frac{\text{var}(\widehat{\lambda}_{*, i}\mid\mathcal{Z})}{(1-h_{Qii})}\times(1-h_{Qii}) \\ 
    &=\mathbf{\beta}W(\mathcal{I}-H_{Q})W\mathbf{\beta}^{T}+\sum^{I}_{i=1}w_{i}^2\text{var}(\widehat{\lambda}_{*, i}\mid\mathcal{Z}).
\end{align*}
Under independence across matched sets, we have $\text{var}(\widehat{\lambda}_{*}\mid\mathcal{Z})=\sum_{i=1}^{I}\frac{n_{i}^2}{N^2}\text{var}(\widehat{\lambda}_{*, i}\mid\mathcal{Z})$. Therefore, we have $I^2E\{S^2_{*}(Q)\mid\mathcal{Z}\}=\mathbf{\beta}W(\mathcal{I}-H_{Q})W\mathbf{\beta}^{T}+I^2\text{var}(\widehat{\lambda}_{*}\mid\mathcal{Z})$, which implies that $E\{S^2_{*}(Q)\mid \mathcal{Z}\}-\text{var}(\widehat{\lambda}_{*}\mid \mathcal{Z})=I^{-2}\mathbf{\beta}W(\mathcal{I}-H_{Q})W\mathbf{\beta}^{T}\geq0$ (because the projection matrix $\mathcal{I}-H_{Q}$ is positive semi-definite). Moreover, when there is no treatment effect, $\beta=\boldsymbol{0}$ and $E\{S^2_{*}(Q)\mid \mathcal{Z}\}=\text{var}(\widehat{\lambda}_{*}\mid \mathcal{Z})$. In this case, $S^2_*(Q)$ is an unbiased estimator for $\text{var}(\widehat{\lambda}_{*}\mid \mathcal{Z})$.
\end{proof}

\vspace{-0.7cm}

\subsection*{B.3: Proof of Theorem 1}
To prove Theorem 1, we consider the following regularity conditions, which can be implied by Conditions 1--2 in the main text.
\setcounter{condition}{0}
\begin{condition}\label{condition: no extreme pairs}(No Extreme Matched Sets):
For each matched set $i$. we define $\widehat{\lambda}_{*,i}^{+}=
\text{max}_{\mathbf{Z}_{i}\in \mathcal{Z}_{i} }\widehat{\lambda}_{*, i}$, $\widehat{\lambda}_{*,i}^{-}=
\text{min}_{\mathbf{Z}_{i}\in \mathcal{Z}_{i} }\widehat{\lambda}_{*, i}$, and $M_{i}=\widehat{\lambda}_{*,i}^{+}-\widehat{\lambda}_{*,i}^{-}$, where $\mathbf{Z}_{i}=(Z_{i1}, \dots, Z_{in_{i}})$ and $\mathcal{Z}_{i}=\{\mathbf{Z}_{i}\in \{0,1\}^{n_{i}}: \sum_{j=1}^{n_{i}}Z_{ij}=m_{i}\}$ denotes the collection of all possible $\mathbf{Z}_{i}$. Then, we let $l_{i}=\min_{\mathbf{z}_{i} \in \mathcal{Z}_{i}} \text{pr}(\mathbf{Z}_{i}=\mathbf{z}_{i}\mid \mathcal{Z}_{i})$ and $\widetilde{M}_{i}=w_{i}M_{i}$. As $I \to \infty$, we have $\text{max}_{1\leq i \leq I}\widetilde{M}_{i}^2/\{\sum^{I}_{i=1}(l_{i})^{3}\widetilde{M}_{i}^2\} \to 0$. 
\end{condition}

\begin{condition}\label{condition: bounded size of matched sets}(Bounded Matched Sets and Bounded Entries of the Design Matrix $Q$): There exists some constant $C_{1}<\infty$ such that $n_i \leq C_1$ for all $i=1,\dots, I$, and $|q_{il}|\leq C_{1}$ for all $i=1,\dots, I, l=1,\dots, L$, where $q_{il}$ is the entry at the $i$-th row and $l$-th column of the design matrix $Q$.
\end{condition}

\begin{condition}\label{condition: bounded fourth moments}(Bounded Fourth Moments): There exists some constant $C_2<\infty$ such that for all $I$, we have $I^{-1}\sum_{i=1}^{I}M_{i}^{4} \leq C_2$, $I^{-1}\sum_{i=1}^{I}(\widehat{\lambda}_{*,i}^{+})^{4} \leq C_2$, and $I^{-1}\sum_{i=1}^{I}(\widehat{\lambda}_{*,i}^{-})^{4} \leq C_2$.
\end{condition}

\begin{condition}\label{condition: convergence} (Convergence of Finite-Population Means): For each matched set $i$, we define $\mu_{i}=E(\widehat{\lambda}_{*,i}\mid\mathcal{Z})$ and $\nu_{i}^2=\text{var}(\widehat{\lambda}_{*,i}\mid\mathcal{Z})$. As $I\rightarrow\infty$, we have: (i) $I^{-1}\sum_{i=1}^{I}w_{i}\mu_{i}$, $I^{-1}\sum_{i=1}^{I}w_{i}^2\mu_{i}$, and $I^{-1}\sum_{i=1}^{I}w_{i}^2\mu_{i}^2$ converge to some finite values; (ii) $I^{-1}\sum_{i=1}^{I}w_{i}^2\nu_{i}^2$ converges to some finite positive value; (iii) For $l=1,\dots,L$, the $I^{-1}\sum_{i=1}^{I}w_{i}\mu_{i}q_{il}$ converge to some finite values; (iv) $I^{-1}Q^{T}Q$ converges to some finite, invertible $L \times L$ matrix $\widetilde{Q}$.
\end{condition}

Condition~\ref{condition: no extreme pairs} states that, as the sample size goes to infinity, the contribution from any single matched set would not be comparable (proportional) to the combined contributions from all the matched sets. For example, this condition will naturally hold if both the size of each matched set and test statistics contributed by the matched sets are bounded, and all propensity scores $e_{ij}\in [\rho, 1-\rho]$ for some $\rho>0$. Conditions~\ref{condition: bounded size of matched sets}--\ref{condition: convergence} are also commonly considered in randomization-based inference for matched or stratified causal studies (\citealp{rosenbaum2002observational, fogarty2018mitigating}). Conditions~\ref{condition: no extreme pairs}--\ref{condition: convergence} are generalizations of some common regularity conditions from the exact matching to the potentially inexact matching case \citep{rosenbaum2002observational,fogarty2018mitigating}. Also, Conditions~\ref{condition: no extreme pairs}--\ref{condition: convergence} are weaker than Conditions 1--2 in the main text.

Next, recall the following exact form of the Lindeberg-Feller central limit theorem.
\begin{lemma}\label{lemma: lindeberg-feller}(Lindeberg-Feller Central Limit Theorem): Suppose we have a triangular array of random variables $X_{n,m}, 1\leq m\leq n$ with $E(X_{n,m})=0$ and $\sum^{n}_{m=1}E(X_{n,m}^2) \rightarrow \sigma^2 >0$ as $n \rightarrow \infty$. If the sequence of $X_{n,m}$ satisfies $\lim_{n \to \infty}\sum^{n}_{m=1}E[X_{n,m}^2 \mathbbm{1}\{|X_{n,m}|>c\}]=0$ for all $c>0$, we have $\sum^{n}_{m=1}X_{n,m}\xrightarrow{d} N(0,\sigma^2)$ as $n \rightarrow \infty$.
\end{lemma}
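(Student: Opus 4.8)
The plan is to prove convergence of the characteristic function of $\sum_{m=1}^{n} X_{n,m}$ to that of $N(0,\sigma^2)$ and then invoke Lévy's continuity theorem. Throughout I assume (as the Lindeberg--Feller framework requires, and as holds in our application where matched sets are independent) that the $X_{n,m}$ are independent within each row $n$, so that the characteristic function factors. Write $\sigma_{n,m}^2=E(X_{n,m}^2)$ and $\varphi_{n,m}(t)=E(e^{itX_{n,m}})$; the factorization gives $E\{\exp(it\sum_{m} X_{n,m})\}=\prod_{m=1}^{n}\varphi_{n,m}(t)$, so the goal becomes $\prod_{m=1}^{n}\varphi_{n,m}(t)\to e^{-\sigma^2t^2/2}$ for every fixed $t$.

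First I would record a consequence of the Lindeberg condition that is needed repeatedly: the variances are uniformly asymptotically negligible, $\max_{1\le m\le n}\sigma_{n,m}^2\to 0$. This follows from the bound $\sigma_{n,m}^2\le c^2+E\{X_{n,m}^2\mathbbm{1}(|X_{n,m}|>c)\}\le c^2+\sum_{k=1}^{n}E\{X_{n,k}^2\mathbbm{1}(|X_{n,k}|>c)\}$, valid for every $m$ and every $c>0$; taking the maximum over $m$, letting $n\to\infty$, and then $c\to 0$ forces the maximum to zero. Next I would Taylor-expand each factor to second order. Using $E(X_{n,m})=0$ and the elementary inequality $|e^{ix}-1-ix+x^2/2|\le\min(|x|^3/6,\,x^2)$, I get $|\varphi_{n,m}(t)-(1-\tfrac{t^2}{2}\sigma_{n,m}^2)|\le E\{\min(\tfrac{|t|^3|X_{n,m}|^3}{6},\,t^2X_{n,m}^2)\}$. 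Splitting the expectation at the Lindeberg cutoff $c$ and summing over $m$ yields $\sum_{m}|\varphi_{n,m}(t)-(1-\tfrac{t^2}{2}\sigma_{n,m}^2)|\le\tfrac{|t|^3c}{6}\sum_{m}\sigma_{n,m}^2+t^2\sum_{m}E\{X_{n,m}^2\mathbbm{1}(|X_{n,m}|>c)\}$.

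Finally I would pass from the sum of factor-wise errors to the product. Using the standard bound $|\prod_{m} a_m-\prod_{m} b_m|\le\sum_{m}|a_m-b_m|$ for complex numbers of modulus at most one, with $a_m=\varphi_{n,m}(t)$ and $b_m=e^{-\sigma_{n,m}^2t^2/2}$, I reduce to controlling $\sum_{m}|\varphi_{n,m}(t)-e^{-\sigma_{n,m}^2t^2/2}|$. The triangle inequality through the common quadratic $1-\tfrac{t^2}{2}\sigma_{n,m}^2$ splits this into the remainder just bounded and a term controlled via $|e^{-x}-(1-x)|\le x^2/2$ by $\tfrac{t^4}{8}\sum_{m}\sigma_{n,m}^4\le\tfrac{t^4}{8}(\max_{m}\sigma_{n,m}^2)\sum_{m}\sigma_{n,m}^2$, which vanishes by the negligibility step. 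Since $\prod_{m}e^{-\sigma_{n,m}^2t^2/2}=e^{-\frac{t^2}{2}\sum_{m}\sigma_{n,m}^2}\to e^{-\sigma^2t^2/2}$, I conclude $\prod_{m}\varphi_{n,m}(t)\to e^{-\sigma^2t^2/2}$ pointwise, and Lévy's continuity theorem gives $\sum_{m}X_{n,m}\xrightarrow{d}N(0,\sigma^2)$.

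The main obstacle is the remainder control in the Taylor step: one must set up the double limit correctly — holding the cutoff $c$ fixed, sending $n\to\infty$ so the Lindeberg term dies and the first sum tends to $\tfrac{|t|^3c}{6}\sigma^2$, and only afterwards letting $c\to 0$ to kill that leftover — and combine it with the separate uniform-negligibility argument, which is itself extracted from the Lindeberg condition. Everything else is bookkeeping with the elementary product inequality and the two scalar estimates above.
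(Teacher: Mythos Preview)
Your characteristic-function argument is correct and is the standard textbook route to Lindeberg--Feller: the uniform asymptotic negligibility of the row-wise variances, the second-order Taylor bound with the $\min(|x|^3/6,x^2)$ remainder, the telescoping product inequality, and the final $c\to 0$ after $n\to\infty$ are all set up properly. You were also right to flag that row-wise independence is an implicit hypothesis.

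The paper, however, does not prove this lemma at all. It is stated there as a recalled classical fact (``Recall the following exact form of the Lindeberg--Feller central limit theorem'') and then invoked in the proof of Lemma~S2. So there is no ``paper's proof'' to compare against; your contribution is simply a self-contained verification of a result the paper takes for granted.
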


Under Condition~\ref{condition: no extreme pairs}, we can prove the asymptotic normality of $\widehat{\lambda}_{*}$, of which the core idea of the proof is to adjust the proof strategy in \citet{Su2024} to the inexact matching case.

\begin{lemma}\label{lem: asymptotic normality}
Suppose Condition~\ref{condition: no extreme pairs} holds, and the treatment assignments are independent across matched sets. Under Assumption 1 in the main text, we have 
\begin{equation*}
    \frac{\widehat{\lambda}_{*}-\lambda}{\sqrt{\text{var}(\widehat{\lambda}_{*}\mid\mathcal{Z})}} \xrightarrow{d} N(0,1) \text{ as $I \rightarrow \infty$.}
\end{equation*}
\end{lemma}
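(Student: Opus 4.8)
The plan is to verify the Lindeberg--Feller conditions of Lemma~\ref{lemma: lindeberg-feller} for the triangular array obtained by centering and rescaling the per-matched-set contributions to $\widehat{\lambda}_{*}$. Specifically, write $\widehat{\lambda}_{*}-\lambda = \sum_{i=1}^{I} (n_i/N)(\widehat{\lambda}_{*,i}-\mu_i)$, where by Proposition~\ref{prop: unbiasdness} and its proof $\mu_i = E(\widehat{\lambda}_{*,i}\mid\mathcal{Z}) = n_i^{-1}\sum_{j=1}^{n_i}\{Y_{ij}(1)-Y_{ij}(0)\}$ so that $\sum_i (n_i/N)\mu_i = \lambda$. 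Then set $\sigma_I^2 = \mathrm{var}(\widehat{\lambda}_{*}\mid\mathcal{Z})$, which by independence across matched sets equals $\sum_{i=1}^I (n_i/N)^2 \mathrm{var}(\widehat{\lambda}_{*,i}\mid\mathcal{Z})$, and define $X_{I,i} = (n_i/N)(\widehat{\lambda}_{*,i}-\mu_i)/\sigma_I$. By construction $E(X_{I,i}\mid\mathcal{Z})=0$ and $\sum_{i=1}^I E(X_{I,i}^2\mid\mathcal{Z})=1$, so the variance-convergence hypothesis of Lemma~\ref{lemma: lindeberg-feller} holds trivially with $\sigma^2=1$.

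The substance is the Lindeberg condition: for every $c>0$, $\sum_{i=1}^I E[X_{I,i}^2\,\mathbbm{1}\{|X_{I,i}|>c\}\mid\mathcal{Z}] \to 0$. First I would obtain a deterministic bound on $|\widehat{\lambda}_{*,i}-\mu_i|$. Since $\widehat{\lambda}_{*,i}$ takes finitely many values as $\mathbf{Z}_i$ ranges over $\mathcal{Z}_i$, it lies in $[\widehat{\lambda}_{*,i}^{-},\widehat{\lambda}_{*,i}^{+}]$, and so does $\mu_i$ (being a convex combination of such values); hence $|\widehat{\lambda}_{*,i}-\mu_i|\le M_i = \widehat{\lambda}_{*,i}^{+}-\widehat{\lambda}_{*,i}^{-}$, giving $|X_{I,i}| \le w_i M_i/(I\sigma_I) = \widetilde{M}_i/(I\sigma_I)$ in the notation of Condition~\ref{condition: no extreme pairs} (using $n_i/N = w_i/I$). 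Next I would produce a lower bound on the per-set variance: since each realized value $\widehat{\lambda}_{*,i}$ has probability at least $l_i = \min_{\mathbf{z}_i}\mathrm{pr}(\mathbf{Z}_i=\mathbf{z}_i\mid\mathcal{Z}_i)$, a standard argument (the variance of a random variable hitting both its max and min each with probability $\ge l_i$) gives $\mathrm{var}(\widehat{\lambda}_{*,i}\mid\mathcal{Z}) \ge l_i^{?}M_i^2$ up to a constant --- more carefully, one gets $\mathrm{var}(\widehat{\lambda}_{*,i}\mid\mathcal{Z}) \ge c\, l_i\, M_i^2$ or, if a cruder bound suffices, $\ge l_i^2 M_i^2/4$ (by considering the two-point sub-distribution on the extremes, each reweighted), which will be enough; I would match the exponent to whatever makes the $(l_i)^3$ in Condition~\ref{condition: no extreme pairs} come out. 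Combining, $\max_i |X_{I,i}|^2 \le \max_i \widetilde{M}_i^2/(I^2\sigma_I^2)$ and $I^2\sigma_I^2 = \sum_i \widetilde{M}_i^2 \cdot [\mathrm{var}(\widehat{\lambda}_{*,i}\mid\mathcal{Z})/M_i^2] \gtrsim \sum_i (l_i)^{3}\widetilde{M}_i^2$ (after absorbing constants), so Condition~\ref{condition: no extreme pairs} gives $\max_i |X_{I,i}| \to 0$. The Lindeberg sum is then bounded by $\sum_i E[X_{I,i}^2\mathbbm{1}\{\max_k |X_{I,k}|>c\}\mid\mathcal{Z}]$, which is $0$ for all large $I$ since eventually $\max_k|X_{I,k}|\le c$; thus the Lindeberg condition holds. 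Lemma~\ref{lemma: lindeberg-feller} then yields $(\widehat{\lambda}_{*}-\lambda)/\sigma_I \xrightarrow{d} N(0,1)$, which is exactly the claim.

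The main obstacle is the variance lower bound $\mathrm{var}(\widehat{\lambda}_{*,i}\mid\mathcal{Z}) \gtrsim (l_i)^{a} M_i^2$ with the correct power $a$ and a clean constant: one must lower-bound the conditional variance of a finitely-supported random variable purely in terms of its range $M_i$ and the minimal atom probability $l_i$. The cleanest route is to note that if $\mathbf{z}_i^{+}$ and $\mathbf{z}_i^{-}$ attain the max and min of $\widehat{\lambda}_{*,i}$, then conditioning on $\{\mathbf{Z}_i\in\{\mathbf{z}_i^{+},\mathbf{z}_i^{-}\}\}$ (an event of probability $\ge 2 l_i$, on which the conditional distribution is a two-point law with masses $\ge l_i/(2l_i)\cdot$ bounded away from $0$) gives a conditional variance $\ge c M_i^2$, and then relate the unconditional variance to the conditional one via the law of total variance, picking up a further factor of $l_i$ per the mixing weight --- this is where the $(l_i)^3$ exponent in Condition~\ref{condition: no extreme pairs} is designed to land, so I would reverse-engineer the bookkeeping to match it exactly. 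Everything else --- the deterministic range bound, the reduction of the Lindeberg sum to a max, the final appeal to the CLT --- is routine. I would also remark that this is the inexact-matching analogue of the argument in \citet{Su2024}, the only change being that $p_{ij}\ne m_i/n_i$ so $l_i$ and $M_i$ must be carried through explicitly rather than being absorbed into design-specific constants.
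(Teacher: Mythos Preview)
Your approach is correct and matches the paper's: both verify Lindeberg--Feller by showing $\max_i X_{I,i}^2 \to 0$ via the deterministic range bound $|X_{I,i}|\le \widetilde M_i/(I\sigma_I)$ together with a lower bound on $\nu_i^2=\mathrm{var}(\widehat\lambda_{*,i}\mid\mathcal{Z})$ in terms of $l_i$ and $M_i$, then invoking Condition~\ref{condition: no extreme pairs}. For the one step you leave vague, the paper obtains $\nu_i^2 \ge 2(l_i)^3 M_i^2$ by dropping the non-extremal terms from the variance sum and using the mean-location inequalities $\mu_i-\widehat\lambda_{*,i}^{-} \ge p_i^{+} M_i$ and $\widehat\lambda_{*,i}^{+}-\mu_i \ge p_i^{-} M_i$ to get $\nu_i^2 \ge p_i^{+} p_i^{-}(p_i^{+}+p_i^{-})M_i^2$; your law-of-total-variance route (conditioning on whether $\mathbf{Z}_i$ hits an extremal configuration) would in fact yield the sharper $\nu_i^2 \ge p_i^{+}p_i^{-}/(p_i^{+}+p_i^{-})\,M_i^2 \ge l_i^{2} M_i^2$, which also suffices.
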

\begin{proof}
Recall that $\widehat{\lambda}_{*,i}$ is the contribution to the test statistic $\widehat{\lambda}_{*}$ from the matched set i. Also, recall that $\mu_{i}=E(\widehat{\lambda}_{*,i}\mid\mathcal{Z})$ denote the expectation of $\widehat{\lambda}_{*,i}$ and $\nu_{i}^2=\text{var}(\widehat{\lambda}_{*,i}\mid\mathcal{Z})$ denote the variance of $\widehat{\lambda}_{*,i}$. Let $\mu=\sum_{i=1}^{I}\frac{n_{i}}{N}\mu_{i}=E(\widehat{\lambda}_{*}\mid\mathcal{Z})$ and $\sigma^2=\sum_{i=1}^{I}(\frac{n_{i}}{N})^2\nu_{i}^2=\text{var}(\widehat{\lambda}_{*}\mid\mathcal{Z})$. Next, we define $Y_{i}=\frac{n_{i}(\widehat{\lambda}_{*,i}-\mu_{i})}{N\sigma}$. Since $\widehat{\lambda}_{*,i}^{-}\leq \mu_{i} \leq \widehat{\lambda}_{*,i}^{+}$, we have $|\widehat{\lambda}_{*,i}-\mu_{i}|\leq \widehat{\lambda}_{*,i}^{+}-\widehat{\lambda}_{*,i}^{-}=M_{i}$ for any $\mathbf{Z}_{i}=(Z_{i1}, \dots, Z_{in_{i}})\in \mathcal{Z}_{i}$, where $\mathcal{Z}_{i}=\{\mathbf{Z}_{i}\in \{0,1\}^{n_{i}}: \sum_{j=1}^{n_{i}}Z_{ij}=m_{i}\}$. Therefore, we have

\vspace{-1.2cm}
\begin{align*}
    \nu_{i}^2=\sum_{\mathbf{z}_{i} \in \mathcal{Z}_{i}}\big\{\text{pr}(\mathbf{Z}_{i}=\mathbf{z}_{i}\mid \mathcal{Z})\times (\widehat{\lambda}_{*,i}-\mu_{i})^2\big\} \geq p_{i}^{-}(\widehat{\lambda}_{*,i}^{-}-\mu_{i})^2+p_{i}^{+}(\widehat{\lambda}_{*,i}^{+}-\mu_{i})^2,
\end{align*}
\vspace{-1cm}
\\ where $p_{i}^{-}=\text{pr}(\widehat{\lambda}_{*,i}=\widehat{\lambda}_{*,i}^{-}\mid\mathcal{Z})$ and $p_{i}^{+}=\text{pr}(\widehat{\lambda}_{*,i}=\widehat{\lambda}_{*,i}^{+}\mid\mathcal{Z})$. Moreover, since 

\vspace{-1.2cm}
\begin{align*}
    p_{i}^{+}\widehat{\lambda}_{*,i}^{+}+(1-p_{i}^{+})\widehat{\lambda}_{*,i}^{-}\leq \mu_{i} \leq p_{i}^{-}\widehat{\lambda}_{*,i}^{-}+(1-p_{i}^{-})\widehat{\lambda}_{*,i}^{+},
\end{align*}
\vspace{-1.6cm}
\\ we have

\vspace{-0.8cm}
\begin{equation*}
    \mu_{i}-\widehat{\lambda}_{*,i}^{-} \geq p_{i}^{+}(\widehat{\lambda}_{*,i}^{+}-\widehat{\lambda}_{*,i}^{-})=p_{i}^{+}M_{i},\quad \widehat{\lambda}_{*,i}^{+}-\mu_{i} \geq p_{i}^{-}(\widehat{\lambda}_{*,i}^{+}-\widehat{\lambda}_{*,i}^{-})=p_{i}^{-}M_{i}.
\end{equation*}
\vspace{-1.2cm}
\\ Then, we can get

\vspace{-1.3cm}
\begin{align*}
    \nu_{i}^2 &\geq p_{i}^{-}(\widehat{\lambda}_{*,i}^{-}-\mu_{i})^2+p_{i}^{+}(\widehat{\lambda}_{*,i}^{+}-\mu_{i})^2 \geq p_{i}^{-}(p_{i}^{+})^2M_{i}^2+p_{i}^{+}(p_{i}^{-})^2M_{i}^2 =p_{i}^{-}p_{i}^{+}(p_{i}^{-}+p_{i}^{+})M_{i}^2.
\end{align*}
\vspace{-1.2cm}
\\ Since $l_{i}=\min_{\mathbf{z}_{i} \in \mathcal{Z}_{i}} \text{pr}(\mathbf{Z}_{i}=\mathbf{z}_{i}\mid \mathcal{Z}_{i})$ is the lower bound of all possible treatment assignment probabilities in each matched set $i$, we have $p_{i}^{+}, p_{i}^{-} \geq l_{i}$ and $\nu_{i}^2 \geq 2(l_{i})^3M_{i}^2$. Therefore, we have

\vspace{-1.3cm}
\begin{align*}
    Y_{i}^2=\frac{n_{i}^2(\widehat{\lambda}_{*,i}-\mu_{i})^2}{N^2\sigma^2}=\frac{(\frac{n_{i}}{N})^2(\widehat{\lambda}_{*,i}-\mu_{i})^2}{\sum_{i=1}^{I}(\frac{n_{i}}{N})^2\nu_{i}^2} \leq  \frac{\widetilde{M}_{i}^2}{\sum_{i=1}^{I}2(l_{i})^3\widetilde{M}_{i}^2}.
\end{align*}
\vspace{-1cm}
\\ Therefore, under Condition~\ref{condition: no extreme pairs}, we can conclude that $\text{max}_{1 \leq i \leq I}Y_{i}^2 \rightarrow 0$ as $I \rightarrow \infty$. This implies that, for any $c>0$, we have $\lim_{I \to \infty}\sum_{i=1}^{I}E[Y_{i}^2 \mathbbm{1} \{|Y_{i}^2|>c^{2}\}\mid\mathcal{Z}]=\lim_{I \to \infty}\sum_{i=1}^{I}\\ E[Y_{i}^2 \mathbbm{1} \{|Y_{i}|>c\}\mid\mathcal{Z}] \rightarrow 0$, i.e., the Lindeberg-Feller condition holds. Since we have $E(Y_{i}\mid\mathcal{Z})=0$ for all $i=1,\dots,I$ and for all $I$, and $\sum_{i=1}^{I}E(Y_{i}^2\mid\mathcal{Z})=1$ for all $I$, the desired result is obtained by invoking Lemma~\ref{lemma: lindeberg-feller} and Proposition 1. 
\end{proof}

Then, we can prove some important lemmas for proving Theorem 1. The core idea of these proofs is to extend the arguments in \citet{fogarty2018mitigating} from the perfect randomization case to the biased randomization (inexact matching) case.

\begin{lemma}\label{lem: converge in p}
    Under Conditions \ref{condition: bounded size of matched sets}--\ref{condition: convergence}, Assumption 1 in the main text, and independence of treatment assignments across matched sets, as $I \rightarrow \infty$, we have

    \vspace{-1cm}
    \begin{align*}
        I^{-1}\sum_{i=1}^{I}w_{i}\widehat{\lambda}_{*,i} &\xrightarrow{p} \lim\limits_{I \rightarrow \infty}I^{-1}\sum_{i=1}^{I}w_{i}\mu_{i}, \\ I^{-1}\sum_{i=1}^{I}w_{i}^2\widehat{\lambda}_{*,i} &\xrightarrow{p} \lim\limits_{I \rightarrow \infty}I^{-1}\sum_{i=1}^{I}w_{i}^2\mu_{i}, \\ I^{-1}\sum_{i=1}^{I}w_{i}\widehat{\lambda}_{*,i}q_{il} &\xrightarrow{p} \lim\limits_{I \rightarrow \infty}I^{-1}\sum_{i=1}^{I}w_{i}\mu_{i}q_{il}, \\    I^{-1}\sum_{i=1}^{I}w_{i}^2\widehat{\lambda}_{*,i}^2 &\xrightarrow{p} \lim\limits_{I \rightarrow \infty}I^{-1}\sum_{i=1}^{I}w_{i}^2(\nu_{i}^2+\mu_{i}^2).
    \end{align*}
\end{lemma}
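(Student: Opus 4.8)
The plan is to prove each of the four convergences in Lemma~\ref{lem: converge in p} by the same two-step template: first show that the difference between the random average and its expectation has variance tending to zero, so that a Chebyshev/$L^2$ argument gives convergence in probability to the (random-assignment) expectation; then invoke the relevant part of Condition~\ref{condition: convergence} to identify that expectation's limit. Concretely, for a generic weighted sum $T_I = I^{-1}\sum_{i=1}^{I} a_i \widehat{\lambda}_{*,i}^{r}$ (with $r \in \{1,2\}$ and $a_i$ a bounded function of $w_i$ and $q_{il}$), I would write $T_I - E(T_I \mid \mathcal{Z}) = I^{-1}\sum_{i=1}^{I} a_i\{\widehat{\lambda}_{*,i}^{r} - E(\widehat{\lambda}_{*,i}^{r}\mid\mathcal{Z})\}$ and compute its conditional variance. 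By independence across matched sets this variance is $I^{-2}\sum_{i=1}^{I} a_i^2 \,\mathrm{var}(\widehat{\lambda}_{*,i}^{r}\mid\mathcal{Z})$.

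The key estimates then come from Conditions~\ref{condition: bounded size of matched sets} and \ref{condition: bounded fourth moments}. Condition~\ref{condition: bounded size of matched sets} gives $|a_i| \le C$ uniformly (since $w_i = In_i/N \le C_1$ by $n_i \le C_1$, and $|q_{il}| \le C_1$), so $a_i^2$ is bounded. For $r=1$, $\mathrm{var}(\widehat{\lambda}_{*,i}\mid\mathcal{Z}) = \nu_i^2 \le M_i^2$ (since $|\widehat{\lambda}_{*,i} - \mu_i| \le M_i$ pointwise, as already noted in the proof of Lemma~\ref{lem: asymptotic normality}), and Condition~\ref{condition: bounded fourth moments} gives $I^{-1}\sum_i M_i^2 \le (I^{-1}\sum_i M_i^4)^{1/2} \le C_2^{1/2}$ by Jensen, hence the variance of $T_I - E(T_I\mid\mathcal{Z})$ is $O(I^{-1}) \to 0$. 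For $r=2$ one needs $\mathrm{var}(\widehat{\lambda}_{*,i}^2\mid\mathcal{Z}) \le E(\widehat{\lambda}_{*,i}^4\mid\mathcal{Z}) \le \max\{(\widehat{\lambda}_{*,i}^+)^4, (\widehat{\lambda}_{*,i}^-)^4\} \le (\widehat{\lambda}_{*,i}^+)^4 + (\widehat{\lambda}_{*,i}^-)^4$, and again Condition~\ref{condition: bounded fourth moments} bounds $I^{-1}\sum_i$ of this, so the variance is $O(I^{-1})$. Thus in all four cases $T_I - E(T_I\mid\mathcal{Z}) \xrightarrow{p} 0$.

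It remains to match $E(T_I\mid\mathcal{Z})$ with the stated limits. Since $E(\widehat{\lambda}_{*,i}\mid\mathcal{Z}) = \mu_i$ and $E(\widehat{\lambda}_{*,i}^2\mid\mathcal{Z}) = \nu_i^2 + \mu_i^2$, we get $E(I^{-1}\sum_i w_i \widehat{\lambda}_{*,i}\mid\mathcal{Z}) = I^{-1}\sum_i w_i \mu_i$, $E(I^{-1}\sum_i w_i^2 \widehat{\lambda}_{*,i}\mid\mathcal{Z}) = I^{-1}\sum_i w_i^2 \mu_i$, $E(I^{-1}\sum_i w_i \widehat{\lambda}_{*,i} q_{il}\mid\mathcal{Z}) = I^{-1}\sum_i w_i \mu_i q_{il}$, and $E(I^{-1}\sum_i w_i^2 \widehat{\lambda}_{*,i}^2\mid\mathcal{Z}) = I^{-1}\sum_i w_i^2(\nu_i^2 + \mu_i^2)$. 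Parts (i), (iii) of Condition~\ref{condition: convergence} guarantee that the first three of these deterministic sequences converge to finite limits, and parts (i), (ii) guarantee the fourth converges; combining with the $L^2$-consistency above and Slutsky yields the four claimed probability limits.

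The main obstacle is purely bookkeeping: making sure each $a_i$ is genuinely uniformly bounded (which requires invoking $n_i \le C_1$ to bound $w_i$, and $|q_{il}| \le C_1$, both from Condition~\ref{condition: bounded size of matched sets}) and that the correct fourth-moment bound from Condition~\ref{condition: bounded fourth moments} is applied in the $r=2$ case — in particular using $\nu_i^2 \le M_i^2$ for the linear terms but the $(\widehat{\lambda}_{*,i}^{\pm})^4$ bounds for the quadratic term. Beyond that, the only subtlety is that the limits on the right-hand side are of deterministic (randomization-expectation) sequences, so no further argument is needed once $L^2$-convergence to the expectation is established; there is no need for a uniform law of large numbers or any dependence structure beyond the assumed independence across matched sets.
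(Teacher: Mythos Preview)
Your proposal is correct and follows essentially the same argument as the paper's proof: compute the conditional expectations, bound each conditional variance by $O(I^{-1})$ using independence across matched sets together with Conditions~\ref{condition: bounded size of matched sets} and~\ref{condition: bounded fourth moments} (via $\nu_i^2\le M_i^2$ for the linear terms and $E(\widehat{\lambda}_{*,i}^{4}\mid\mathcal{Z})\le (\widehat{\lambda}_{*,i}^{+})^{4}+(\widehat{\lambda}_{*,i}^{-})^{4}$ for the quadratic term), apply Chebyshev, and then invoke Condition~\ref{condition: convergence} to identify the limits. The only cosmetic difference is that the paper uses Cauchy--Schwarz where you use Jensen to pass from second to fourth moments of $M_i$, which amounts to the same bound.
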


\begin{proof}
    Note that

\vspace{-1.3cm}
\begin{align*}
    E\Big(I^{-1}\sum_{i=1}^{I}w_{i}\widehat{\lambda}_{*,i}\mid\mathcal{Z}\Big)&=I^{-1}\sum_{i=1}^{I}w_{i}\mu_{i}, \\ E\Big(I^{-1}\sum_{i=1}^{I}w_{i}^2\widehat{\lambda}_{*,i}\mid\mathcal{Z}\Big)&=I^{-1}\sum_{i=1}^{I}w_{i}^2\mu_{i}, \\ E\Big(I^{-1}\sum_{i=1}^{I}w_{i}\widehat{\lambda}_{*,i}q_{il}\mid\mathcal{Z}\Big)&=I^{-1}\sum_{i=1}^{I}w_{i}\mu_{i}q_{il}, \\  E\Big(I^{-1}\sum_{i=1}^{I}w_{i}^2\widehat{\lambda}_{*,i}^2\mid\mathcal{Z}\Big)&=I^{-1}\sum_{i=1}^{I}w_{i}^2(\nu_{i}^2+\mu_{i}^2).
\end{align*}
 Next, we can prove that all the variances of the left-hand side terms in Lemma~\ref{lem: converge in p} converge to zero. For the first term, we have

\vspace{-1cm}
\begin{align*}
    \text{var}\Big( I^{-1}\sum_{i=1}^{I}w_{i}\widehat{\lambda}_{*,i}\mid\mathcal{Z}\Big) &=I^{-2}\sum_{i=1}^{I}w_{i}^2\nu_{i}^2 \\ 
    &\leq I^{-2}\sum_{i=1}^{I}w_{i}^2M_{i}^2 \\  
    &\leq I^{-2}\Big(\sum_{i=1}^{I}w_{i}^4\Big)^{1/2}\Big(\sum_{i=1}^{I}M_{i}^4\Big)^{1/2} \\
    &=I^{-1}\Big(I^{-1}\sum_{i=1}^{I}w_{i}^4\Big)^{1/2}\Big(I^{-1}\sum_{i=1}^{I}M_{i}^4\Big)^{1/2} \\
    &\leq C_{1}^2C_{2}^{1/2}/I \rightarrow 0 \ \text{as} \ I \rightarrow \infty.
\end{align*}
\vspace{-1.5cm}
\\ For the second term, we have

\vspace{-1.3cm}
\begin{align*}
    \text{var}\Big(I^{-1}\sum_{i=1}^{I}w_{i}^2\widehat{\lambda}_{*,i}\mid \mathcal{Z}\Big)&=I^{-2}\sum_{i=1}^{I}w_{i}^4\nu_{i}^2 \\ 
    &\leq I^{-2}\sum_{i=1}^{I}w_{i}^4M_{i}^2 \\  
    &\leq I^{-2}\Big(\sum_{i=1}^{I}w_{i}^8\Big)^{1/2}\Big(\sum_{i=1}^{I}M_{i}^4\Big)^{1/2} \\
    &=I^{-1}\Big(I^{-1}\sum_{i=1}^{I}w_{i}^8\Big)^{1/2}\Big(I^{-1}\sum_{i=1}^{I}M_{i}^4\Big)^{1/2} \\
    &\leq C_{1}^4C_{2}^{1/2}/I \rightarrow 0 \ \text{as} \ I \rightarrow \infty.
\end{align*}

\vspace{-0.5cm} 
\noindent For the third term, we have

\vspace{-1.3cm}
\begin{align*}
    \text{var}\Big(I^{-1}\sum_{i=1}^{I}w_{i}\widehat{\lambda}_{*,i}q_{il}\mid \mathcal{Z}\Big)&=I^{-2}\sum_{i=1}^{I}w_{i}^2\nu_{i}^2q_{il}^2 \\ 
    &\leq I^{-2}\sum_{i=1}^{I}w_{i}^2M_{i}^2q_{il}^2 \\  
    &\leq I^{-2}\Big(\sum_{i=1}^{I}w_{i}^4M_{i}^4\Big)^{1/2}\Big(\sum_{i=1}^{I}q_{il}^4\Big)^{1/2} \\
    &=I^{-1}\Big(I^{-1}\sum_{i=1}^{I}w_{i}^4M_{i}^4\Big)^{1/2}\Big(I^{-1}\sum_{i=1}^{I}q_{il}^4\Big)^{1/2} \\
    &\leq C_{1}^{4}C_{2}^{1/2}/I \rightarrow 0 \ \text{as} \ I \rightarrow \infty.
\end{align*}
For the fourth term, we have 

\vspace{-1.3cm}
\begin{align*}
    \text{var}\Big(I^{-1}\sum_{i=1}^{I}w_{i}^2\widehat{\lambda}_{*,i}^2\mid \mathcal{Z}\Big)&=I^{-2}\sum_{i=1}^{I}w_{i}^{4}\text{var}(\widehat{\lambda}_{*,i}^2\mid\mathcal{Z}) \\ 
    &\leq I^{-2}\sum_{i=1}^{I}w_{i}^4E(\widehat{\lambda}_{*,i}^4\mid\mathcal{Z}) \\
    &\leq I^{-2}\sum_{i=1}^{I}w_{i}^4\{(\widehat{\lambda}_{*,i}^{-})^4+(\widehat{\lambda}_{*,i}^{+})^4\} \\ 
    &\leq2C_{1}^4C_{2}/I \rightarrow 0 \ \text{as} \ I \rightarrow \infty.
\end{align*}
\vspace{-1.3cm}
\\ Invoking Chebyshev's inequality, the desired convergence results stated in Lemma~\ref{lem: converge in p} can be proved based on the above results. Let us take the first term as an example. Define the random variable $L_I=I^{-1}\sum_{i=1}^{I}w_{i}\widehat{\lambda}_{*,i}-E\Big(I^{-1}\sum_{i=1}^{I}w_{i}\widehat{\lambda}_{*,i}\mid\mathcal{Z}\Big)$. Then, for all $\epsilon>0$, we have 

\vspace{-1.5cm}
\begin{align*}
    \text{pr}\big\{|L_I-E(L_I\mid \mathcal{Z})|\geq \epsilon\mid \mathcal{Z}\big\}\leq \frac{\text{var}(L_I\mid \mathcal{Z})}{\epsilon^2}.
\end{align*}
\vspace{-1.3cm}

\noindent Since $E(L_I\mid \mathcal{Z})=0$, and $\text{var}(L_I\mid \mathcal{Z})=\text{var}\Big( I^{-1}\sum_{i=1}^{I}w_{i}\widehat{\lambda}_{*,i}\mid\mathcal{Z}\Big) \rightarrow 0$ as $I \rightarrow \infty$, we can conclude that $I^{-1}\sum_{i=1}^{I}w_{i}\widehat{\lambda}_{*,i}-E\Big(I^{-1}\sum_{i=1}^{I}w_{i}\widehat{\lambda}_{*,i}\mid\mathcal{Z}\Big)=I^{-1}\sum_{i=1}^{I}w_{i}\widehat{\lambda}_{*,i}-I^{-1}\sum_{i=1}^{I}w_{i}\mu_{i}\xrightarrow{p}0$ as $I \rightarrow \infty$. Therefore, by Condition \ref{condition: convergence}, we have $I^{-1}\sum_{i=1}^{I}w_{i}\widehat{\lambda}_{*,i} \xrightarrow{p} \lim_{I \rightarrow \infty}I^{-1}\sum_{i=1}^{I}w_{i}\mu_{i}$. Similar arguments can be applied to other convergence results in Lemma~\ref{lem: converge in p}.
\end{proof}
\vspace{-0.5cm}
\begin{lemma}\label{lem: h_Q goes to zero}
   Under Conditions \ref{condition: bounded size of matched sets} and \ref{condition: convergence}, there exists some $I^{\prime}<\infty$ and $C^{\prime}<\infty$ such that for all $I\geq I^{\prime}$ and all $i=1,\dots,I$, we have $|h_{Qii}|\leq C^{\prime}/I$.
\end{lemma}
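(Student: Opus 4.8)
The plan is to write the diagonal entry as a quadratic form and bound its two ingredients separately: the Euclidean norm of the corresponding row of $Q$, which is controlled by Condition~2, and the operator norm of $(Q^TQ)^{-1}$, which is controlled (after rescaling by $I$) by Condition~4(iv). First I would recall that $H_Q=Q(Q^TQ)^{-1}Q^T$ is the orthogonal projection onto the column space of $Q$, hence symmetric and positive semidefinite, so every diagonal entry satisfies $h_{Qii}\geq 0$; in particular $|h_{Qii}|=h_{Qii}$ and it suffices to bound $h_{Qii}$ from above. Writing $e_i$ for the $i$-th standard basis vector and $q_i=(q_{i1},\dots,q_{iL})^T=Q^Te_i$ for the $i$-th row of $Q$ viewed as a column vector, we get the identity $h_{Qii}=e_i^TH_Qe_i=q_i^T(Q^TQ)^{-1}q_i$.

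Next I would factor out the rate. Since $(Q^TQ)^{-1}=I^{-1}(I^{-1}Q^TQ)^{-1}$, we have $h_{Qii}=I^{-1}\,q_i^T(I^{-1}Q^TQ)^{-1}q_i\leq I^{-1}\,\|(I^{-1}Q^TQ)^{-1}\|_{\mathrm{op}}\,\|q_i\|_2^2$. By Condition~2, $|q_{il}|\leq C_1$ for all $i,l$, so $\|q_i\|_2^2=\sum_{l=1}^{L}q_{il}^2\leq LC_1^2$, a constant not depending on $I$ (recall $L$ is fixed). By Condition~4(iv), $I^{-1}Q^TQ\to\widetilde{Q}$ for a finite, invertible $L\times L$ matrix $\widetilde{Q}$; since matrix inversion is continuous on the open set of invertible matrices, there exists $I'<\infty$ such that $I^{-1}Q^TQ$ is invertible and $\|(I^{-1}Q^TQ)^{-1}\|_{\mathrm{op}}\leq 2\|\widetilde{Q}^{-1}\|_{\mathrm{op}}+1=:B$ for all $I\geq I'$. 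Combining these bounds, for all $I\geq I'$ and all $i=1,\dots,I$ we obtain $h_{Qii}\leq BLC_1^2/I$, which gives the claim with $C'=BLC_1^2$.

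The only point requiring care — not really an obstacle — is the passage from $I^{-1}Q^TQ\to\widetilde{Q}$ to a uniform bound on $\|(I^{-1}Q^TQ)^{-1}\|_{\mathrm{op}}$: this uses invertibility of the limit $\widetilde{Q}$ (given in Condition~4(iv)) together with continuity of inversion, and it is precisely why the conclusion is stated for $I$ large enough, since for small $I$ the Gram matrix $Q^TQ$ need not even be invertible. Everything else is the two-line estimate above once the quadratic-form identity $h_{Qii}=q_i^T(Q^TQ)^{-1}q_i$ is in place.
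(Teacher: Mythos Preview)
Your proof is correct and follows essentially the same approach as the paper: write $h_{Qii}$ as the quadratic form $I^{-1}q_i^T(I^{-1}Q^TQ)^{-1}q_i$, then invoke Condition~4(iv) for the convergence of $(I^{-1}Q^TQ)^{-1}$ and Condition~2 for the uniform bound on the rows of $Q$. Your write-up is in fact more explicit than the paper's, which simply asserts that boundedness of the entries of $Q$ together with the limit ``immediately implies'' the result; your use of the operator-norm bound and continuity of inversion makes that step precise.
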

\begin{proof}
    Let $Q_{i\cdot}$ denote the $i$-th row of matrix $Q$. Note that

    \vspace{-0.8cm}
    $$h_{Qii}=Q_{i\cdot}(Q^{T}Q)^{-1}Q_{i\cdot}^{T}=I^{-1}Q_{i\cdot}(I^{-1}Q^{T}Q)^{-1}Q_{i\cdot}^{T}.$$ 
    \vspace{-1.2cm}
    
    \noindent Hence, by Condition \ref{condition: convergence}, for any $i=1,\dots, I$, we have

    \vspace{-1cm}
    $$\lim\limits_{I \rightarrow \infty}h_{Qii}=\lim\limits_{I \rightarrow \infty}I^{-1}Q_{i\cdot}(I^{-1}Q^{T}Q)^{-1}Q_{i\cdot}^{T}=\lim\limits_{I \rightarrow \infty}I^{-1}Q_{i\cdot}(\widetilde{Q})^{-1}Q_{i\cdot}^{T}.$$
\vspace{-0.9cm}
    
    \noindent Since all the entries of $Q$ are uniformly bounded by some constant (according to Condition 2), the above equation can immediately imply the desired result. 
\end{proof}

\begin{lemma}\label{lem: valid variance}
    Under Conditions \ref{condition: bounded size of matched sets} and \ref{condition: convergence}, Assumption 1 in the main text, and independence of treatment assignments across matched sets, as $I \rightarrow \infty$, we have
    \begin{align*}
        \frac{\text{var}(\widehat{\lambda}_{*}\mid\mathcal{Z})}{S_{*}^2(Q)}\xrightarrow{p}1-\frac{\lim\limits_{I \rightarrow \infty}I^{-1}\boldsymbol{\mu} W(\mathcal{I}-H_{Q})W\boldsymbol{\mu}^{T}}{\lim\limits_{I \rightarrow \infty}I^{-1}\boldsymbol{\mu} W(\mathcal{I}-H_{Q})W\boldsymbol{\mu}^{T}+\lim\limits_{I\rightarrow\infty}I^{-1}\sum_{i=1}^{I}w_i^2\nu_{i}^2} \in (0,1],
    \end{align*}
    where $\boldsymbol{\mu}=(\mu_{1},\dots, \mu_{I})$.
\end{lemma}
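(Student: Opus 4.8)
The plan is to compute the limiting ratio $\text{var}(\widehat{\lambda}_{*}\mid\mathcal{Z})/S_{*}^{2}(Q)$ by establishing limits (in probability) for the numerator and denominator separately and then invoking the continuous mapping theorem. First I would handle the numerator: by independence across matched sets, $\text{var}(\widehat{\lambda}_{*}\mid\mathcal{Z})=\sum_{i=1}^{I}(n_{i}/N)^{2}\nu_{i}^{2}=I^{-2}\sum_{i=1}^{I}w_{i}^{2}\nu_{i}^{2}$, so that $I\cdot\text{var}(\widehat{\lambda}_{*}\mid\mathcal{Z})=I^{-1}\sum_{i=1}^{I}w_{i}^{2}\nu_{i}^{2}$, which converges to a finite positive value by Condition~4(ii). (This is a deterministic quantity, so the convergence is trivial once the condition is invoked.) Next I would handle the denominator. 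Writing $S_{*}^{2}(Q)=I^{-2}\mathbf{y}W(\mathcal{I}-H_{Q})W\mathbf{y}^{T}$ with $y_{i}=\widehat{\lambda}_{*,i}/\sqrt{1-h_{Qii}}$, I would multiply by $I$ and expand the quadratic form, aiming to show
\begin{equation*}
 I\cdot S_{*}^{2}(Q)\xrightarrow{p}\lim_{I\to\infty}I^{-1}\boldsymbol{\mu}W(\mathcal{I}-H_{Q})W\boldsymbol{\mu}^{T}+\lim_{I\to\infty}I^{-1}\sum_{i=1}^{I}w_{i}^{2}\nu_{i}^{2}.
\end{equation*}
Taking the ratio of the numerator limit to the denominator limit then yields exactly the claimed expression, and the fact that it lies in $(0,1]$ follows because $\mathcal{I}-H_{Q}$ is positive semidefinite (so the extra term in the denominator is nonnegative) together with positivity from Condition~4(ii).

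The substantive work is proving the denominator limit. I would first argue that replacing $y_{i}$ by $\widehat{\lambda}_{*,i}$ (i.e. dropping the $\sqrt{1-h_{Qii}}$ factors) is asymptotically negligible: by Lemma~\ref{lem: h_Q goes to zero} each $h_{Qii}=O(1/I)$ uniformly, so $1/(1-h_{Qii})=1+O(1/I)$ uniformly, and combined with the moment bounds in Condition~3 (via Lemma~\ref{lem: converge in p}, which controls $I^{-1}\sum w_{i}^{2}\widehat{\lambda}_{*,i}^{2}$ and the cross terms) the difference between $I^{-1}\mathbf{y}W(\mathcal{I}-H_{Q})W\mathbf{y}^{T}$ and $I^{-1}\widehat{\boldsymbol{\lambda}}_{*}W(\mathcal{I}-H_{Q})W\widehat{\boldsymbol{\lambda}}_{*}^{T}$ is $o_{p}(1)$, where $\widehat{\boldsymbol{\lambda}}_{*}=(\widehat{\lambda}_{*,1},\dots,\widehat{\lambda}_{*,I})$. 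Then I would expand
\begin{equation*}
 I^{-1}\widehat{\boldsymbol{\lambda}}_{*}W(\mathcal{I}-H_{Q})W\widehat{\boldsymbol{\lambda}}_{*}^{T}=I^{-1}\sum_{i=1}^{I}w_{i}^{2}\widehat{\lambda}_{*,i}^{2}-I^{-1}(W\widehat{\boldsymbol{\lambda}}_{*})^{T}H_{Q}(W\widehat{\boldsymbol{\lambda}}_{*}),
\end{equation*}
and treat the two pieces in turn. For the first piece, Lemma~\ref{lem: converge in p} gives $I^{-1}\sum_{i=1}^{I}w_{i}^{2}\widehat{\lambda}_{*,i}^{2}\xrightarrow{p}\lim I^{-1}\sum_{i=1}^{I}w_{i}^{2}(\nu_{i}^{2}+\mu_{i}^{2})$. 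For the second (the $H_{Q}$ term), I would write $H_{Q}=Q(Q^{T}Q)^{-1}Q^{T}$ so that $I^{-1}(W\widehat{\boldsymbol{\lambda}}_{*})^{T}H_{Q}(W\widehat{\boldsymbol{\lambda}}_{*})=\big(I^{-1}Q^{T}W\widehat{\boldsymbol{\lambda}}_{*}\big)^{T}(I^{-1}Q^{T}Q)^{-1}\big(I^{-1}Q^{T}W\widehat{\boldsymbol{\lambda}}_{*}\big)$; each coordinate of $I^{-1}Q^{T}W\widehat{\boldsymbol{\lambda}}_{*}$ is $I^{-1}\sum_{i=1}^{I}w_{i}\widehat{\lambda}_{*,i}q_{il}$, which converges in probability to $\lim I^{-1}\sum_{i=1}^{I}w_{i}\mu_{i}q_{il}$ by Lemma~\ref{lem: converge in p}, and $I^{-1}Q^{T}Q\to\widetilde{Q}$ invertible by Condition~4(iv), so by the continuous mapping theorem this piece converges in probability to $\lim I^{-1}(W\boldsymbol{\mu})^{T}H_{Q}(W\boldsymbol{\mu})$. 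Combining, $I^{-1}\widehat{\boldsymbol{\lambda}}_{*}W(\mathcal{I}-H_{Q})W\widehat{\boldsymbol{\lambda}}_{*}^{T}\xrightarrow{p}\lim I^{-1}\sum_{i=1}^{I}w_{i}^{2}\nu_{i}^{2}+\lim\big\{I^{-1}\sum_{i=1}^{I}w_{i}^{2}\mu_{i}^{2}-I^{-1}(W\boldsymbol{\mu})^{T}H_{Q}(W\boldsymbol{\mu})\big\}$, and the bracketed term is precisely $\lim I^{-1}\boldsymbol{\mu}W(\mathcal{I}-H_{Q})W\boldsymbol{\mu}^{T}$.

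Finally, with $I\cdot\text{var}(\widehat{\lambda}_{*}\mid\mathcal{Z})$ deterministic with a finite positive limit and $I\cdot S_{*}^{2}(Q)$ converging in probability to a finite positive limit (positive because it equals the sum of $\lim I^{-1}\sum w_{i}^{2}\nu_{i}^{2}>0$ and a nonnegative term), Slutsky's theorem delivers the claimed ratio convergence, and the interval membership $(0,1]$ is immediate from the decomposition. The main obstacle I anticipate is the bookkeeping in the negligibility step for the $\sqrt{1-h_{Qii}}$ weights and in verifying that each ``sum'' appearing after the expansion is covered by one of the four convergence statements in Lemma~\ref{lem: converge in p}; I would need to be careful that the cross-term $I^{-1}\sum_{i}w_{i}^{2}\widehat{\lambda}_{*,i}\mu_{i}$-type quantities and the $q_{il}$-weighted sums are all handled, using Cauchy--Schwarz together with the fourth-moment bounds in Condition~3 wherever a sum is not literally one of the four listed. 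One should also confirm that the two displayed limits in the statement are well-defined, which again follows from Conditions 3--4 and Lemma~\ref{lem: h_Q goes to zero}.
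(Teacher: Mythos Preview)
Your proposal is correct and follows essentially the same route as the paper: decompose $I\cdot S_{*}^{2}(Q)$ into the diagonal piece $I^{-1}\sum_{i}w_{i}^{2}\widehat{\lambda}_{*,i}^{2}/(1-h_{Qii})$ and the $H_{Q}$ piece, use Lemma~\ref{lem: h_Q goes to zero} to strip the $1/(1-h_{Qii})$ factors, apply the four convergence statements in Lemma~\ref{lem: converge in p} together with $I^{-1}Q^{T}Q\to\widetilde{Q}$, and finish with Slutsky. The only cosmetic difference is that the paper removes the $\sqrt{1-h_{Qii}}$ weights separately for the diagonal and $H_{Q}$ components (bounding each with Cauchy--Schwarz and the fourth-moment constants $C_{1},C_{2},C'$) rather than globally as you propose; you will likely find that component-wise ordering cleaner, since the off-diagonal entries of $H_{Q}$ are not individually $O(1/I)$ and the negligibility for the $H_{Q}$ piece really goes through the $I^{-1}Q^{T}W\mathbf{y}$ representation you already wrote down.
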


\begin{proof}
We decompose $IS_{*}^2(Q)$ into two components: the first component is $I^{-1}\mathbf{y}WW\mathbf{y}^{T}$ and the second one is $-I^{-1}\mathbf{y}WQ(Q^{T}Q)^{-1}Q^{T}W\mathbf{y}^{T}$. By Lemma~\ref{lem: h_Q goes to zero}, there exist some $I^{\prime}<\infty$ and $C^{\prime}<\infty$ such that for all $I\geq I^{\prime}$ and all $i=1,\dots,I$, we have $|h_{Qii}|\leq C^{\prime}/I$ and $|1-h_{Qii}|\geq 1/2$. Therefore, for any $I\geq I^{\prime}$, we have

\vspace{-1.2cm}
\begin{align*}
       \Big(I^{-1}\mathbf{y}WW\mathbf{y}^{T}-I^{-1}\sum_{i=1}^{I}w_{i}^2\widehat{\lambda}_{*,i}^2\Big)^{2}&= \Big(I^{-1}\sum_{i=1}^{I}w_{i}^2\widehat{\lambda}_{*,i}^2\frac{h_{Qii}}{1-h_{Qii}}\Big)^{2}\\
    &\leq I^{-1}\Big(I^{-1}\sum_{i=1}^{I}w_{i}^4\widehat{\lambda}_{*,i}^4 \Big)\Big(\sum_{i=1}^{I} \frac{h^{2}_{Qii}}{(1-h_{Qii})^{2}}\Big)\\
    & \leq I^{-1} C_{1}^{4} \Big(I^{-1}\sum_{i=1}^{I}(\widehat{\lambda}^{+}_{*,i})^4+I^{-1}\sum_{i=1}^{I}(\widehat{\lambda}^{-}_{*,i})^4 \Big)\Big(4\sum_{i=1}^{I} (C^{\prime}/I)^{2}\Big)\\
    &\leq \frac{8C_{1}^{4}C_{2}C^{\prime2}}{I^{2}}\rightarrow 0 \text{ as $I\rightarrow\infty$.}
\end{align*}
This implies that, for the first component of $IS_{*}^{2}(Q)$, we have (by Lemma~\ref{lem: converge in p})

\vspace{-1.2cm}
\setcounter{equation}{0}
\begin{align}\label{eqn: the frist component}
I^{-1}\mathbf{y}WW\mathbf{y}^{T}&=I^{-1}\sum_{i=1}^{I}w_{i}^2\widehat{\lambda}_{*,i}^2/(1-h_{Qii})\nonumber\\
&\xrightarrow{p} \lim\limits_{I \rightarrow \infty}I^{-1}\sum_{i=1}^{I}w_{i}^2(\nu_{i}^2+\mu_{i}^2)=\lim\limits_{I \rightarrow \infty}\Big\{I^{-1}\boldsymbol{\mu} WW\boldsymbol{\mu}^{T}+I\text{var}(\widehat{\lambda}_{*}\mid\mathcal{Z})\Big\}.
\end{align}
\vspace{-1.2cm}

Also, recall that the second component of $IS_{*}^{2}(Q)$ is

\vspace{-0.7cm}
\begin{equation*}
    -I^{-1}\mathbf{y}WQ(Q^{T}Q)^{-1}Q^{T}W\mathbf{y}^{T}= -I^{-1}\mathbf{y}WQ(I^{-1}Q^{T}Q)^{-1}I^{-1}Q^{T}W\mathbf{y}^{T}.
\end{equation*}
Note that the $l$-th row of $I^{-1}Q^{T}W\mathbf{y}^{T}$ equals $I^{-1}\sum_{i=1}^{I}w_{i}(\widehat{\lambda}_{*,i}/\sqrt{1-h_{Qii}})q_{il}$, and the $l^{\prime}$-th column of $I^{-1}\mathbf{y}WQ$ equals $I^{-1}\sum_{i=1}^{I}w_{i}(\widehat{\lambda}_{*,i}/\sqrt{1-h_{Qii}})q_{il^{\prime}}$. Meanwhile, for $I\geq I^{\prime}$, we have 

\vspace{-1cm}
\begin{align*}
      &\quad \ \Big|\Big(I^{-1}\sum_{i=1}^{I}\frac{w_{i}\widehat{\lambda}_{*,i}q_{il}}{\sqrt{1-h_{Qii}}}\Big)\Big(I^{-1}\sum_{i=1}^{I}\frac{w_{i}\widehat{\lambda}_{*,i}q_{il^{\prime}}}{\sqrt{1-h_{Qii}}}\Big)-\Big(I^{-1}\sum_{i=1}^{I}w_{i}\widehat{\lambda}_{*,i}q_{il}\Big)\Big(I^{-1}\sum_{i=1}^{I}w_{i}\widehat{\lambda}_{*,i}q_{il^{\prime}}\Big)\Big|\\
      &\leq I^{-2}\sum_{i=1}^{I}\sum_{i^{\prime}=1}^{I}\Big|\frac{w_{i}\widehat{\lambda}_{*,i}q_{il}}{\sqrt{1-h_{Qii}}}\frac{w_{i^{\prime}}\widehat{\lambda}_{*,i^{\prime}}q_{i^{\prime}l^{\prime}}}{\sqrt{1-h_{Qi^{\prime}i^{\prime}}}}-\big(w_{i}\widehat{\lambda}_{*,i}q_{il}\big)\big(w_{i^{\prime}}\widehat{\lambda}_{*,i^{\prime}}q_{i^{\prime}l^{\prime}}\big)\Big| \\
       &\leq  I^{-1}\sqrt{I^{-2}\sum_{i=1}^{I}\sum_{i^{\prime}=1}^{I}\big(w_{i}\widehat{\lambda}_{*,i}q_{il}\big)^{2}\big(w_{i^{\prime}}\widehat{\lambda}_{*,i^{\prime}}q_{i^{\prime}l^{\prime}}\big)^{2}}\sqrt{\sum_{i=1}^{I}\sum_{i^{\prime}=1}^{I}\Big(\frac{1}{\sqrt{1-h_{Qii}}}\frac{1}{\sqrt{1-h_{Qi^{\prime}i^{\prime}}}}-1\Big)^2} \\
       &\leq I^{-1}C_{1}^{4}\Big(I^{-1}\sum_{i=1}^{I}\widehat{\lambda}_{*,i}^{2}\Big)4C^{\prime 2} \\
    &\leq \frac{4C_{1}^{4}C_{2}^{1/2}C^{\prime2}}{I}\rightarrow 0 \ \text{ as $I\rightarrow\infty$.}
\end{align*}
Let $\widetilde{q}_{st}$ denote the entry at the $s$-th row and $t$-th column of the matrix $\widetilde{Q}^{-1}$, where $s=1,\dots, L$ and $t=1,\dots, L$. Also, we let $\widehat{\boldsymbol{\lambda}}_{*}=(\widehat{\lambda}_{*,1}, \dots, \widehat{\lambda}_{*,I})$. We have

\vspace{-1.2cm}
\begin{align*}
&\quad \big|I^{-1}\mathbf{y}WQ\widetilde{Q}^{-1}I^{-1}Q^{T}W\mathbf{y}^{T}-I^{-1}\widehat{\boldsymbol{\lambda}}_{*}^{T}WQ\widetilde{Q}^{-1}I^{-1}Q^{T}W\widehat{\boldsymbol{\lambda}}_{*}^{T}\big|\\
&=\Big|\sum_{l=1}^{L}\sum_{l^{\prime}=1}^{L}\widetilde{q}_{l^{\prime}l}\Big\{\Big(I^{-1}\sum_{i=1}^{I}\frac{w_{i}\widehat{\lambda}_{*,i}q_{il}}{\sqrt{1-h_{Qii}}}\Big)\Big(I^{-1}\sum_{i=1}^{I}\frac{w_{i}\widehat{\lambda}_{*,i}q_{il^{\prime}}}{\sqrt{1-h_{Qii}}}\Big)-\Big(I^{-1}\sum_{i=1}^{I}w_{i}\widehat{\lambda}_{*,i}q_{il}\Big)\Big(I^{-1}\sum_{i=1}^{I}w_{i}\widehat{\lambda}_{*,i}q_{il^{\prime}}\Big)\Big\}\Big|\\
&\leq \sum_{l=1}^{L}\sum_{l^{\prime}=1}^{L}|\widetilde{q}_{l^{\prime}l}|\Big|\Big(I^{-1}\sum_{i=1}^{I}\frac{w_{i}\widehat{\lambda}_{*,i}q_{il}}{\sqrt{1-h_{Qii}}}\Big)\Big(I^{-1}\sum_{i=1}^{I}\frac{w_{i}\widehat{\lambda}_{*,i}q_{il^{\prime}}}{\sqrt{1-h_{Qii}}}\Big)-\Big(I^{-1}\sum_{i=1}^{I}w_{i}\widehat{\lambda}_{*,i}q_{il}\Big)\Big(I^{-1}\sum_{i=1}^{I}w_{i}\widehat{\lambda}_{*,i}q_{il^{\prime}}\Big)\Big|\\
&\leq \Big(\sum_{l=1}^{L}\sum_{l^{\prime}=1}^{L}|\widetilde{q}_{l^{\prime}l}|\Big)\frac{4C_{1}^{4}C_{2}^{1/2}C^{\prime2}}{I}\rightarrow 0 \ \text{ as $I\rightarrow\infty$.}
\end{align*}
Also, by Lemma \ref{lem: converge in p}, we have

\vspace{-1cm}
\begin{equation*}
   - I^{-1}\widehat{\boldsymbol{\lambda}}_{*}^{T}WQ\widetilde{Q}^{-1}I^{-1}Q^{T}W\widehat{\boldsymbol{\lambda}}_{*}^{T} \xrightarrow{p} - \lim_{I \rightarrow \infty} I^{-1}\boldsymbol{\mu} WQ\widetilde{Q}^{-1}I^{-1}Q^{T}W\boldsymbol{\mu}^{T} \ \text{as $I\rightarrow \infty$. }
\end{equation*}
Therefore, we have

\vspace{-1cm}
\begin{align*}
    -I^{-1}\mathbf{y}WQ\widetilde{Q}^{-1}I^{-1}Q^{T}W\mathbf{y}^{T}\xrightarrow{p} \lim_{I \rightarrow \infty} -I^{-1}\boldsymbol{\mu} WQ\widetilde{Q}^{-1}I^{-1}Q^{T}W\boldsymbol{\mu}^{T}\ \text{as $I\rightarrow \infty$.}
\end{align*}
Since $\lim_{I\rightarrow \infty} I^{-1}Q^{T}Q=\widetilde{Q}$ (Condition \ref{condition: convergence}), for the second component of $IS_{*}^{2}(Q)$, we have

\vspace{-1.2cm}
\begin{align}\label{eqn: second component}
    -I^{-1}\mathbf{y}WQ(Q^{T}Q)^{-1}Q^{T}W\mathbf{y}^{T}&=-I^{-1}\mathbf{y}WQ(I^{-1}Q^{T}Q)^{-1}I^{-1}Q^{T}W\mathbf{y}^{T} \nonumber\\
    &\xrightarrow{p} \lim_{I \rightarrow \infty} -I^{-1}\boldsymbol{\mu} WQ(Q^{T}Q)^{-1}Q^{T}W\boldsymbol{\mu}^{T}\nonumber\\
    &= \lim_{I \rightarrow \infty} -I^{-1}\boldsymbol{\mu} WQ(I^{-1}Q^{T}Q)^{-1}I^{-1}Q^{T}W\boldsymbol{\mu}^{T} \ \text{as $I\rightarrow \infty$.}
\end{align}
\vspace{-1.2cm}

Note that the projection matrix $\mathcal{I}-H_{Q}$ is positive semi-definite, combining (\ref{eqn: the frist component}) and (\ref{eqn: second component}), we have

\vspace{-1.5cm}
\begin{align*}
    I\{S_{*}^2(Q)-\text{var}(\widehat{\lambda}_{*}\mid\mathcal{Z})\} \xrightarrow{p} \lim\limits_{I \rightarrow \infty}I^{-1}\boldsymbol{\mu} W(\mathcal{I}-H_{Q})W\boldsymbol{\mu}^{T} \geq 0.
\end{align*}
Therefore, as $I \rightarrow \infty$, we have
\begin{align*}
    \frac{\text{var}(\widehat{\lambda}_{*}\mid\mathcal{Z})}{S_{*}^2(Q)}&=1-\frac{S_{*}^2(Q)-\text{var}(\widehat{\lambda}_{*}\mid\mathcal{Z})}{S_{*}^2(Q)} \\ 
    &=1-\frac{I\{S_{*}^2(Q)-\text{var}(\widehat{\lambda}_{*}\mid\mathcal{Z})\}}{IS_{*}^2(Q)} \\ 
    &\xrightarrow{p}1-\frac{\lim\limits_{I \rightarrow \infty}I^{-1}\boldsymbol{\mu} W(\mathcal{I}-H_{Q})W\boldsymbol{\mu}^{T}}{\lim\limits_{I \rightarrow \infty}I^{-1}\boldsymbol{\mu} W(\mathcal{I}-H_{Q})W\boldsymbol{\mu}^{T}+\lim\limits_{I\rightarrow\infty}I^{-1}\sum_{i=1}^{I}w_i^2\nu_{i}^2} \in (0,1],
\end{align*}
in which $\lim_{I \rightarrow \infty}I^{-1}\boldsymbol{\mu} W(\mathcal{I}-H_{Q})W\boldsymbol{\mu}^{T}\geq 0$ and $\lim_{I\rightarrow\infty}I^{-1}\sum_{i=1}^{I}w_{i}^{2}\nu_{i}^2>0$ (by Condition \ref{condition: convergence}). 
\end{proof}

Finally, we are ready to prove Theorem 1.
\begin{proof}
We have
\vspace{-0.4cm}
\begin{align*}
    &\quad \ \lim\limits_{I \rightarrow \infty}\text{pr}\Big(\widehat{\lambda}_{*}-\Phi^{-1}(1-\alpha/2)\sqrt{S_{*}^2(Q)} \leq \lambda \leq \widehat{\lambda}_{*}+\Phi^{-1}(1-\alpha/2)\sqrt{S_{*}^2(Q)}\mid\mathcal{Z} \Big) \\  &=\lim\limits_{I \rightarrow \infty}\text{pr}\bigg(-\Phi^{-1}(1-\alpha/2) \leq \frac{\widehat{\lambda}_{*}-\lambda}{\sqrt{S_{*}^2(Q)}} \leq \Phi^{-1}(1-\alpha/2)\mid\mathcal{Z}  \bigg) \\ 
    &=\lim\limits_{I \rightarrow \infty}\text{pr}\bigg(-\Phi^{-1}(1-\alpha/2) \leq \frac{\widehat{\lambda}_{*}-\lambda}{\sqrt{\text{var}(\widehat{\lambda}_{*}\mid\mathcal{Z})}}\lim_{I\rightarrow\infty}\sqrt{\frac{\text{var}(\widehat{\lambda}_{*}\mid\mathcal{Z})}{S_{*}^2(Q)}} \leq \Phi^{-1}(1-\alpha/2)\mid\mathcal{Z}  \bigg) \\ 
    &\geq \lim\limits_{I \rightarrow \infty}\text{pr}\Bigg(-\Phi^{-1}(1-\alpha/2) \leq \frac{\widehat{\lambda}_{*}-\lambda}{\sqrt{\text{var}(\widehat{\lambda}_{*}\mid\mathcal{Z})}} \leq \Phi^{-1}(1-\alpha/2)\mid\mathcal{Z}  \Bigg) \\ 
    &=\Phi(\Phi^{-1}(1-\alpha/2))-\Phi(\Phi^{-1}(\alpha/2)) \\ 
    &=1-\alpha,
\end{align*}
\vspace{-1.5cm}

where the equality in the third line comes from Lemma~\ref{lem: asymptotic normality}, Lemma~\ref{lem: valid variance}, and Slutsky's theorem, the inequality in the fourth line comes from Lemma~\ref{lem: valid variance}, and the equality in the fifth line is by Lemma~\ref{lem: asymptotic normality}.
\end{proof}

\subsection*{B.4 Proof of Proposition 2}
To prove Proposition 2, we consider the following widely used regularity conditions.

\begin{condition}[Consistent Propensity Score Estimator]\label{condition: estimated_propensity}
    The estimated propensity scores $\widehat{e}_{ij}$ obtained from a correctly specified model (parametric or nonparametric) are strongly consistent estimators of the true propensity scores $e_{ij}$, that is, $\widehat{e}_{ij}\xrightarrow{a.s.}e_{ij}$ for all $i,j$ as $N \rightarrow\infty$.
\end{condition}

\begin{condition}[Regularity of Estimated Propensity Scores]\label{condition: infimum} There exists a constant $\delta'>0$ such that, for sufficiently large $I$, we have $\widehat{e}_{ij}\in [\delta', 1-\delta^{\prime}]$ for all $i,j$. 
\end{condition}

\begin{condition}[Bounded Outcomes]\label{condition: bounded outcomes} There exists a constant $M<\infty$ such that $|Y_{ij}|\leq M$ for all $i=1,\dots,I,j=1,\dots,n_i$.
\end{condition}
Under Condition \ref{condition: bounded size of matched sets} and Assumption 2 in the main text, there exists some constant $q_{1}>0$ such that $p_{ij}\in [q_{1},1-q_{1}]$. Similarly, under Conditions \ref{condition: bounded size of matched sets}, \ref{condition: infimum} and Assumption 2 in the main text, there exists some constant $q_{2}>0$ such that $\widehat{p}_{ij}\in [q_{2},1-q_{2}]$.
Also, recall the following detailed form of Kolmogorov's strong law of large numbers.
\begin{lemma}\label{lemma: kolmogorov}
(Kolmogorov's Strong Law of Large Numbers): Suppose $X_1, X_2,\dots, X_{n}$ is an infinite sequence of independent but not identically distributed random variables, and $\text{var}(X_{k})<\infty$ for all $k=1,\dots, n$. In addition, suppose that $\lim_{n\rightarrow \infty}\sum_{k=1}^{n}k^{-2}\text{var}(X_{k})<\infty$. Then, we have $n^{-1}(X_1+X_2+\dots +X_{n})-n^{-1}E(X_1+X_2+\dots + X_{n})\xrightarrow{a.s.} 0$ as $n \to \infty$.
\end{lemma}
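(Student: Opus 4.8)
The plan is to prove this classical result by reducing it to the almost-sure convergence of a single random series and then invoking Kronecker's lemma. First I would center the summands, setting $Y_{k}=X_{k}-E(X_{k})$, so that the $Y_{k}$ are independent with $E(Y_{k})=0$ and $\text{var}(Y_{k})=\text{var}(X_{k})<\infty$; the claim then reduces to showing $n^{-1}\sum_{k=1}^{n}Y_{k}\xrightarrow{a.s.}0$ as $n\to\infty$. The overall strategy is the standard Kolmogorov argument: first establish that the weighted series $\sum_{k=1}^{\infty}Y_{k}/k$ converges almost surely, and then transfer this convergence to the Cesàro average via Kronecker's lemma.

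For the first step, I would apply Kolmogorov's one-series convergence theorem to the independent, mean-zero summands $Y_{k}/k$. Their variances are summable by hypothesis, since $\sum_{k=1}^{\infty}\text{var}(Y_{k}/k)=\sum_{k=1}^{\infty}k^{-2}\text{var}(X_{k})<\infty$. The one-series theorem is itself proved via Kolmogorov's maximal inequality: for independent mean-zero $W_{1},\dots,W_{m}$,
\begin{equation*}
\text{pr}\Big(\max_{1\leq r\leq m}\big|\textstyle\sum_{k=1}^{r}W_{k}\big|\geq\epsilon\Big)\leq\epsilon^{-2}\sum_{k=1}^{m}\text{var}(W_{k}).
\end{equation*}
Applying this inequality to the tail blocks $W_{k}=Y_{k}/k$, $k>N$, and letting $N\to\infty$ shows that $\text{pr}\big(\sup_{m>N}\big|\sum_{k=N+1}^{m}Y_{k}/k\big|\geq\epsilon\big)\to 0$ for every $\epsilon>0$; hence the partial sums of $\sum_{k}Y_{k}/k$ are almost surely Cauchy, and the series converges almost surely.

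For the second step, I would apply Kronecker's lemma with weights $a_{n}=n\uparrow\infty$ and terms $b_{k}=Y_{k}/k$. Since $\sum_{k}b_{k}$ converges almost surely by the first step, Kronecker's lemma yields $a_{n}^{-1}\sum_{k=1}^{n}a_{k}b_{k}=n^{-1}\sum_{k=1}^{n}k\cdot(Y_{k}/k)=n^{-1}\sum_{k=1}^{n}Y_{k}\to 0$ on the same almost-sure event, which is precisely the desired conclusion. The main obstacle is the first step: one must control the \emph{uniform} oscillation of the partial sums of $\sum_{k}Y_{k}/k$ over all large indices simultaneously, rather than the behaviour of any single partial sum, and this is exactly what the maximal inequality delivers. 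Once the random series is shown to converge almost surely, the passage to the average via Kronecker's lemma is a purely deterministic, pathwise manipulation and is routine.
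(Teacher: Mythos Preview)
Your argument is correct and follows the standard textbook proof of Kolmogorov's strong law: center the variables, use Kolmogorov's maximal inequality to establish almost-sure convergence of $\sum_{k}Y_{k}/k$ via the one-series theorem, and then apply Kronecker's lemma with $a_{n}=n$ to pass to the Ces\`aro average. Each step is stated accurately and the identification of the maximal inequality as the key tool for controlling uniform oscillation is exactly right.

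Note, however, that the paper does not supply its own proof of this lemma. It is stated there as a classical result (``recall the following detailed form of Kolmogorov's strong law of large numbers'') and invoked as a black box in the proof of Proposition~2. So there is no paper proof to compare against; you have simply filled in the standard argument that the paper takes for granted.
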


Next, we prove the following lemma, which states that the oracle estimator $\widehat{\lambda}_{*}$ is a consistent estimator for the sample average treatment effect $\lambda$.
\begin{lemma}\label{lemma: oracle}
Assuming independence of treatment assignments across matched sets, Assumptions 1 and 2 in the main text and Condition~\ref{condition: bounded outcomes}. As $I\rightarrow\infty$, the oracle estimator $\widehat{\lambda}_{*}$ is strongly consistent for $\lambda$, that is, $\widehat{\lambda}_{*}\xrightarrow{a.s.}\lambda$. 
\end{lemma}

\begin{proof}
    Let $U_i=\sum_{j=1}^{n_i}T_{ij}$, where $T_{ij}=\frac{Z_{ij}Y_{ij}}{p_{ij}}-\frac{(1-Z_{ij})Y_{ij}}{1-p_{ij}}-\{Y_{ij}(1)-Y_{ij}(0)\}$. Then, for any $i,j$, we have 
    \begin{equation*}
    \begin{split}
        &\Big|\frac{Z_{ij}Y_{ij}}{p_{ij}}-\frac{(1-Z_{ij})Y_{ij}}{1-p_{ij}}\Big|\leq\frac{|Y_{ij}|}{p_{ij}}+\frac{|Y_{ij}|}{1-p_{ij}}\leq \frac{2M}{q_1}, \text{which implies that } |T_{ij}|\leq\frac{2M}{q_1}+2M:=L.
    \end{split}
    \end{equation*}
    Therefore, we have $\text{var}(T_{ij})\leq E(T_{ij}^2)\leq L^2 < \infty$. By the Cauchy-Schwarz inequality, we have $U_i^2=\Big(\sum_{j=1}^{n_i}T_{ij}\Big)^2\leq n_i\sum_{j=1}^{n_i}T_{ij}^2$. Therefore, we have $\text{var}(U_i)\leq E(U_i^2)\leq n_i\sum_{j=1}^{n_i}E(T_{ij}^2)\leq n_i^2L^2\leq C_{1}^{2}L^{2}$, which implies that $\lim_{i\rightarrow\infty}\sum_{i=1}^{I}i^{-2}\text{var}(U_i)<\infty$.
    
    For each matched set $i$, we have $E(U_i|\mathcal{Z})=0$. Since $U_i$ are independent random variables, by Lemma~\ref{lemma: kolmogorov}, we have $\frac{1}{I}\sum_{i=1}^{I}U_i\xrightarrow{a.s.}0.$
    Note that $1\leq N/I \leq C_1$, so $I/N$ is a constant strictly between 0 and 1. Therefore,
    \vspace{-0.3cm}
    \begin{equation*}
        \widehat{\lambda}_*-\lambda=\frac{1}{N}\sum_{i=1}^{I}U_i=\frac{I}{N}\cdot\frac{1}{I}\sum_{i=1}^{I}U_i\xrightarrow{a.s.}0.
    \end{equation*}
    \end{proof}
\vspace{-1.3cm}
Now, we are ready to prove Proposition 2.
\begin{proof}
Note that
\vspace{-0.3cm}
\begin{equation*}
    \begin{split}
        &\left|\frac{1}{p_{ij}}-\frac{1}{\widehat{p}_{ij}}\right|=\frac{|\widehat{p}_{ij}-p_{ij}|}{p_{ij}\widehat{p}_{ij}}\leq\frac{|\widehat{p}_{ij}-p_{ij}|}{q_1q_2}, \\ 
        &\left|\frac{1}{1-p_{ij}}-\frac{1}{1-\widehat{p}_{ij}}\right|=\frac{|\widehat{p}_{ij}-p_{ij}|}{(1-p_{ij})(1-\widehat{p}_{ij})}\leq\frac{|\widehat{p}_{ij}-p_{ij}|}{q_1q_2}. \\
    \end{split}
\end{equation*}
Since $Z_{ij} \in \{0,1\}$ for any $i,j$, we have
\vspace{-0.3cm}
\begin{align*}
    &\Big|\Big(\frac{Z_{ij}Y_{ij}}{p_{ij}}-\frac{(1-Z_{ij})Y_{ij}}{1-p_{ij}}\Big)-\Big(\frac{Z_{ij}Y_{ij}}{\widehat{p}_{ij}}-\frac{(1-Z_{ij})Y_{ij}}{1-\widehat{p}_{ij}}\Big)\Big| \\ 
    =&\Big|\Big(\frac{Z_{ij}Y_{ij}}{p_{ij}}-\frac{Z_{ij}Y_{ij}}{\widehat{p}_{ij}}\Big)-\Big(\frac{(1-Z_{ij})Y_{ij}}{1-p_{ij}}-\frac{(1-Z_{ij})Y_{ij}}{1-\widehat{p}_{ij}}\Big)\Big| \\ 
    =&\Big|\frac{Z_{ij}Y_{ij}(\widehat{p}_{ij}-{p}_{ij})}{p_{ij}\widehat{p}_{ij}}-\frac{(1-Z_{ij})Y_{ij}({p}_{ij}-\widehat{p}_{ij})}{(1-p_{ij})(1-\widehat{p}_{ij})}\Big|\\
    \leq&|Y_{ij}|\frac{|\widehat{p}_{ij}-p_{ij}|}{q_1q_2}\\
     \leq& M\frac{|\widehat{p}_{ij}-p_{ij}|}{q_1q_2}.
\end{align*}
\vspace{-0.3cm}
Let $\widehat{\mathbf{p}}=(\widehat{p}_{11},\dots,\widehat{p}_{In_I})$ and $\mathbf{p}=(p_{11},\dots,p_{In_I})$ denote the estimated and true post-matching treatment assignment probability vector, we have
\vspace{-0.3cm}
\begin{equation*}
    |\widehat{\lambda}_{\diamond}-\widehat{\lambda}_*|\leq\frac{M}{Nq_1q_2}\|\widehat{\mathbf{p}}-\mathbf{p}\|_1.
\end{equation*}
\vspace{-0.8cm}
Recall that each $p_{ij}$ is a continuous function of $\mathbf{e}_i=(e_{i1},\dots,e_{in_{i}})$. By the continuous mapping theorem and Condition~\ref{condition: estimated_propensity}, we have
\vspace{-0.3cm}
\begin{equation*}
    \widehat{p}_{ij}=g_{ij}(\widehat{\mathbf{e}}_i) \xrightarrow{a.s.} g_{ij}(\mathbf{e}_i)=p_{ij}.
\end{equation*}
Thus, each $|\widehat{p}_{ij}-p_{ij}|\xrightarrow{a.s.}0$, and $|\widehat{\lambda}_{\diamond}-\widehat{\lambda}_*|\leq \frac{M}{Nq_1q_2}\|\widehat{\mathbf{p}}-\mathbf{p}\|_1\xrightarrow{a.s.}0$. Therefore, by Lemma~\ref{lemma: oracle}, we have $\widehat{\lambda}_{\diamond}\xrightarrow{a.s.} \lambda$. 
\end{proof}

\subsection*{B.5: Assessing the Finite-Sample Impact of Propensity Score Estimation Error}

In Appendix B.4, we showed that the difference between $\widehat{\lambda}_{\diamond}$ and $\widehat{\lambda}_{*}$ is bounded by the discrepancy between the estimated and true post-matching treatment assignment probabilities, which is denoted as 
\vspace{-0.3cm}
\begin{equation*}
    |\widehat{\lambda}_{\diamond}-\widehat{\lambda}_*|\leq\frac{M}{Nq_1q_2}\|\widehat{\mathbf{p}}-\mathbf{p}\|_1.
\end{equation*}
Next, we derive a bound on the difference between the oracle variance estimator $S^2_*(Q)$ and its plug-in analogue $S_{\diamond}^2(Q)$, where $S_{\diamond}^2(Q)$ is obtained by replacing the true propensity scores $p_{ij}$ with their estimates $\widehat{p}_{ij}$. Using the notations from the previous sections, let $\mathbf{u}=\mathbf{y}W$, $\mathbf{\widehat{u}}=\mathbf{\widehat{y}}W$, $\mathbf{u}_{\Delta}=\mathbf{\widehat{u}}-\mathbf{u}$, $P=\mathcal{I}-H_Q$ and $w_{max}=\text{max}_i w_i\leq IC_1/N\leq C_1/2$. Define $c_Q=\text{min}_{1\leq i \leq I}(1-h_{Qii})>0$, and, for matched set $i$, let $\widehat{\mathbf{p}}_i=(\widehat{p}_{i1},\dots,\widehat{p}_{in_i})$ and $\mathbf{p}_i=(p_{i1},\dots,p_{in_i})$ denote the estimated and true post-matching treatment assignment probability vectors respectively. Then we have:
\vspace{-0.3cm}
\begin{align*}
    &|\widehat{\lambda}_{*,i}|\leq \frac{M}{q_1}, \quad |\widehat{\lambda}_{\diamond,i}-\widehat{\lambda}_{*,i}|\leq \frac{M}{n_iq_1q_2}\|\widehat{\mathbf{p}}_i-\mathbf{p}_i\|_1; \\ 
    &|y_i|\leq \frac{M}{q_1\sqrt{c_Q}}, \quad |\widehat{y}_i-y_i|\leq \frac{M}{n_iq_1q_2\sqrt{c_Q}}\|\widehat{\mathbf{p}}_i-\mathbf{p}_i\|_1; \\ 
    &\|\mathbf{y}\|_2\leq \frac{\sqrt{I}M}{q_1\sqrt{c_Q}}, \quad \|\mathbf{\widehat{y}}-\mathbf{y}\|_2\leq \frac{M}{2q_1q_2\sqrt{c_Q}}\|\widehat{\mathbf{p}}-\mathbf{p}\|_1; \\ 
    &\|\mathbf{u}\|_2\leq \frac{C_1\sqrt{I}M}{2q_1\sqrt{c_Q}}, \quad \|\mathbf{u}_{\Delta}\|=\|\mathbf{\widehat{u}}-\mathbf{u}\|_2\leq \frac{IM}{Nq_1q_2\sqrt{c_Q}}\|\widehat{\mathbf{p}}-\mathbf{p}\|_1. \\
\end{align*}
Because $P$ is symmetric and idempotent, its operator norm satisfies $\|P\|_{op}=1$. Therefore,
\vspace{-1.3cm}
\begin{align*}
   |\mathbf{u}P\mathbf{u}_{\Delta}^T|&=|\mathbf{u}(P\mathbf{u}_{\Delta}^T)|=|\langle \mathbf{u}^T,P\mathbf{u}_{\Delta}^T\rangle| \\ 
    &\leq\|\mathbf{u}\|_2\cdot\|P\mathbf{u}_{\Delta}^T\|_2 \quad  \\     &\leq\|P\|_{op}\|\mathbf{u}\|_2\cdot\|\mathbf{u}_{\Delta}\|_2=\|\mathbf{u}\|_2\cdot\|\mathbf{u}_{\Delta}\|_2. \\ 
\end{align*}
\vspace{-2.3cm}
\\ Similarly, we can show that $|\mathbf{u}_{\Delta}P\mathbf{u}_{\Delta}^T|\leq\|\mathbf{u}_{\Delta}\|_2^2$. Then, we obtain
\vspace{-0.5cm}
\begin{align*}
    |S_{\diamond}^2(Q)-S_{*}^2(Q)|&=|I^{-2}(\mathbf{\widehat{u}}P\mathbf{\widehat{u}}^T-\mathbf{u}P\mathbf{u}^T)| \\
    &=|I^{-2}((\mathbf{u}+\mathbf{u}_{\Delta})P(\mathbf{u}+\mathbf{u}_{\Delta})^T-\mathbf{u}P\mathbf{u}^T)| \\ 
    &=|I^{-2}(2\mathbf{u}P\mathbf{u}_{\Delta}^T+\mathbf{u}_{\Delta}P\mathbf{u}_{\Delta}^T)| \\ 
    &\leq I^{-2}(2|\mathbf{u}P\mathbf{u}_{\Delta}^T|+|\mathbf{u}_{\Delta}P\mathbf{u}_{\Delta}^T|) \\
    &\leq I^{-2}(2\|\mathbf{u}\|_2\cdot\|\mathbf{u}_{\Delta}\|_2+\|\mathbf{u}_{\Delta}\|_2^2). \\
\end{align*}
\vspace{-2.2cm}
\\ Let $\Delta_N=\frac{1}{N}\|\widehat{\mathbf{p}}-\mathbf{p}\|_1$. Plugging $\Delta_N$ in the bounds derived above yields
\vspace{-0.3cm}
\begin{equation*}
    |S_{\diamond}^2(Q)-S_{*}^2(Q)|\leq\frac{C_1M^2}{q_1^2q_2c_Q}I^{-1/2}\Delta_N+\frac{M^2}{q_1^2q_2^2c_Q}\Delta_N^2.
\end{equation*}
\vspace{-0.8cm}
\\ Next, using the inequality $|\sqrt{a}-\sqrt{b}|\leq\sqrt{|a-b|}$, we obtain
\begin{equation*}
    |S_{\diamond}(Q)-S_{*}(Q)|\leq\sqrt{\frac{C_1M^2}{q_1^2q_2c_Q}I^{-1/2}\Delta_N+\frac{M^2}{q_1^2q_2^2c_Q}\Delta_N^2}.
\end{equation*}
\vspace{-0.8cm}
\\ Recall that $CI^{\lambda}_{*}=[\widehat{\lambda}_* - \Phi^{-1}(1-\alpha/2) \times S_{*}(Q), \widehat{\lambda}_{*} + \Phi^{-1}(1-\alpha/2) \times S_{*}(Q)]$ denotes the confidence interval reported by the oracle IPPW estimator and $CI^{\lambda}_{\diamond}=[\widehat{\lambda}_{\diamond} - \Phi^{-1}(1-\alpha/2) \times S_{\diamond}(Q), \widehat{\lambda}_{\diamond} + \Phi^{-1}(1-\alpha/2) \times S_{\diamond}(Q)]$ denotes the confidence interval based on the plug-in IPPW estimator. Then, the absolute difference between the corresponding lower endpoints of $CI^{\lambda}_{*}$ and $CI^{\diamond}_{*}$ is bounded by:
\vspace{-0.5cm}
\begin{equation*}
\left|(\widehat{\lambda}_{\diamond}-\widehat{\lambda}_{*})-\Phi^{-1}({1-\alpha/2})\cdot(S_{\diamond}(Q)-S_{*}(Q))\right|\leq\frac{M}{q_1q_2}\Delta_N+\Phi^{-1}(1-\alpha/2)\sqrt{\frac{C_1M^2}{q_1^2q_2c_Q}I^{-1/2}\Delta_N+\frac{M^2}{q_1^2q_2^2c_Q}\Delta_N^2}.
\end{equation*}
Similarly, the absolute difference between the corresponding upper endpoints of $CI^{\lambda}_{*}$ and $CI^{\diamond}_{*}$ is bounded by:
\vspace{-0.5cm}
\begin{equation*}
\left|(\widehat{\lambda}_{\diamond}-\widehat{\lambda}_{*})+\Phi^{-1}({1-\alpha/2})\cdot (S_{\diamond}(Q)-S_{*}(Q))\right|\leq\frac{M}{q_1q_2}\Delta_N+\Phi^{-1}(1-\alpha/2)\sqrt{\frac{C_1M^2}{q_1^2q_2c_Q}I^{-1/2}\Delta_N+\frac{M^2}{q_1^2q_2^2c_Q}\Delta_N^2}.
\end{equation*}

\subsection*{B.6: Proofs of Theorems 2 and 3}

To prove Theorems 2 and 3, we need some regularity conditions. We first define the limiting estimand $\gamma^*$:
\begin{equation}\label{eqn: limit}
    \boldsymbol{\gamma}^*=\lim_{I \rightarrow \infty}\frac{1}{I} \sum_{i=1}^{I}E\Big[\boldsymbol\psi^{full}(\mathbf{O}_i,\theta,\nu,\nu')|\mathbf{Y}(1),\mathbf{Y}(0),\mathbf{X}\Big]
\end{equation}
We consider the following common regularity conditions in finite-population M-estimation theory \citep{xu2021mestimator, han2024introduction}:
\begin{condition}
    The limiting estimand~(\ref{eqn: limit}) exists and is uniquely minimized at $\boldsymbol{\gamma}^*$.
\end{condition}
\begin{condition}
    We assume that $\gamma =(\theta,\nu,\nu') \in \Gamma$ for some compact set $\Gamma$.
\end{condition}
\begin{condition}
    If $\boldsymbol\psi_i^{full}(\mathbf{O}_i,\gamma)=\nabla_\gamma\boldsymbol m_i^{full}(\mathbf{O}_i,\gamma)$, then $\boldsymbol m_i^{full}(u,\gamma)$ is continuous in $\gamma$ for all $u$ in the support of $\mathbf{O}_i$ for all $i$.
\end{condition}
\begin{condition}
    $\sup\limits_{i}E\big[\sup\limits_{\gamma \in \Gamma}|\boldsymbol m_i^{full}(\mathbf{O}_i,\gamma)|^2\big]<\infty$.
\end{condition}
\begin{condition}
   Let $\mathcal{O}_i$ denote the support of $\mathbf{O}_i$. There is $h(w)\downarrow0$ as $w\downarrow0$ and $c_1(\cdot):\mathcal{O}_i\rightarrow \mathbbm{R} $ such that $\sup\limits_i E\Big[c_{1,i}(\mathbf{O}_i)\Big]<\infty$, and for all $\tilde{\gamma}, \gamma \in \Gamma$, $|\boldsymbol m_i^{full}(\mathbf{O}_i,\tilde{\gamma})-\boldsymbol m_i^{full}(\mathbf{O}_i,\gamma)|\leq c_{1,i}(\mathbf{O}_i)h(\|\tilde{\gamma}-\gamma\|)$.
\end{condition}
\begin{condition}
    $\boldsymbol{\gamma}^* \in int(\Gamma)$.
\end{condition}
\begin{condition}
    For all $i$, we have $\boldsymbol m_i^{full}(u,\gamma)$ is twice differentiable on $int(\Gamma)$ for all $u$ in the support of $\mathbf{O}_i$.
\end{condition}
\begin{condition}
    $\sup\limits_{i}E\big[\sup\limits_{\gamma \in \Gamma}\|\boldsymbol \psi_i^{full}(\mathbf{O}_i,\gamma)\|^4\big]<\infty$.
\end{condition}
\begin{condition}
    $\frac{1}{I}\sum_{i=1}^{I}\boldsymbol \psi_i^{full}(\mathbf{O}_i,\widehat{\gamma})=o_p(N^{-1/2})$.
\end{condition}
\begin{condition}
    $A(\theta_0,\nu_0,\nu_0')$ is nonsingular.
\end{condition}
\begin{condition}
    $\sup\limits_{i}E\big[\sup\limits_{\gamma \in \Gamma}\|\nabla_{\gamma}\boldsymbol \psi_i^{full}(\mathbf{O}_i,\gamma)\|^2\big]<\infty$.
\end{condition}
\begin{condition}
    There is $h(w)\downarrow0$ as $w\downarrow0$, and $c_2(\cdot):\mathcal{O}_i\rightarrow \mathbbm{R}$ such that $\sup\limits_i E\Big[c_{2,i}(\mathbf{O}_i)\Big]<\infty$, and for all $\tilde{\gamma}, \gamma \in \Gamma$, $\|\nabla_{\gamma}\boldsymbol \psi_i^{full}(\mathbf{O}_i,\tilde{\gamma})-\nabla_{\gamma}\boldsymbol \psi_i^{full}(\mathbf{O}_i,\gamma)\|\leq c_{2,i}(\mathbf{O}_i)h(\|\tilde{\gamma}-\gamma\|)$.
\end{condition}
\begin{condition}
    There is $h(w)\downarrow0$ as $w\downarrow0$, and $c_3(\cdot):\mathcal{O}_i\rightarrow \mathbbm{R} $ such that $\sup\limits_i E\Big[c_{3,i}(\mathbf{O}_i)^2\Big]<\infty$, and for all $\tilde{\gamma}, \gamma \in \Gamma$, $\|\boldsymbol \psi_i^{full}(\mathbf{O}_i,\tilde{\gamma})-\boldsymbol \psi_i^{full}(\mathbf{O}_i,\gamma)\|\leq c_{3,i}(\mathbf{O}_i)h(\|\tilde{\gamma}-\gamma\|)$.
\end{condition}


Next, we present a valid variance estimator for $\widehat{\lambda}_{\diamond}$ under the finite-population M-estimation framework \citep{xu2021mestimator, han2024introduction}, which jointly accounts for the uncertainties stemming from estimating the propensity score model parameters and those stemming from estimating the treatment-specific finite-population means. Throughout this section, we use the notations introduced in Section 3.2 of the main text.

Assume that the propensity scores are estimated using an M-estimation approach. Let $\theta_0 \in \mathbbm{R}^p$ denote the $p$-dimensional true parameter vector of the propensity score model. For each matched set $i$, we can define $\boldsymbol{\psi}(\mathbf{O}_i,\theta)=(\psi_1(\mathbf{O}_i,\theta),\cdots,\psi_p(\mathbf{O}_i,\theta))^T$ as the vector of estimating functions associated with the M-estimator. The estimator $\widehat{\theta}$ is then defined implicitly as the solution to the system of estimating equations $\sum_{i=1}^{I}\boldsymbol{\psi}(\mathbf{O}_i,\theta)=\mathbf{0}$. To incorporate estimation of the sample average treatment effect within this framework, we augment the system with two additional components. Specifically, let $\nu_0=\frac{1}{N}\sum_{i=1}^{I}\sum_{j=1}^{n_i}Y_{ij}(1)$ and $\nu'_0=\frac{1}{N}\sum_{i=1}^{I}\sum_{j=1}^{n_i}Y_{ij}(0)$ denote the finite-population means of the potential outcomes under treatment and control, respectively. These two quantities determine the sample average treatment, defined as $\lambda=\nu_0-\nu'_0$. To estimate $\nu_0$ and $\nu_0'$, we decompose the IPPW estimator we introduced in Section 3.1 in the main text into treated and control components and rewrite them in a form indexed by the covariates $\mathbf{x}_{ij}$ and the propensity score model parameters $\theta$. Then, the post-matching finite-population M-estimation framework is formulated as follows:

\begin{align*}
\sum_{i=1}^{I}
\left(
\begin{array}{c}
  \boldsymbol{\psi}(\mathbf{O}_i,\theta) \\[0.8em]
  \nu-\frac{I}{N} \sum_{j=1}^{n_i}\frac{Z_{ij}Y_{ij}}{p_{ij}} \\[0.8em]
  \nu'-\frac{I}{N} \sum_{j=1}^{n_i}\frac{(1-Z_{ij})Y_{ij}}{1-p_{ij}}
\end{array}
\right)&=\sum_{i=1}^{I'}
\left(
\begin{array}{c}
  \boldsymbol{\psi}(\mathbf{O}_i,\theta) \\[0.8em]
  \nu-\frac{I}{N} \sum_{j=1}^{n_i}\frac{Z_{ij}Y_{ij}\sum_{j'=1}^{n_i}\text{odds}\{g(\mathbf{x}_{ij'};\theta)\}}{\text{odds}\{g(\mathbf{x}_{ij};\theta)\}} \\[0.8em]
  \nu'-\frac{I}{N} \sum_{j=1}^{n_i}\frac{(1-Z_{ij})Y_{ij}\sum_{j'=1}^{n_i}\text{odds}\{g(\mathbf{x}_{ij'};\theta)\}}{\sum_{j'=1}^{n_i}\text{odds}\{g(\mathbf{x}_{ij'};\theta)\}-\text{odds}\{g(\mathbf{x}_{ij};\theta)\}}
\end{array}
\right)\\
&\quad \quad \quad +\sum_{i=I'+1}^{I}
\left(
\begin{array}{c}
  \boldsymbol{\psi}(\mathbf{O}_i,\theta) \\[0.8em]
  \nu-\frac{I}{N} \sum_{j=1}^{n_i}\frac{Z_{ij}Y_{ij}\sum_{j'=1}^{n_i}\text{odds}\{1-g(\mathbf{x}_{ij'};\theta)\}}{\sum_{j'=1}^{n_i}\text{odds}\{1-g(\mathbf{x}_{ij'};\theta)\}-\text{odds}\{1-g(\mathbf{x}_{ij};\theta)\}} \\[0.8em]
  \nu'-\frac{I}{N} \sum_{j=1}^{n_i}\frac{(1-Z_{ij})Y_{ij}\sum_{j'=1}^{n_i}\text{odds}\{1-g(\mathbf{x}_{ij'};\theta)\}}{\text{odds}\{1-g(\mathbf{x}_{ij};\theta)\}}
\end{array}
\right) \\
&=\mathbf{0}.
\end{align*}
The parameter estimates $\widehat{\theta},\widehat{\nu},\widehat{\nu}'$ are obtained by solving the above system of equations. For each matched set $i$, let $\boldsymbol\psi^{full}(\mathbf{O}_i,\theta,\nu,\nu')$ represent the corresponding estimating equations. Next, we aim to estimate the asymptotic variance of $\widehat{\nu}-\widehat{\nu}'$, denoted by $\widehat{\lambda}_{\diamond}$, while properly accounting for the uncertainty in estimation of the propensity scores. As introduced in Section 3.2 of the main text, we apply the finite-population sandwich variance estimation approach for finite-population M-estimators, following the formulation in \citet{xu2021mestimator}. The corresponding variance matrix is given by:
\begin{equation*}
V(\theta_0,\nu_0.\nu'_0)=A(\theta_0,\nu_0.\nu'_0)^{-1}B(\theta_0,\nu_0.\nu'_0)[A(\theta_0,\nu_0.\nu'_0)^{-1}]^T\in \mathbbm{R}^{(p+2)\times(p+2)},
\end{equation*}
where $A(\cdot)$ and $B(\cdot)$ are defined as:
\begin{align*}
    &A(\theta_0,\nu_0.\nu'_0)=\lim_{I \rightarrow \infty} \frac{1}{I} \sum_{i=1}^{I}E\Big[-\nabla_{\theta,\nu,\nu',}\boldsymbol\psi^{full}(\mathbf{O}_i,\theta_0,\nu_0,\nu'_0)\Big]\in \mathbbm{R}^{(p+2)\times(p+2)}, \\    
    &B(\theta_0,\nu_0.\nu'_0)=\lim_{I \rightarrow \infty} \frac{1}{I} \sum_{i=1}^{I}E\Big[\boldsymbol\psi^{full}(\mathbf{O}_i,\theta_0,\nu_0,\nu'_0)\boldsymbol\psi^{full}(\mathbf{O}_i,\theta_0,\nu_0,\nu'_0)^T\Big]\in \mathbbm{R}^{(p+2)\times(p+2)}.
\end{align*}
Next, we derive explicit expressions for $A(\cdot)$ and $B(\cdot)$:
\begin{align*}
    &A(\theta_0,\nu_0.\nu'_0)= \lim_{I \rightarrow \infty} 
    \left(
    \begin{matrix}
        A_{11} & 0 & 0 \\
        A_{21} & -1 & 0 \\
        A_{31} & 0 & -1
    \end{matrix}
    \right)_{(p+2)\times(p+2)}, \\
    &B(\theta_0,\nu_0.\nu'_0)=\lim_{I \rightarrow \infty}
    \left(
    \begin{matrix}
        B_{\theta\theta} & B_{\theta\nu} & B_{\theta\nu'} \\ 
        B_{\theta\nu}^T & B_{\nu\nu} & B_{\nu\nu'} \\ 
        B_{\theta\nu'}^T & B_{\nu\nu'}^T & B_{\nu'\nu'}
    \end{matrix}
    \right)_{(p+2)\times(p+2)},
\end{align*}
where
\begin{equation*}
\begin{split}
    &A_{11}=-\frac{1}{I}\sum_{i=1}^{I}E[\nabla_\theta\boldsymbol\psi(\mathbf{O_i},\theta_0)], \\
    &A_{21}=\frac{1}{N}\sum_{i=1}^{I'} \sum_{j=1}^{n_i}E\bigg[\nabla_\theta\frac{Z_{ij}Y_{ij}\sum_{j'=1}^{n_i}\text{odds}\{g(\mathbf{x}_{ij'};\theta_0)\}}{\text{odds}\{g(\mathbf{x}_{ij};\theta_0)\}}\bigg] \\ 
    &\hspace{1cm}+\frac{1}{N}\sum_{i=I'+1}^{I} \sum_{j=1}^{n_i}E\bigg[\nabla_\theta\frac{Z_{ij}Y_{ij}\sum_{j'=1}^{n_i}\text{odds}\{1-g(\mathbf{x}_{ij'};\theta_0)\}}{\sum_{j'=1}^{n_i}\text{odds}\{1-g(\mathbf{x}_{ij'};\theta_0)\}-\text{odds}\{1-g(\mathbf{x}_{ij};\theta_0)\}}\bigg], \\
    &A_{31}=\frac{1}{N}\sum_{i=1}^{I'}
     \sum_{j=1}^{n_i}E\bigg[\nabla_\theta\frac{(1-Z_{ij})Y_{ij}\sum_{j'=1}^{n_i}\text{odds}\{g(\mathbf{x}_{ij'};\theta_0)\}}{\sum_{j'=1}^{n_i}\text{odds}\{g(\mathbf{x}_{ij'};\theta_0)\}-\text{odds}\{g(\mathbf{x}_{ij};\theta_0)\}}\bigg] \\  
     &\hspace{1cm}+\frac{1}{N}\sum_{i=I'+1}^{I}\sum_{j=1}^{n_i}E\bigg[\nabla_\theta\frac{(1-Z_{ij})Y_{ij}\sum_{j'=1}^{n_i}\text{odds}\{1-g(\mathbf{X_{ij'}};\theta_0)\}}{\text{odds}\{1-g(\mathbf{X_{ij}};\theta_0)\}}\bigg], \\
    &B_{\theta\theta}=\frac{1}{I}\sum_{i=1}^{I}E\Big\{\boldsymbol{\psi}(\mathbf{O}_i,\theta_0)\boldsymbol{\psi}(\mathbf{O}_{i},\theta_0)^T\Big\}, \\
    &B_{\theta\nu}=\frac{1}{I}\sum_{i=1}^{I'}E\Bigg\{\boldsymbol{\psi}(\mathbf{O}_i,\theta_0)\Bigg[\nu_0-\frac{I}{N} \sum_{j=1}^{n_i}\frac{Z_{ij}Y_{ij}\sum_{j'=1}^{n_i}\text{odds}\{g(\mathbf{x}_{ij'};\theta_0)\}}{\text{odds}\{g(\mathbf{x}_{ij};\theta_0)\}}\Bigg]\Bigg\} \\
    &\hspace{1cm}+\frac{1}{I}\sum_{i=I'+1}^{I}E\Bigg\{\boldsymbol{\psi}(\mathbf{O}_i,\theta_0)\Bigg[\nu_0-\frac{I}{N} \sum_{j=1}^{n_i}\frac{Z_{ij}Y_{ij}\sum_{j'=1}^{n_i}\text{odds}\{1-g(\mathbf{x}_{ij'};\theta_0)\}}{\sum_{j'=1}^{n_i}\text{odds}\{1-g(\mathbf{x}_{ij'};\theta_0)\}-\text{odds}\{1-g(\mathbf{x}_{ij};\theta_0)\}}\Bigg]\Bigg\}, \\
    &B_{\theta\nu'}=\frac{1}{I}\sum_{i=1}^{I'}E\Bigg\{\boldsymbol{\psi}(\mathbf{O}_i,\theta_0)\Bigg[\nu'_0-\frac{I}{N} \sum_{j=1}^{n_i}\frac{(1-Z_{ij})Y_{ij}\sum_{j'=1}^{n_i}\text{odds}\{g(\mathbf{x}_{ij'};\theta_0)\}}{\sum_{j'=1}^{n_i}\text{odds}\{g(\mathbf{x}_{ij'};\theta_0)\}-\text{odds}\{g(\mathbf{x}_{ij};\theta_0)\}}\Bigg]\Bigg\} \\
    &\hspace{1cm}+\frac{1}{I}\sum_{i=I'+1}^{I}E\Bigg\{\boldsymbol{\psi}(\mathbf{O}_i,\theta_0)\Bigg[\nu'_0-\frac{I}{N} \sum_{j=1}^{n_i}\frac{(1-Z_{ij})Y_{ij}\sum_{j'=1}^{n_i}\text{odds}\{1-g(\mathbf{x}_{ij'};\theta_0)\}}{\text{odds}\{1-g(\mathbf{x}_{ij};\theta_0)\}}\Bigg]\Bigg\}, \\
        \end{split}
\end{equation*}
\begin{equation*}
\begin{split}
    &B_{\nu\nu}=\frac{1}{I}\sum_{i=1}^{I'}E\Bigg[\nu_0-\frac{I}{N} \sum_{j=1}^{n_i}\frac{Z_{ij}Y_{ij}\sum_{j'=1}^{n_i}\text{odds}\{g(\mathbf{x}_{ij'};\theta_0)\}}{\text{odds}\{g(\mathbf{x}_{ij};\theta_0)\}}\Bigg]^2 \\
    &\hspace{1cm}+\frac{1}{I}\sum_{i=I'+1}^{I}E\Bigg[\nu_0-\frac{I}{N} \sum_{j=1}^{n_i}\frac{Z_{ij}Y_{ij}\sum_{j'=1}^{n_i}\text{odds}\{1-g(\mathbf{x}_{ij'};\theta_0)\}}{\sum_{j'=1}^{n_i}\text{odds}\{1-g(\mathbf{x}_{ij'};\theta_0)\}-\text{odds}\{1-g(\mathbf{x}_{ij};\theta_0)\}}\Bigg]^2, \\
    &B_{\nu\nu'}=\frac{1}{I}\sum_{i=1}^{I'}E\Bigg\{\Bigg[\nu_0-\frac{I}{N} \sum_{j=1}^{n_i}\frac{Z_{ij}Y_{ij}\sum_{j'=1}^{n_i}\text{odds}\{g(\mathbf{x}_{ij'};\theta_0)\}}{\text{odds}\{g(\mathbf{x}_{ij};\theta_0)\}}\Bigg] \\ 
    &\hspace{2.6cm}\times \Bigg[\nu'_0-\frac{I}{N} \sum_{j=1}^{n_i}\frac{(1-Z_{ij})Y_{ij}\sum_{j'=1}^{n_i}\text{odds}\{g(\mathbf{x}_{ij'};\theta_0)\}}{\sum_{j'=1}^{n_i}\text{odds}\{g(\mathbf{x}_{ij'};\theta_0)\}-\text{odds}\{g(\mathbf{x}_{ij};\theta_0)\}}\Bigg]\Bigg\}\\
    &\hspace{1cm}+\frac{1}{I}\sum_{i=I'+1}^{I}E\Bigg\{\Bigg[\nu_0-\frac{I}{N} \sum_{j=1}^{n_i}\frac{Z_{ij}Y_{ij}\sum_{j'=1}^{n_i}\text{odds}\{1-g(\mathbf{x}_{ij'};\theta_0)\}}{\sum_{j'=1}^{n_i}\text{odds}\{1-g(\mathbf{x}_{ij'};\theta_0)\}-\text{odds}\{1-g(\mathbf{x}_{ij};\theta_0)\}}\Bigg]\\
    &\hspace{2.6cm}\times \Bigg[\nu'_0-\frac{I}{N} \sum_{j=1}^{n_i}\frac{(1-Z_{ij})Y_{ij}\sum_{j'=1}^{n_i}\text{odds}\{1-g(\mathbf{x}_{ij'};\theta_0)\}}{\text{odds}\{1-g(\mathbf{x}_{ij};\theta_0)\}}\Bigg]\Bigg\}, \\
    &B_{\nu'\nu'}=\frac{1}{I}\sum_{i=1}^{I'}E\Bigg[\nu'_0-\frac{I}{N} \sum_{j=1}^{n_i}\frac{(1-Z_{ij})Y_{ij}\sum_{j'=1}^{n_i}\text{odds}\{g(\mathbf{x}_{ij'}\;\theta_0)\}}{\sum_{j'=1}^{n_i}\text{odds}\{g(\mathbf{x}_{ij'};\theta_0)\}-\text{odds}\{g(\mathbf{x}_{ij};\theta_0)\}}\Bigg]^2 \\
    &\hspace{1cm}+\frac{1}{I}\sum_{i=I'+1}^{I}E\Bigg[\nu'_0-\frac{I}{N} \sum_{j=1}^{n_i}\frac{(1-Z_{ij})Y_{ij}\sum_{j'=1}^{n_i}\text{odds}\{1-g(\mathbf{x}_{ij'};\theta_0)\}}{\text{odds}\{1-g(\mathbf{x}_{ij};\theta_0)\}}\Bigg]^2. \\
    \end{split}
\end{equation*}
We now calculate $A(\theta_0,\nu_0.\nu'_0)^{-1}$. We proceed by representing $A(\theta_0,\nu_0.\nu'_0)$ as a block matrix:
\begin{equation*}
    A(\theta_0,\nu_0.\nu'_0)=\lim_{I \rightarrow \infty} 
    \left(
    \begin{matrix}
        M & \mathbf{0}  \\
        N & D \\
    \end{matrix}
    \right)_{(p+2)\times(p+2)},
\end{equation*}
\vspace{-0.5cm}
where
\begin{equation*}
\begin{split}
    &M=A_{11}, \\
    &N=
    \left(
        \begin{matrix}
        A_{21} \\ 
        A_{31}
        \end{matrix}
    \right), \\ 
    &D=
    \left(
        \begin{matrix}
        -1 & 0 \\ 
        0 & -1
        \end{matrix}
    \right). \\ 
\end{split}
\end{equation*}
Then, by the Schur complement,
\begin{equation*}
\begin{split}
    A(\theta_0,\nu_0.\nu'_0)^{-1}=&\lim_{I \rightarrow \infty} 
    \left(
    \begin{matrix}
        M^{-1} & \mathbf{0}  \\
        -D^{-1}NM^{-1} & D^{-1} \\
    \end{matrix}
    \right)_{(p+2)\times(p+2)} \\
    =&\lim_{I \rightarrow \infty} 
    \left(
    \begin{matrix}
        A_{11}^{-1} & 0 & 0  \\
        A_{21}A_{11}^{-1} & -1 & 0 \\
        A_{31}A_{11}^{-1} & 0 & -1 \\
    \end{matrix}
    \right)_{(p+2)\times(p+2)}.
\end{split}
\end{equation*}
Then, we use the entry $((p+1),(p+1))$, $((p+2),(p+2))$, $((p+1),(p+2))$ of the matrix $V(\theta_0,\nu_0.\nu'_0)$ to estimate the variance of $\widehat{\lambda}_{\diamond}$:
\begin{equation*}
    \widehat{\text{var}}(\widehat{\lambda}_{\diamond})=\frac{1}{I}V(\theta_0,\nu_0.\nu'_0)_{(p+1),(p+1)}+\frac{1}{I}V(\theta_0,\nu_0.\nu'_0)_{(p+2),(p+2)}-\frac{2}{I}V(\theta_0,\nu_0.\nu'_0)_{(p+1),(p+2)},
\end{equation*}
where
\begin{equation*}
\begin{split}
    &V(\theta_0,\nu_0.\nu'_0)_{(p+1),(p+1)}=\Big[\lim_{I \rightarrow \infty}A_{21}A_{11}^{-1},-1,0\Big]B(\theta_0,\nu_0.\nu'_0)\Big[\lim_{I \rightarrow \infty}A_{21}A_{11}^{-1},-1,0\Big]^T \\
    &\hspace{3.35cm}=\lim_{I \rightarrow \infty}\Big(B_{\nu\nu}-2A_{21}A_{11}^{-1}B_{\theta\nu}+A_{21}A_{11}^{-1}B_{\theta\theta}(A_{21}A_{11}^{-1})^T\Big), \\
    &V(\theta_0,\nu_0.\nu'_0)_{(p+2),(p+2)}=\Big[\lim_{I \rightarrow \infty}A_{31}A_{11}^{-1},0,-1\Big]B(\theta_0,\nu_0.\nu'_0)\Big[\lim_{I \rightarrow \infty}A_{31}A_{11}^{-1},0,-1\Big]^T \\
    &\hspace{3.35cm}=\lim_{I \rightarrow \infty}\Big(B_{\nu'\nu'}-2A_{31}A_{11}^{-1}B_{\theta\nu'}+A_{31}A_{11}^{-1}B_{\theta\theta}(A_{31}A_{11}^{-1})^T\Big), \\
    &V(\theta_0,\nu_0.\nu'_0)_{(p+1),(p+2)}=\Big[\lim_{I \rightarrow \infty}A_{21}A_{11}^{-1},-1,0\Big]B(\theta_0,\nu_0.\nu'_0)\Big[\lim_{I \rightarrow \infty}A_{31}A_{11}^{-1},0,-1\Big]^T \\
    &\hspace{3.35cm}=\lim_{I \rightarrow \infty}\Big(B_{\nu\nu'}-A_{21}A_{11}^{-1}B_{\theta\nu'}-A_{31}A_{11}^{-1}B_{\theta\nu}+A_{21}A_{11}^{-1}B_{\theta\theta}(A_{31}A_{11}^{-1})^T\Big). \\
\end{split}
\end{equation*}
Note that the oracle form of $\widehat{\text{var}}(\widehat{\lambda}_{\diamond})$ is not directly available in practice, as it depends on the unknown true parameters, the moments of the derivatives of the $\boldsymbol{\psi}$ functions, and the corresponding limiting quantities as $I \rightarrow \infty$. To obtain a sample version, we replace these unknown quantities with their sample estimates. Specifically, let the resulting sample-based variance matrix be denoted by $\widehat{V}(\widehat{\theta},\widehat{\nu},\widehat{\nu}')$. Within $\widehat{V}(\widehat{\theta},\widehat{\nu},\widehat{\nu}')$, we have
\begin{align*}
    &\widehat{A}(\widehat{\theta},\widehat{\nu},\widehat{\nu}')=-\frac{1}{I} \sum_{i=1}^{I}\nabla_{\theta,\nu,\nu',}\boldsymbol\psi^{full}(\widehat{\theta},\widehat{\nu},\widehat{\nu}'), \\    
    &\widehat{B}(\widehat{\theta},\widehat{\nu},\widehat{\nu}')=\frac{1}{I} \sum_{i=1}^{I}\boldsymbol\psi^{full}(\widehat{\theta},\widehat{\nu},\widehat{\nu}')\boldsymbol\psi^{full}(\widehat{\theta},\widehat{\nu},\widehat{\nu}')^T.
\end{align*}
Therefore, we obtain the following variance estimator of $\widehat{\lambda}_{\diamond}$, denoted by $S^2_{\mathcal{M}}$. 
\begin{equation*}
    S^2_{\mathcal{M}}=\frac{1}{I}\widehat{V}(\widehat{\theta},\widehat{\nu},\widehat{\nu}')_{(p+1),(p+1)}+\frac{1}{I}\widehat{V}(\widehat{\theta},\widehat{\nu},\widehat{\nu}')_{(p+2),(p+2)}-\frac{2}{I}\widehat{V}(\widehat{\theta},\widehat{\nu},\widehat{\nu}')_{(p+1),(p+2)},
\end{equation*}
where
\begin{equation*}
\begin{split}
    &\widehat{V}(\widehat{\theta},\widehat{\nu},\widehat{\nu}')_{(p+1),(p+1)}=\Big[\widehat{A}_{21}\widehat{A}_{11}^{-1},-1,0\Big]\widehat{B}(\widehat{\theta},\widehat{\nu},\widehat{\nu}')\Big[\widehat{A}_{21}\widehat{A}_{11}^{-1},-1,0\Big]^T, \\
    &\widehat{V}(\widehat{\theta},\widehat{\nu},\widehat{\nu}')_{(p+2),(p+2)}=\Big[\widehat{A}_{31}\widehat{A}_{11}^{-1},0,-1\Big]\widehat{B}(\widehat{\theta},\widehat{\nu},\widehat{\nu}')\Big[\widehat{A}_{31}\widehat{A}_{11}^{-1},0,-1\Big]^T, \\
    &\widehat{V}(\widehat{\theta},\widehat{\nu},\widehat{\nu}')_{(p+1),(p+2)}=\Big[\widehat{A}_{21}\widehat{A}_{11}^{-1},-1,0\Big]\widehat{B}(\widehat{\theta},\widehat{\nu},\widehat{\nu}')\Big[\widehat{A}_{31}\widehat{A}_{11}^{-1},0,-1\Big]^T. \\
\end{split}
\end{equation*}
Here, $\widehat{A}_{11}$, $\widehat{A}_{21}$, $\widehat{A}_{31}$ and $\widehat{B}(\widehat{\theta},\widehat{\nu},\widehat{\nu}')$ denote the sample analogues of $A_{11}$, $A_{21}$, $A_{31}$ and $B(\theta_0,\nu_0,\nu'_0)$ defined above. Under independence of treatment assignments across matched sets, along with Assumptions 1 and 2 in the main text and Conditions S8--S20, Theorems 2 and 3 in the main text can be established via applying Theorem 2.2 of \citet{xu2021mestimator} to the above derivations.

\subsection*{B.7: Proof of Theorem 4}

\begin{proof}
Let $a=(0, \dots, 0, -1, 1)\in \mathbbm{R}^{p+2}$ be a vector that has value $1$ in the coordinate corresponding to $\nu$, value $-1$ in the coordinate corresponding to $\nu'$, and zeros elsewhere. Then, we have $\widehat{\lambda}_{\diamond}-\lambda=a(\widehat{\boldsymbol{\gamma}}-\boldsymbol{\gamma}_{0})^{T}$. By Theorem 2 in the main text and the continuous mapping theorem, $\sqrt{I}(\widehat{\lambda}_{\diamond}-\lambda)=a\sqrt{I}(\widehat{\boldsymbol{\gamma}}-\boldsymbol{\gamma}_{0})^{T}\xrightarrow{d}N(0,aV_{fp}a^{T}) \ \text{as} \ I\rightarrow\infty$. Consequently, we have 
\begin{equation*}
        \frac{\widehat{\lambda}_{\diamond}-\lambda}{\sqrt{\text{var}(\widehat{\lambda}_{\diamond}|\mathcal{Z})}} =\frac{\widehat{\lambda}_{\diamond}-\lambda}{\sqrt{(aV_{fp}a^{T})/I}}\xrightarrow{d}N(0,1) \ \text{as} \ I\rightarrow\infty.
\end{equation*}

Next, by Theorem 3 in the main text,  as $I\rightarrow\infty$, we have
\begin{equation*}
    \frac{\text{var}(\widehat{\lambda}_{\diamond}|\mathcal{Z})}{S^2_{\mathcal{M}}} \xrightarrow{p} \frac{\text{var}(\widehat{\lambda}_{\diamond}|\mathcal{Z})}{\widehat{\text{var}}(\widehat{\lambda}_{\diamond}|\mathcal{Z})} \in (0,1).
\end{equation*}
Finally, by Slutsky's theorem, we have
\vspace{-0.3cm}
\begin{align*}
    &\quad \ \lim_{I \rightarrow \infty}\text{pr}\big(\lambda \in CI^{\lambda}_{\mathcal{M}}|\mathcal{Z}\big) \\
    &=\lim\limits_{I \rightarrow \infty}\text{pr}\Big(\widehat{\lambda}_{\diamond}-\Phi^{-1}(1-\alpha/2)\sqrt{S_{\mathcal{M}}^2} \leq \lambda \leq \widehat{\lambda}_{\diamond}+\Phi^{-1}(1-\alpha/2)\sqrt{S_{\mathcal{M}}^2}\mid\mathcal{Z} \Big) \\  
    &=\lim\limits_{I \rightarrow \infty}\text{pr}\bigg(-\Phi^{-1}(1-\alpha/2) \leq \frac{\widehat{\lambda}_{\diamond}-\lambda}{\sqrt{S^2_{\mathcal{M}}}} \leq \Phi^{-1}(1-\alpha/2)\mid\mathcal{Z}  \bigg) \\ 
    &=\lim\limits_{I \rightarrow \infty}\text{pr}\bigg(-\Phi^{-1}(1-\alpha/2) \leq \frac{\widehat{\lambda}_{\diamond}-\lambda}{\sqrt{\text{var}(\widehat{\lambda}_{\diamond}\mid\mathcal{Z})}}\sqrt{\frac{\text{var}(\widehat{\lambda}_{\diamond}\mid\mathcal{Z})}{S^2_{\mathcal{M}}}} \leq \Phi^{-1}(1-\alpha/2)\mid\mathcal{Z}  \bigg) \\ 
    &\geq \lim\limits_{I \rightarrow \infty}\text{pr}\Bigg(-\Phi^{-1}(1-\alpha/2) \leq \frac{\widehat{\lambda}_{\diamond}-\lambda}{\sqrt{\text{var}(\widehat{\lambda}_{\diamond}\mid\mathcal{Z})}} \leq \Phi^{-1}(1-\alpha/2)\mid\mathcal{Z}  \Bigg) \\ 
    &=\Phi(\Phi^{-1}(1-\alpha/2))-\Phi(\Phi^{-1}(\alpha/2)) \\ 
    &=1-\alpha. 
\end{align*}
\end{proof}

\subsection*{B.8: Statement and Proof of Proposition S3}
\begin{proposition}\label{prop: unbiasdness_AIPPW} Under Assumptions 1 and 2 in the main text, we have $E(\widehat{\lambda}_{\dagger}\mid \mathcal{Z})=\lambda$.
\end{proposition}
\begin{proof}
For each matched set $i$, we have 
\begin{align*}
    &E(\widehat{\lambda}_{\dagger, i}\mid \mathcal{Z})\\
    =&E\Bigg\{\frac{1}{n_i}\sum_{j=1}^{n_i}\bigg\{\frac{Z_{ij}}{p_{ij}}
    \Big(Y_{ij}-\widehat{g}_1(\mathbf{x}_{ij})\Big)- 
    \frac{(1-Z_{ij})}{1-p_{ij}}
    \Big(Y_{ij}-\widehat{g}_0(\mathbf{x}_{ij})\Big)+\widehat{g}_1(\mathbf{x}_{ij})-\widehat{g}_0(\mathbf{x}_{ij})\bigg\}\mid \mathcal{Z}\Bigg\}\\
    =&\frac{1}{n_{i}}\Bigg\{\sum_{j=1}^{n_{i}}\frac{E(Z_{ij}\mid\mathcal{Z})}{p_{ij}}\Big(Y_{ij}(1)-\widehat{g}_1(\mathbf{x}_{ij})\Big)-\sum_{j=1}^{n_{i}}\frac{(1-E(Z_{ij}\mid \mathcal{Z})) }{1-p_{ij}}\Big(Y_{ij}(0)-\widehat{g}_0(\mathbf{x}_{ij})\Big)+\widehat{g}_1(\mathbf{x}_{ij})-\widehat{g}_0(\mathbf{x}_{ij})\Bigg\}\\
    =&\frac{1}{n_{i}}\sum_{j=1}^{n_{i}}\{Y_{ij}(1)-\widehat{g}_1(\mathbf{x}_{ij})-Y_{ij}(0)+\widehat{g}_0(\mathbf{x}_{ij})+\widehat{g}_1(\mathbf{x}_{ij})-\widehat{g}_0(\mathbf{x}_{ij})\}\\
    =&\frac{1}{n_{i}}\sum_{j=1}^{n_{i}}\{Y_{ij}(1)-Y_{ij}(0)\}.
\end{align*}
Therefore, we have 
\begin{equation*}       
E(\widehat{\lambda}_{\dagger}\mid\mathcal{Z})=E\Big(\sum_{i=1}^{I}\frac{n_{i}}{N} \widehat{\lambda}_{\dagger, i}\mid \mathcal{Z}\Big)=\sum_{i=1}^{I}\frac{n_{i}}{N}E(\widehat{\lambda}_{\dagger, i}\mid \mathcal{Z})=\frac{1}{N}\sum_{i=1}^{I}\sum_{j=1}^{n_{i}}\{Y_{ij}(1)-Y_{ij}(0)\}.
\end{equation*}
\end{proof}

\vspace{-0.6cm}
\subsection*{B.9: Statement and Proof of Proposition S4}
To prove Proposition S4, we need the following regularity condition:
\begin{condition}[Bounded Limiting Outcome Models]\label{condition: bounded outcomes for outcome models} As $I\rightarrow \infty$, we have $\widehat{g}_{1}(\mathbf{x}_{ij})\xrightarrow{a.s.} \widetilde{g}_1(\mathbf{x}_{ij})$ and $\widehat{g}_{0}(\mathbf{x}_{ij})\xrightarrow{a.s.} \widetilde{g}_0(\mathbf{x}_{ij})$, where $\widetilde{g}_1(\mathbf{x}_{ij})$ and $\widetilde{g}_0(\mathbf{x}_{ij})$ are bounded functions.
\end{condition}
\begin{proposition}\label{prop: oracle AIPPW}
Under independence of treatment assignments across matched sets, Assumptions 1 and 2 in the main text, Conditions~\ref{condition: bounded outcomes} and~\ref{condition: bounded outcomes for outcome models}, if $\widehat{g}_{1}(\mathbf{x}_{ij})\xrightarrow{a.s.} \widetilde{g}_1(\mathbf{x}_{ij})$ and $\widehat{g}_{0}(\mathbf{x}_{ij})\xrightarrow{a.s.} \widetilde{g}_0(\mathbf{x}_{ij})$, then the oracle AIPPW estimator $\widehat{\lambda}_{\dagger}$ is strongly consistent for $\lambda$ as $I\to\infty$, that is, $\widehat{\lambda}_{\dagger}\xrightarrow{a.s.}\lambda$. 
\end{proposition}

\begin{proof}
    By Conditions~\ref{condition: bounded outcomes} and~\ref{condition: bounded outcomes for outcome models}, there exists a constant $M^{*}<\infty$ such that $|Y_{ij}-\widehat{g}_1(\mathbf{x}_{ij})|\leq M^{*}$ and $|Y_{ij}-\widehat{g}_0(\mathbf{x}_{ij})|\leq M^{*}$ almost surely for all $I$ and for all $i=1,\dots,I,j=1,\dots,n_i$. Let $K_i=\sum_{j=1}^{n_i}G_{ij}$, where $G_{ij}=\frac{Z_{ij}}{p_{ij}}
    \Big(Y_{ij}-\widehat{g}_1(\mathbf{x}_{ij})\Big)-\frac{(1-Z_{ij})}{1-p_{ij}}\Big(Y_{ij}-\widehat{g}_0(\mathbf{x}_{ij})\Big)+\widehat{g}_1(\mathbf{x}_{ij})-\widehat{g}_0(\mathbf{x}_{ij})-\{Y_{ij}(1)-Y_{ij}(0)\}$. Then, for any $i,j$, we have 
    \vspace{-0.3cm}
    \begin{equation*}
    \begin{split}
        &\Bigg|\frac{Z_{ij}}{p_{ij}}\Big(Y_{ij}-\widehat{g}_1(\mathbf{x}_{ij})\Big)-\frac{(1-Z_{ij})}{1-p_{ij}}\Big(Y_{ij}-\widehat{g}_0(\mathbf{x}_{ij})\Big)\Bigg|\leq\frac{\Big|\Big(Y_{ij}-\widehat{g}_1(\mathbf{x}_{ij})\Big)\Big|}{p_{ij}}+\frac{\Big|\Big(Y_{ij}-\widehat{g}_0(\mathbf{x}_{ij})\Big)\Big|}{1-p_{ij}}\leq \frac{2M^*}{q_1}
    \end{split}
    \end{equation*}
    and
    \vspace{-0.3cm}
    \begin{equation*}
        \big|\widehat{g}_1(\mathbf{x}_{ij})-\widehat{g}_0(\mathbf{x}_{ij})-\{Y_{ij}(1)-Y_{ij}(0)\}\big|\leq2M^*, 
    \end{equation*}
    which implies that
    \begin{equation*}
        |G_{ij}|\leq\frac{2M^*}{q_1}+2M^*=L^*.
    \end{equation*}
    Therefore, we have $\text{var}(G_{ij})\leq E(G_{ij}^2)\leq L{^*}^2 < \infty$. By the Cauchy-Schwarz inequality, we have $H_i^2=\Big(\sum_{j=1}^{n_i}G_{ij}\Big)^2\leq n_i\sum_{j=1}^{n_i}G_{ij}^2$. Therefore, we have $\text{var}(H_i)\leq E(H_i^2)\leq n_i\sum_{j=1}^{n_i}E(G_{ij}^2)\leq n_i^2L{^*}^2\leq C_{1}^{2}L{^*}^{2}$, which implies that $\lim_{i\rightarrow\infty}\sum_{i=1}^{I}i^{-2}\text{var}(H_i)<\infty$.
    
    For each matched set $i$, as shown in Appendix B.8, we have $E(K_i|\mathcal{Z})=0$. Since $K_i$ are independent random variables, by Lemma~\ref{lemma: kolmogorov}, we have $\frac{1}{I}\sum_{i=1}^{I}K_i\xrightarrow{a.s.}0.$
    Note that $1\leq N/I \leq C_1$, so $I/N$ is a constant strictly between 0 and 1. Therefore, we have
    \vspace{-0.3cm}
    \begin{equation*}
        \widehat{\lambda}_{\dagger}-\lambda=\frac{1}{N}\sum_{i=1}^{I}K_i=\frac{I}{N}\cdot\frac{1}{I}\sum_{i=1}^{I}K_i\xrightarrow{a.s.}0.
    \end{equation*}
    \vspace{-0.3cm}
\end{proof}

\subsection*{B.10: Proof of Proposition 3}

\begin{proof}
Define
\begin{equation*}
    b_{ij}=\bigg|\frac{Z_{ij}(\widehat{p}_{ij}-p_{ij})}{p_{ij}\widehat{p}_{ij}}+\frac{(1-Z_{ij})(\widehat{p}_{ij}-p_{ij})}{p_{ij}\widehat{p}_{ij}}\bigg|,\quad c_{ij}=\bigg|Z_{ij}\Big(Y_{ij}-\widehat{g}_1(\mathbf{x}_{ij})\Big)+(1-Z_{ij})\Big(Y_{ij}-\widehat{g}_0(\mathbf{x}_{ij})\Big)\bigg|.
\end{equation*}
    Under Conditions~\ref{condition: infimum} and \ref{condition: bounded outcomes} stated in Section~B.4, for any $i,j$, we have
\begin{equation*}
        \begin{split}        
            &\bigg|\Big\{\frac{Z_{ij}}{p_{ij}}\Big(Y_{ij}-\widehat{g}_1(\mathbf{x}_{ij})\Big)-\frac{(1-Z_{ij})}{1-p_{ij}}\Big(Y_{ij}-\widehat{g}_0(\mathbf{x}_{ij})\Big)\Big\}-\Big\{\frac{Z_{ij}}{\widehat{p}_{ij}}\Big(Y_{ij}-\widehat{g}_1(\mathbf{x}_{ij})\Big)-\frac{(1-Z_{ij})}{1-\widehat{p}_{ij}}\Big(Y_{ij}-\widehat{g}_0(\mathbf{x}_{ij})\Big)\Big\}\bigg| \\  
            =&\bigg|Z_{ij}\bigg(\frac{1}{p_{ij}}-\frac{1}{\widehat{p}_{ij}}\bigg)\Big(Y_{ij}-\widehat{g}_1(\mathbf{x}_{ij})\Big)-\big(1-Z_{ij}\big)\bigg(\frac{1}{p_{ij}}-\frac{1}{\widehat{p}_{ij}}\bigg)\Big(Y_{ij}-\widehat{g}_0(\mathbf{x}_{ij})\Big)\bigg| \\
            =&\bigg|\frac{Z_{ij}(\widehat{p}_{ij}-p_{ij})}{p_{ij}\widehat{p}_{ij}}\Big(Y_{ij}-\widehat{g}_1(\mathbf{x}_{ij})\Big)-\frac{(1-Z_{ij})(\widehat{p}_{ij}-p_{ij})}{p_{ij}\widehat{p}_{ij}}\Big(Y_{ij}-\widehat{g}_0(\mathbf{x}_{ij})\Big)\bigg|=b_{ij}c_{ij}.
        \end{split}
    \end{equation*}
Then we have 
\begin{equation*}
\begin{split}
    |\widehat{\lambda}_{\ddagger}-\widehat{\lambda}_{\dagger}|&\leq\frac{1}{N}\sum_{i=1}^{I}\sum_{j=1}^{n_i}b_{ij}c_{ij} \\
    &\leq\frac{1}{N}\bigg(\sum_{i=1}^{I}\sum_{j=1}^{n_i}b_{ij}^2\bigg)^{1/2}\bigg(\sum_{i=1}^{I}\sum_{j=1}^{n_i}c_{ij}^2\bigg)^{1/2}
    \quad \text{(Cauchy-Schwarz inequality)} \\
    &=\frac{1}{N}\Bigg[\sum_{i=1}^{I}\sum_{j=1}^{n_i}\bigg\{\frac{Z_{ij}(\widehat{p}_{ij}-p_{ij})^2}{p_{ij}^2\widehat{p}_{ij}^2}+\frac{(1-Z_{ij})(\widehat{p}_{ij}-p_{ij})^2}{p_{ij}^2\widehat{p}_{ij}^2}\bigg\}\Bigg]^{1/2} \\
    &\quad \quad\times\Bigg[\sum_{i=1}^{I}\sum_{j=1}^{n_i}\bigg\{Z_{ij}\Big(Y_{ij}-\widehat{g}_1(\mathbf{x}_{ij})\Big)^2+(1-Z_{ij})\Big(Y_{ij}-\widehat{g}_0(\mathbf{x}_{ij})\Big)^2\bigg\}\Bigg]^{1/2}. \\
    &=\frac{1}{Nq_{1}q_{2}}\|\widehat{\mathbf{p}}-\mathbf{p}\|_2 \times\Bigg[\sum_{i=1}^{I}\sum_{j=1}^{n_i}\bigg\{Z_{ij}\Big(Y_{ij}-\widehat{g}_1(\mathbf{x}_{ij})\Big)^2+(1-Z_{ij})\Big(Y_{ij}-\widehat{g}_0(\mathbf{x}_{ij})\Big)^2\bigg\}\Bigg]^{1/2}.
\end{split}
\end{equation*}
If either the propensity score model is correctly specified (i.e. $\widehat{p}_{ij}\xrightarrow{a.s.}p_{ij}$) or the outcome models are correctly specified (i.e., $\widehat{g}_1(\mathbf{x}_{ij})\xrightarrow{a.s.} Y_{ij}(1)$ and $\widehat{g}_0(\mathbf{x}_{ij})\xrightarrow{a.s.} Y_{ij}(0)$), then $\widehat{\lambda}_{\ddagger}-\widehat{\lambda}_{\dagger}\xrightarrow{a.s.}0$. Because $\widehat{\lambda}_{\ddagger}-\widehat{\lambda}_{\dagger}\xrightarrow{a.s.}0$ and $\widehat{\lambda}_{\dagger}-\lambda\xrightarrow{a.s.}0$, Proposition 3 follows from the transitivity of almost surely convergence.
\end{proof}

\subsection*{B.11: Proof of Proposition S1}

Let $\overline{Y}=\sum_{i=1}^{I}\sum_{j=1}^{n_{i}}Y_{ij}(1)-Y_{ij}(0)$ and $\overline{D}=\sum_{i=1}^{I}\sum_{j=1}^{n_{i}}D_{ij}(1)-D_{ij}(0)$. Next, we define $T=\sum_{i=1}^{I}T_{i}$, where $T_{i}=\sum_{j=1}^{n_{i}}\frac{1}{p_{ij}(1-p_{ij})}Y_{ij}(Z_{ij}-p_{ij})$. Then, we denote $S=\sum_{i=1}^{I}S_{i}$, where $S_{i}=\sum_{j=1}^{n_{i}}\frac{1}{p_{ij}(1-p_{ij})}D_{ij}(Z_{ij}-p_{ij})$. That is, the bias-corrected Wald estimator $\widehat{\theta}_{*}=T/S$. 

To prove Proposition S1, we need the following conditions. 

\begin{condition}\label{condition: convergence of estimand IPPW} 
(Convergence of Finite-Population Means) As $I\to\infty$, the $I^{-1}\overline{D}$ converges to some positive finite value, and the $I^{-1}\overline{Y}$ converges to some finite value.
\end{condition}

\begin{condition}\label{condition: bounded IPPW}
  (No Extreme Second Moments or Variances) For any fixed $I$ and for all $i=1,\dots,I$, we have $E(T_{i}^{2}\mid \mathcal{Z})<\infty$ and $E(S_{i}^{2}\mid \mathcal{Z})<\infty$. In addition, we have $\lim_{I\rightarrow\infty}\sum_{i=1}^{I}i^{-2}\text{var}(T_{i}\mid \mathcal{Z})<\infty$ and $\lim_{I\rightarrow\infty} \sum_{i=1}^{I}i^{-2}\text{var}(S_{i}\mid \mathcal{Z})<\infty$. 
\end{condition}

Now, we are ready to prove Proposition S1.
\begin{proof}
    Note that

\vspace{-1.2cm}
\begin{align*}
    E(T\mid\mathcal{Z})&=E\bigg\{\sum_{i=1}^{I}\sum_{j=1}^{n_{i}}\frac{1}{p_{ij}(1-p_{ij})}Y_{ij}(Z_{ij}-p_{ij})\mid\mathcal{Z}\bigg\} \\ 
    &=\sum_{i=1}^{I}\sum_{j=1}^{n_{i}}\frac{1}{p_{ij}(1-p_{ij})}E\big\{Y_{ij}(Z_{ij}-p_{ij})\mid\mathcal{Z}\big\} \\ 
    &=\sum_{i=1}^{I}\sum_{j=1}^{n_{i}}\frac{1}{p_{ij}(1-p_{ij})}\big\{Y_{ij}(1)(1-p_{ij})\times\text{pr}(Z_{ij}=1\mid\mathcal{Z})-Y_{ij}(0)p_{ij}\times\text{pr}(Z_{ij}=0\mid\mathcal{Z})\big\} \\ 
    &=\sum_{i=1}^{I}\sum_{j=1}^{n_{i}}Y_{ij}(1)-Y_{ij}(0).
\end{align*}
\vspace{-1.2cm}

\noindent Hence, we have $E(T\mid\mathcal{Z})=E\big(\sum_{i=1}^{I}T_{i}\mid\mathcal{Z}\big)=\overline{Y}$.

Since $T_{i}$ are independent but not identically distributed random variables, by Condition~\ref{condition: bounded IPPW} and Lemma~\ref{lemma: kolmogorov}, we have $I^{-1}\sum_{i=1}^{I}T_{i}-I^{-1}E\big(\sum_{i=1}^{I}T_{i}\mid \mathcal{Z}\big)\xrightarrow{a.s.}0$ as $I \to \infty$. Since $E\big(\sum_{i=1}^{I}T_{i}\mid\mathcal{Z}\big)=\overline{Y}$, we have $I^{-1}T-I^{-1}\overline{Y}\xrightarrow{a.s.}0$ as $I \to \infty$. 
Similarly, we can obtain that $I^{-1}S-I^{-1}\overline{D}\xrightarrow{a.s.}0$ as $I\to\infty$. Note that

\vspace{-0.4cm}
\begin{equation*}
    \frac{T}{S}-\frac{\overline{Y}}{\overline{D}}=\frac{T\overline{D}-S\overline{Y}}{S\overline{D}} =\frac{(I^{-1}\overline{D})I^{-1}(T-\overline{Y})+(I^{-1}\overline{Y})I^{-1}(\overline{D}-S)}{(I^{-1}S)(I^{-1}\overline{D})}.
\end{equation*}
\vspace{-0.8cm}

\noindent Under Condition~\ref{condition: convergence of estimand IPPW}, since $I^{-1}T-I^{-1}\overline{Y}\xrightarrow{a.s.}0$ and $I^{-1}S-I^{-1}\overline{D}\xrightarrow{a.s.}0$, we have $I^{-1}T \xrightarrow{a.s.}\lim_{I\rightarrow \infty} I^{-1}\overline{Y}<\infty$ and $I^{-1}S \xrightarrow{a.s.}\lim_{I\rightarrow \infty} I^{-1}\overline{D}\in (0,\infty)$. Therefore, as $I\rightarrow\infty$, we have 

\vspace{-0.8cm}
\begin{equation*}
  \widehat{\theta}_{*}-\theta = \frac{T}{S}-\frac{\overline{Y}}{\overline{D}}=\frac{(I^{-1}\overline{D})I^{-1}(T-\overline{Y})+(I^{-1}\overline{Y})I^{-1}(\overline{D}-S)}{(I^{-1}S)(I^{-1}\overline{D})}\xrightarrow{a.s.}0.
\end{equation*}
\vspace{-0.8cm}

\noindent That is, we have proven Proposition S1.
\end{proof}

\subsection*{B.12: Statement and Proof of Proposition S4}
\begin{proposition}\label{prop: variance2}
    Assuming the IV ignorability assumption and independence of treatment assignments across matched sets, we have $E\{V_{*}^2(\theta_{0})\mid \mathcal{Z}\}\geq \text{var}\{A_{*}(\theta_{0})\mid \mathcal{Z}\}$ under $H_{0}:\theta=\theta_{0}$.
\end{proposition}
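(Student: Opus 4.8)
The plan is to invoke the classical argument showing that a sample-variance-type estimator built from independent (but not necessarily identically distributed) summands is, in expectation, conservative for the variance of their average --- the same device underlying Neymanian analysis of randomized experiments and the variance estimators of \citet{baiocchi2010building}, \citet{kang2016full}, and \citet{fogarty2018mitigating} --- adapted here to the bias-corrected statistic $A_{*}(\theta_{0})$.

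First I would set up notation: write $\mu_{A,i}=E\{A_{*,i}(\theta_{0})\mid\mathcal{Z}\}$, $\sigma_{A,i}^{2}=\text{var}\{A_{*,i}(\theta_{0})\mid\mathcal{Z}\}$, and $\bar{\mu}_{A}=I^{-1}\sum_{i=1}^{I}\mu_{A,i}=E\{A_{*}(\theta_{0})\mid\mathcal{Z}\}$. By independence of treatment assignments across matched sets, $\text{var}\{A_{*}(\theta_{0})\mid\mathcal{Z}\}=I^{-2}\sum_{i=1}^{I}\sigma_{A,i}^{2}$. Next, I would use the algebraic identity $\sum_{i=1}^{I}\{A_{*,i}(\theta_{0})-A_{*}(\theta_{0})\}^{2}=\sum_{i=1}^{I}A_{*,i}(\theta_{0})^{2}-I\,A_{*}(\theta_{0})^{2}$, take conditional expectations term by term using $E\{A_{*,i}(\theta_{0})^{2}\mid\mathcal{Z}\}=\sigma_{A,i}^{2}+\mu_{A,i}^{2}$ and $E\{A_{*}(\theta_{0})^{2}\mid\mathcal{Z}\}=I^{-2}\sum_{i=1}^{I}\sigma_{A,i}^{2}+\bar{\mu}_{A}^{2}$, and collect terms to obtain
\[
E\Big\{\textstyle\sum_{i=1}^{I}(A_{*,i}(\theta_{0})-A_{*}(\theta_{0}))^{2}\,\Big|\,\mathcal{Z}\Big\}=(1-I^{-1})\sum_{i=1}^{I}\sigma_{A,i}^{2}+\sum_{i=1}^{I}(\mu_{A,i}-\bar{\mu}_{A})^{2}.
\]
Dividing by $I(I-1)$ and noting $(1-I^{-1})/\{I(I-1)\}=I^{-2}$, the first term on the right equals $\text{var}\{A_{*}(\theta_{0})\mid\mathcal{Z}\}$ exactly, while the second term is $\{I(I-1)\}^{-1}$ times a sum of squares and hence nonnegative; this yields $E\{V_{*}^{2}(\theta_{0})\mid\mathcal{Z}\}\geq\text{var}\{A_{*}(\theta_{0})\mid\mathcal{Z}\}$.

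The maintained assumptions enter only in giving $\mu_{A,i}$ a transparent form: decomposing $Y_{ij}-\theta_{0}D_{ij}=Z_{ij}\{Y_{ij}(1)-\theta_{0}D_{ij}(1)\}+(1-Z_{ij})\{Y_{ij}(0)-\theta_{0}D_{ij}(0)\}$ and applying the IV randomization probability $E(Z_{ij}\mid\mathcal{Z})=p_{ij}$ (the analogue of (\ref{eqn: randomization assumption}) under IV ignorability and exact matching on the IV--outcome covariates) gives $\mu_{A,i}=\sum_{j=1}^{n_{i}}[\{Y_{ij}(1)-\theta_{0}D_{ij}(1)\}-\{Y_{ij}(0)-\theta_{0}D_{ij}(0)\}]$, so that under $H_{0}:\theta=\theta_{0}$ the grand mean $\bar{\mu}_{A}=I^{-1}(\overline{Y}-\theta_{0}\overline{D})=0$ and the nonnegative surplus reduces to $\{I(I-1)\}^{-1}\sum_{i=1}^{I}\mu_{A,i}^{2}$. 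There is no genuine obstacle here --- the inequality holds irrespective of $H_{0}$ since the surplus term is a sum of squares either way; the only points requiring care are the bookkeeping with the $I$ versus $I-1$ normalizations and checking that the finite-second-moment requirement (the analogue of regularity condition S2 applied to $A_{*,i}(\theta_{0})$) makes every conditional expectation above well defined.
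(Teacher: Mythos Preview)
Your proposal is correct and follows essentially the same approach as the paper's proof: both compute $E\{V_{*}^{2}(\theta_{0})\mid\mathcal{Z}\}$ by expanding the sum of squared deviations, use independence across matched sets to evaluate the second moments, and identify the excess over $\text{var}\{A_{*}(\theta_{0})\mid\mathcal{Z}\}$ as the nonnegative term $\{I(I-1)\}^{-1}\sum_{i=1}^{I}(\mu_{A,i}-\bar{\mu}_{A})^{2}$. The only cosmetic difference is that you invoke the identity $\sum_{i}(A_{*,i}-A_{*})^{2}=\sum_{i}A_{*,i}^{2}-I\,A_{*}^{2}$ up front, whereas the paper expands the square directly and handles the cross term $E\{A_{*,i}(\theta_{0})A_{*}(\theta_{0})\mid\mathcal{Z}\}$ explicitly; your observation that the inequality holds regardless of $H_{0}$ is also consistent with the paper's derivation.
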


\begin{proof}
    Let $\kappa_{i,\theta}=E\{A_{*,i}(\theta)\mid\mathcal{Z}\}$, $\kappa_{\theta}=E\{A_{*}(\theta)\mid\mathcal{Z}\}$, and $\omega_{i,\theta}=\text{var}\{A_{*,i}(\theta)\mid\mathcal{Z}\}$. Then, we have

\vspace{-1.5cm}
\begin{align*}
    E\{V_{*}^2(\theta)\mid\mathcal{Z}\} &=\frac{1}{I(I-1)}\sum_{i=1}^{I}E\big[\{A_{*,i}(\theta)-A_{*}(\theta)\}^2\mid\mathcal{Z}\big] \\ 
    &=\frac{1}{I(I-1)}\sum_{i=1}^{I}\Big[E\{A_{*,i}^2(\theta)\mid\mathcal{Z}\}+E\{A_{*}^2(\theta)\mid\mathcal{Z}\}-2E\{A_{*,i}(\theta)A_{*}(\theta)\mid\mathcal{Z}\}\Big] \\ 
    &=\frac{1}{I(I-1)}\sum_{i=1}^{I}\Bigg\{(\kappa_{i,\theta}^2+\omega_{i,\theta})+\Bigg(\kappa_{\theta}^2+\frac{1}{I^2}\sum_{j=1}^{I}\omega_{j,\theta}\Bigg)-\frac{2}{I}\Bigg(\kappa_{i,\theta}^2+\omega_{i,\theta}+\sum_{j\neq i}\kappa_{i,\theta}\kappa_{j,\theta}\Bigg)\Bigg\} \\ 
    &=\frac{1}{I(I-1)}\sum_{i=1}^{I}\Bigg(\omega_{i,\theta}-\frac{2}{I}\omega_{i,\theta}+\frac{1}{I^2}\sum_{j=1}^{I}\omega_{j,\theta}\Bigg)+\frac{1}{I(I-1)}\sum_{i=1}^{I}\Bigg(\kappa_{i,\theta}^2+\kappa_{\theta}^2-\frac{2}{I}\sum_{j=1}^{I}\kappa_{i,\theta}\kappa_{j,\theta}\Bigg) \\ 
    &=\bigg(\frac{I^2-2I+I}{I(I-1)}\bigg)\frac{1}{I^2}\sum_{i=1}^{I}\omega_{i,\theta}+\frac{1}{I(I-1)}\sum_{i=1}^{I}(\kappa_{i,\theta}-\kappa_{\theta})^2 \\ 
    &=\frac{1}{I^2}\sum_{i=1}^{I}\omega_{i,\theta}+\frac{1}{I(I-1)}\sum_{i=1}^{I}(\kappa_{i,\theta}-\kappa_{\theta})^2.
\end{align*}
\vspace{-1cm}

\noindent Since matched sets are independent and $A_{*}(\theta)=I^{-1}\sum_{i=1}^{I}A_{*,i}(\theta)$, we have $\text{var}\{A_{*}(\theta)\mid\mathcal{Z}\}=I^{-2}\sum_{i=1}^{I}\text{var}\{A_{*,i}(\theta)\mid\mathcal{Z}\}=I^{-2}\sum_{i=1}^{I}\omega_{i,\theta}$. Therefore, we have 

\vspace{-0.7cm}
\begin{equation*}
E\{V_{*}^2(\theta)\mid\mathcal{Z}\}-\text{var}\{A_{*}(\theta)\mid\mathcal{Z}\}=\frac{1}{I(I-1)}\sum_{i=1}^{I}(\kappa_{i,\theta}-\kappa_{\theta})^2\geq 0.    
\end{equation*}
\vspace{-1cm}

\end{proof}

\vspace{-0.5cm}
\subsection*{B.13: Proof of Theorem S1}

To prove Theorem S1, we need the following regularity conditions.
\begin{condition}[Non-Degenerate Test Statistics]\label{condition: non-degenerate}
    For any $I$, we have $\text{var}\{A_{*}(\theta)\mid \mathcal{Z}\}>0$. Also, we have $\lim\inf_{I\rightarrow\infty}\text{var}\{A_{*}(\theta)\mid \mathcal{Z}\}>0$.
\end{condition}

\begin{condition}[No Extreme Third Moments]\label{condition: lim}
    We have

    \vspace{-0.7cm}
    \begin{equation*}
        \limsup\limits_{I\to\infty}\frac{\sum_{i=1}^{I}E\{|A_{*,i}(\theta)-\kappa_{i,\theta}|^{3}\mid\mathcal{Z}\}}{[\sum_{i=1}^{I}\text{var}\{A_{*,i}(\theta)\mid\mathcal{Z}\}]^{3/2}}=0.
    \end{equation*}
    \vspace{-1.1cm}
    
\end{condition}
\begin{condition}[No Extreme Fourth Moments]\label{condition: slow}
 As $I\rightarrow \infty$, we have $I^{-1}\sum_{i=1}^{I}E\{A_{*,i}^{4}(\theta)\mid\mathcal{Z}\}=o(I)$.
\end{condition}
Conditions~\ref{condition: lim} and \ref{condition: slow} are commonly used moment conditions in matched IV studies (\citealp{baiocchi2010building, kang2016full}). Also, recall the following form of the Lyapunov central limit theorem:
\begin{lemma}\label{lem: Lyapunov}
(Lyapunov Central Limit Theorem): Suppose we have a sequence of independent random variables $X_i, 1\leq i \leq n$, each unit $X_{i}$ has finite expected value $\mu_i$ and finite variance $\sigma_{i}^2$. Let $S_{n}^2=\sum_{i=1}^{n}\sigma_{i}^2$. If the sequence of $X_{i}$ satisfies

\vspace{-0.7cm}
\begin{equation*}
    \lim_{n\to\infty}\frac{1}{S_{n}^{2+\delta}}\sum_{i=1}^{n}E\{|X_i-\mu_i|^{2+\delta}\}=0 \quad \text{for some $\delta>0$,}
\end{equation*}
\vspace{-1.2cm}

\noindent we have 

\vspace{-0.7cm}
\begin{equation*}
    \text{$\frac{1}{S_n}\sum_{i=1}^{n}(X_i-\mu_i)\xrightarrow{d}N(0,1)$ as $n\to\infty$.}
\end{equation*}
\vspace{-1.2cm}

\end{lemma}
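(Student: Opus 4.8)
The plan is to show that, evaluated at the true effect ratio $\theta$, the studentized statistic $A_{*}(\theta)/\sqrt{V_{*}^2(\theta)}$ converges in distribution to an $N(0,c)$ law with $c\le 1$, so that the coverage probability of $CS^{\theta}_{*}$ is asymptotically at least $1-\alpha$. The first step is to locate the center of $A_{*}(\theta)$. Writing $Y_{ij}-\theta D_{ij}$ in terms of the adjusted potential outcomes $Y_{ij}(z)-\theta D_{ij}(z)$ and using the IV ignorability assumption (so that $\text{pr}(Z_{ij}=1\mid\mathcal{Z})=p_{ij}$), the same within-matched-set inverse-probability cancellation used in the computation of $E(T\mid\mathcal{Z})$ in the proof of Proposition~\ref{prop: consistency} gives $\kappa_{i,\theta}:=E\{A_{*,i}(\theta)\mid\mathcal{Z}\}=\sum_{j=1}^{n_{i}}\{(Y_{ij}(1)-Y_{ij}(0))-\theta(D_{ij}(1)-D_{ij}(0))\}$. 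Summing over $i$ yields $\kappa_{\theta}:=E\{A_{*}(\theta)\mid\mathcal{Z}\}=I^{-1}(\overline{Y}-\theta\overline{D})$, which equals zero at the true $\theta=\overline{Y}/\overline{D}$ by the definition of the effect ratio and the IV relevance assumption (which guarantees $\overline{D}>0$).

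Second, I would establish asymptotic normality of the centered numerator. Since the IV assignments are independent across matched sets, the $A_{*,i}(\theta)$ are independent, and I would apply the Lyapunov central limit theorem (Lemma~\ref{lem: Lyapunov}) with $\delta=1$ to the array $A_{*,i}(\theta)-\kappa_{i,\theta}$: the required Lyapunov ratio is exactly the quantity controlled by Condition~S4 (\ref{condition: lim}), so at the true $\theta$ (where $\sum_{i}\kappa_{i,\theta}=\overline{Y}-\theta\overline{D}=0$) we obtain $A_{*}(\theta)/\sqrt{\text{var}(A_{*}(\theta)\mid\mathcal{Z})}\xrightarrow{d}N(0,1)$.

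Third, I would control the variance estimator. Proposition~S2 (\ref{prop: variance2}) already supplies the exact decomposition $E\{V_{*}^2(\theta)\mid\mathcal{Z}\}=\text{var}(A_{*}(\theta)\mid\mathcal{Z})+\{I(I-1)\}^{-1}\sum_{i=1}^{I}(\kappa_{i,\theta}-\kappa_{\theta})^2$, so the estimator is conservative with a nonnegative bias. To turn this into a statement about $V_{*}^2(\theta)$ itself rather than only its mean, I would show that $IV_{*}^2(\theta)$ concentrates: expanding $\{I(I-1)\}^{-1}\sum_{i}(A_{*,i}(\theta)-A_{*}(\theta))^2$ and bounding its conditional variance, the leading contribution is of order $I^{-2}\sum_{i}E\{A_{*,i}^4(\theta)\mid\mathcal{Z}\}=I^{-1}\{I^{-1}\sum_{i}E(A_{*,i}^4(\theta)\mid\mathcal{Z})\}$, which is $o(1)$ by Condition~S5 (\ref{condition: slow}); Chebyshev's inequality then gives $IV_{*}^2(\theta)-IE\{V_{*}^2(\theta)\mid\mathcal{Z}\}\xrightarrow{p}0$. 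Combined with the convergence of the relevant finite-population averages from Conditions~S1--S3 (\ref{condition: convergence of estimand}, \ref{condition: bounded}, \ref{condition: non-degenerate}), the ratio $\text{var}(A_{*}(\theta)\mid\mathcal{Z})/V_{*}^2(\theta)$ converges in probability to a constant $c\in(0,1]$, the upper bound $1$ coming from nonnegativity of the bias term and positivity from the non-degeneracy Condition~S3.

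Finally, writing $A_{*}(\theta)/\sqrt{V_{*}^2(\theta)}=\{A_{*}(\theta)/\sqrt{\text{var}(A_{*}(\theta)\mid\mathcal{Z})}\}\cdot\sqrt{\text{var}(A_{*}(\theta)\mid\mathcal{Z})/V_{*}^2(\theta)}$ and applying Slutsky's theorem gives $A_{*}(\theta)/\sqrt{V_{*}^2(\theta)}\xrightarrow{d}N(0,c)$ with $c\le 1$. Hence, for $Z\sim N(0,1)$, $\lim_{I\to\infty}\text{pr}(|A_{*}(\theta)/\sqrt{V_{*}^2(\theta)}|\le\Phi^{-1}(1-\alpha/2)\mid\mathcal{Z})=\text{pr}(|\sqrt{c}\,Z|\le\Phi^{-1}(1-\alpha/2))\ge 1-\alpha$, since $\sqrt{c}\le 1$; as $\theta\in CS^{\theta}_{*}$ is exactly this event, the claimed $\liminf$ bound follows. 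The main obstacle will be the third step: because the centering $A_{*}(\theta)$ couples all matched sets, bounding $\text{var}(IV_{*}^2(\theta)\mid\mathcal{Z})$ requires carefully tracking the cross terms and leaning on the fourth-moment Condition~S5, in the same spirit as (but more delicate than) the fourth-term variance bound in Lemma~\ref{lem: converge in p}.
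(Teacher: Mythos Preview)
Your proposal does not address the stated lemma at all. The statement you were asked to prove is Lemma~\ref{lem: Lyapunov}, the Lyapunov Central Limit Theorem itself --- a classical probability result about a general sequence of independent random variables $X_1,\dots,X_n$. The paper does not prove this lemma; it simply recalls it as a standard tool (``recall the following form of the Lyapunov central limit theorem'') and then invokes it inside the proof of Theorem~2. What you have written is instead a proof sketch for Theorem~2: you discuss $A_{*}(\theta)$, the effect ratio, IV ignorability, Conditions~S1--S5, Proposition~S2, and so on, none of which appear in the lemma statement. A proof of the Lyapunov CLT would need to verify the Lindeberg condition from the Lyapunov condition (via Markov's inequality on the truncated tails) and then appeal to the Lindeberg--Feller CLT, or give some other classical argument; it has nothing to do with matched sets or instrumental variables.

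If your intent was actually to sketch the proof of Theorem~2, then your outline is broadly in line with the paper's argument (center $A_{*}(\theta)$, apply Lyapunov, show the variance estimator is asymptotically conservative, combine via Slutsky). There is one substantive discrepancy, however. You claim that $\text{var}(A_{*}(\theta)\mid\mathcal{Z})/V_{*}^2(\theta)$ converges in probability to a constant $c\in(0,1]$ and then conclude $A_{*}(\theta)/V_{*}(\theta)\xrightarrow{d}N(0,c)$. The paper does not establish convergence of this ratio to a constant; under the stated conditions the bias term $\{I(I-1)\}^{-1}\sum_i(\kappa_{i,\theta}-\kappa_\theta)^2$ need not converge, so no single limit $c$ is guaranteed. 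Instead, the paper proves only the one-sided bound $\text{pr}\{IV_{*}^2(\theta)-I\omega_\theta\le -\epsilon\mid\mathcal{Z}\}\to 0$, which yields $\limsup_{I\to\infty}\text{pr}\{A_{*}(\theta)/V_{*}(\theta)\ge t\mid\mathcal{Z}\}\le\Phi(-t)$ for $t>0$ (and symmetrically for the lower tail), and from that obtains the $\liminf$ coverage statement directly. Your stronger claim would require additional convergence assumptions that are not part of Conditions~S1--S5.
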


Then, we are ready to prove Theorem 2, of which the idea is to generalize the arguments in \citet{baiocchi2010building} and \citet{kang2016full} from the exact matching to the potentially inexact matching case.
\begin{proof}
Let $h_{i,\theta}=E\{A_{*,i}^2(\theta)\mid\mathcal{Z}\}$, and $\omega_{\theta}=\text{var}\{A_{*}(\theta)\mid\mathcal{Z}\}$. We have $E\{I^{-1}\sum_{i=1}^{I}A_{*,i}^2(\theta)\mid\mathcal{Z}\}=I^{-1}\sum_{i=1}^{I}h_{i,\theta}$. Note that $\text{var}\{I^{-1}\sum_{i=1}^{I}A_{*,i}^2(\theta)\mid\mathcal{Z}\}\leq I^{-2}\sum_{i=1}^{I}E\{A_{*,i}^4(\theta)\mid\mathcal{Z}\}$. Under Condition~\ref{condition: slow}, we have $\text{var}\{I^{-1}\sum_{i=1}^{I}A_{*,i}^2(\theta)\mid\mathcal{Z}\}\rightarrow 0$ as $I\rightarrow \infty$. Invoking Chebyshev's inequality, we have

\vspace{-0.7cm}
\begin{equation*}
I^{-1}\sum_{i=1}^{I}A_{*,i}^2(\theta)-I^{-1}\sum_{i=1}^{I}h_{i,\theta} \xrightarrow{p} 0 \quad \text{as $I\to\infty$}.    
\end{equation*}
\vspace{-1cm}

\noindent Next, by Jensen's inequality and Condition~\ref{condition: slow}, we have $I^{-1}\sum_{i=1}^{I}E\{A_{*,i}^{2}(\theta)\mid\mathcal{Z}\}=o(I)$. Invoking Chebyshev's inequality, we have

\vspace{-0.7cm}
\begin{equation*}
    A_{*}(\theta)-\kappa_{\theta}\xrightarrow{p} 0 \quad \text{as $I\to\infty$}. 
\end{equation*}
Since $\kappa_{\theta}=0$ for all $I$, by the continuous mapping theorem, we have 
\vspace{-0.7cm}
\begin{equation*}
    A_{*}^2(\theta)\xrightarrow{p} 0 \quad \text{as $I\to\infty$}.
\end{equation*}
\vspace{-1.5cm}

 \noindent Combining all these convergence results, we can get that for any $\epsilon>0$ and any $\delta>0$, there exists $I^*$ such that for all $I\geq I^*$, we have

\vspace{-1.2cm}
\begin{align*}
 \text{pr}\Bigg\{I^{-1}\sum_{i=1}^{I}A_{*,i}^2(\theta)-I^{-1}\sum_{i=1}^{I}h_{i,\theta}\leq-\frac{\epsilon}{2}\mid \mathcal{Z}\Bigg\}\leq\frac{\delta}{2}
\end{align*}
\vspace{-1.2cm}

\noindent and

\vspace{-1.7cm}
\begin{align*}
 \text{pr}\Big\{A_{*}^2(\theta)\geq \frac{\epsilon}{2}\mid \mathcal{Z}\Big\}\leq\frac{\delta}{2}.
\end{align*}
\vspace{-1.2cm}

\noindent Also, note that for any $I$, we have

\vspace{-1.2cm}
\begin{equation*}
    \frac{1}{I-1}\sum_{i=1}^{I}h_{i,\theta}-I\omega_{\theta}=\frac{1}{I-1}\sum_{i=1}^{I}E\{A_{*,i}^2(\theta)\mid\mathcal{Z}\}-\frac{1}{I}\sum_{i=1}^{I}\text{var}\{A_{*,i}(\theta)\mid\mathcal{Z}\}\geq 0.
\end{equation*}
\vspace{-1cm}

\noindent Then, we have

\vspace{-1.3cm}
\begin{align}\label{eqn: conservative variance estimator}
    &\quad \ \text{pr}\big\{IV_{*}^2(\theta)-I\omega_{\theta}\leq-\epsilon\mid \mathcal{Z} \big\}\nonumber \\ &=\text{pr}\Bigg[\frac{I}{I-1}\Bigg\{I^{-1}\sum_{i=1}^{I}A_{*,i}^2(\theta)-A_{*}^2(\theta)\Bigg\}-I\omega_{\theta}\leq-\epsilon\mid \mathcal{Z} \Bigg]\nonumber \\ 
    &=\text{pr}\Bigg[\frac{I}{I-1}\Bigg\{I^{-1}\sum_{i=1}^{I}A_{*,i}^2(\theta)-I^{-1}\sum_{i=1}^{I}h_{i,\theta}+I^{-1}\sum_{i=1}^{I}h_{i,\theta}-A_{*}^2(\theta)\Bigg\}-I\omega_{\theta}\leq-\epsilon\mid \mathcal{Z}\Bigg]\nonumber \\ 
    &=\text{pr}\Bigg[\frac{I}{I-1}\Bigg\{I^{-1}\sum_{i=1}^{I}A_{*,i}^2(\theta)-I^{-1}\sum_{i=1}^{I}h_{i,\theta}-A_{*}^2(\theta)\Bigg\}+\frac{1}{I-1}\sum_{i=1}^{I}h_{i,\theta}-I\omega_{\theta}\leq-\epsilon \mid \mathcal{Z}\Bigg]\nonumber \\ 
    &\leq\text{pr}\Bigg[\frac{I}{I-1}\Bigg\{I^{-1}\sum_{i=1}^{I}A_{*,i}^2(\theta)-I^{-1}\sum_{i=1}^{I}h_{i,\theta}-A_{*}^2(\theta)\Bigg\}\leq-\epsilon \mid \mathcal{Z}\Bigg]  \nonumber\\
     &\leq\text{pr}\Bigg[\frac{I}{I-1}\Bigg\{I^{-1}\sum_{i=1}^{I}A_{*,i}^2(\theta)-I^{-1}\sum_{i=1}^{I}h_{i,\theta}\Bigg\}\leq-\frac{\epsilon}{2} \mid \mathcal{Z}\Bigg]  + \text{pr}\bigg\{-\frac{I}{I-1}A_{*}^2(\theta)\leq-\frac{\epsilon}{2} \mid \mathcal{Z}\Big\} \nonumber\\
    &\leq\frac{\delta}{2}+\frac{\delta}{2}=\delta.
\end{align}

Since $I^{-1}\sum_{i=1}^{I}\kappa_{i,\theta}=\kappa_{\theta}=0$, we can rewrite the test statistic as $A_{*}(\theta)=I^{-1}\sum_{i=1}^{I}A_{*,i}(\theta)=I^{-1}\sum_{i=1}^{I}[A_{*,i}(\theta)-\kappa_{i,\theta}]$. Therefore, under Condition~\ref{condition: non-degenerate}, we can write

\vspace{-1cm}
\begin{align}\label{eqn: two terms}
    \frac{A_{*}(\theta)}{V_{*}(\theta)}=\Bigg[\frac{I^{-1}\sum_{i=1}^{I}\{A_{*,i}(\theta)-\kappa_{i,\theta}\}}{\sqrt{I^{-2}\sum_{i=1}^{I}\omega_{i,\theta}}}\Bigg]\Bigg\{\frac{\sqrt{I^{-2}\sum_{i=1}^{I}\omega_{i,\theta}}}{\sqrt{V_{*}^2(\theta)}}\Bigg\}.
\end{align}
\vspace{-1cm}

\noindent Under Condition~\ref{condition: bounded IPPW} (which implies that each $A_{*,i}(\theta)-\kappa_{i,\theta}$ has finite first and second moments) and Condition~\ref{condition: lim}, Lyapunov's conditions are satisfied for the central limit theorem stated in Lemma~\ref{lem: Lyapunov}, so the first term on the right-hand side of (\ref{eqn: two terms}) converges in distribution to $N(0,1)$. From (\ref{eqn: conservative variance estimator}) (i.e., the second term in the right-hand side of (\ref{eqn: two terms}) will be less than or equal to 1 with arbitrarily high probability as $I\rightarrow \infty$), by Slutsky's theorem, we can obtain that for any $t>0$, we have

\vspace{-1.2cm}
\begin{align*}
    \limsup\limits_{I\to\infty} \text{pr}\bigg\{\frac{A_{*}(\theta)}{V_{*}(\theta)}\leq -t \mid\mathcal{Z}\bigg\}\leq\Phi(-t), \quad \limsup\limits_{I\to\infty} \text{pr}\bigg\{\frac{A_{*}(\theta)}{V_{*}(\theta)}\geq t \mid\mathcal{Z}\bigg\}\leq\Phi(-t).
\end{align*}
\vspace{-1.2cm}

\noindent Therefore, we have 

\vspace{-1.2cm}
\begin{align*}
   &\quad \ \liminf_{I\rightarrow \infty}\text{pr}\big(\theta\in CS^{\theta}_{*}\mid \mathcal{Z} \big)\\
   &= \liminf_{I\rightarrow \infty}\text{pr}\bigg\{\Big|\frac{A_{*}(\theta)}{V_{*}(\theta)}\Big|\leq \Phi^{-1}(1-\alpha/2) \mid\mathcal{Z}\bigg\}\\
    &=1-\limsup\limits_{I\to\infty} \text{pr}\bigg\{\frac{A_{*}(\theta)}{V_{*}(\theta)}\leq -\Phi^{-1}(1-\alpha/2) \mid\mathcal{Z}\bigg\}-\limsup\limits_{I\to\infty} \text{pr}\bigg\{\frac{A_{*}(\theta)}{V_{*}(\theta)}\geq \Phi^{-1}(1-\alpha/2) \mid\mathcal{Z}\bigg\}\\
    &\geq 1-\alpha.
\end{align*}
\vspace{-1.2cm}

\noindent So the desired result in Theorem 2 follows. 
\end{proof}

\section*{Appendix C:  Simulation Studies}\label{sec: Appendix C}

\subsection*{C.1: Comparative Performance of the Proposed IPPW Method and Other Methods}

We conduct a simulation study to assess the bias and coverage rate of the proposed IPPW method compared with the conventional pre-matching and post-matching randomization-based inference method for the sample average treatment effect (as reviewed in Section 2 in the main text), which ignores inexact matching and parsimoniously assumes the randomization assumption (\citealp{rosenbaum2002observational, fogarty2018mitigating}). We set the sample size $N=400$. The five covariates $\mathbf{x}_{n}=(x_{n1},\cdots,x_{n5})$ for each pre-matching unit $n$ are generated using the following process: $(x_{n1},x_{n2},x_{n3})\overset{\text{i.i.d.}}{\sim} \mathcal{N}((0,0,0),\mathbf{I}_{3 \times 3})$, $x_{n4}\overset{\text{i.i.d.}}{\sim}\text{Laplace}(0, \sqrt{2}/2)$, and $x_{n5}\overset{\text{i.i.d.}}{\sim}\text{Laplace}(0, \sqrt{2}/2)$. Let $g(\mathbf{x}_{n})=0.72x_{n1}+0.88x_{n2}+0.93x_{n3}+0.65x_{n4}+0.78x_{n5}-0.8$ and $f(\mathbf{x}_{n})=0.1x_{n1}^3+0.3x_{n2}+0.2\log(x_{n3}^2)+0.1x_{n4}+0.2x_{n5}+|x_{n1}x_{n2}|+(x_{n3}x_{n4})^2+0.5(x_{n2}x_{n4})^2-2.5$. Then, we consider the following three common models for generating the treatment indicator $Z_{n}$ for each unit $n$:
\begin{itemize}
    \item Model 1 (Linear Logistic Model): $\text{logit}\ \text{pr}(Z_{n}=1\mid \mathbf{x}_{n})=g(\mathbf{x}_{n})$.
    \item Model 2 (Nonlinear Logistic Model): $\text{logit}\ \text{pr}(Z_{n}=1\mid \mathbf{x}_{n})=f(\mathbf{x}_{n})+\epsilon_{n}^{z}$ with $\epsilon_{n}^{z}\overset{\text{i.i.d.}}{\sim} N(0,1)$.
    \item Model 3 (Nonlinear Selection Model): $Z_{n}=\mathbbm{1}\{f(\mathbf{x}_{n})>\epsilon_{n}^{z}\}$ with $\epsilon_{n}^{z}\overset{\text{i.i.d.}}{\sim} N(0,1)$.
\end{itemize}
For each unit $n$, we consider the following generating process for the potential outcome under control: $Y_{n}(0)=0.2x_{n1}^3+0.2|x_{n2}|+0.2x_{n3}^3+0.5|x_{n4}|+0.3x_{n5}+\epsilon_{n}^{y}$ with $\epsilon_{n}^{y}\overset{\text{i.i.d.}}{\sim} N(0,1)$, and the potential outcome under treatment: $Y_{n}(1)=Y_{n}(0)+(1+0.3x_{n1}+0.2x_{n3}^3)$ (i.e., heterogeneous treatment effects). After generating 400 pre-matching units in each scenario, we generate matched datasets using optimal full matching (\citealp{hansen2004full, rosenbaum2020design}) implemented via the widely used \texttt{R} package \texttt{optmatch} (\citealp{hansen2006optimal}). Each model uses two matching regimes: Model 1 applies optimal full matching with mild and strict propensity score calipers, whereas Models 2 and 3 use optimal full matching with and without a caliper, respectively. For each simulation setting, we generate 1000 matched datasets that meet the commonly used covariate balance criteria--the absolute standardized mean differences are less than 0.2 for all the covariates (\citealp{rosenbaum2020design, pimentel2024covariate}). In Table \ref{tab: simulation IPPW}, we report the mean estimation bias, the mean length of the 95\% confidence interval, and the coverage rate of the 95\% confidence interval for the sample average treatment effect based on the following methods: the conventional post-matching randomization-based inference method for the sample average treatment effect based on $\widehat{\lambda}$ (\citealp{rosenbaum2002observational, fogarty2018mitigating}), and randomization-based inferences using the IPPW estimator $\widehat{\lambda}_{\diamond}$ and the oracle IPPW estimator $\widehat{\lambda}_{*}$. Because the core idea of the proposed IPPW method is to re-weight the post-matching finite-population difference-in-means estimator based on discrepancies in $p_{ij}$ due to inexact matching, =we also report results based on the finite-population weighting estimator directly applied to the pre-matching (unmatched) datasets (\citealp{rosenbaum1987model, mukerjee2018using}), which represents the conventional randomization-based inference for observational data without matching. We consider both the version using estimated propensity scores and the version using oracle propensity scores. For a fair comparison, we use consistent approaches to estimate the propensity scores and conduct inference across all estimators in the same model. In Model 1, the estimated propensity scores $\widehat{e}_{ij}$ are obtained using the commonly applied logistic regression method. For the IPPW estimator $\widehat{\lambda}_{\diamond}$, inference is conducted via the finite-population M-estimation approach described in Section 3.2. In Models 2 and 3, 
the estimated propensity scores $\widehat{e}_{ij}$ are obtained by the XGBoost method (\citealp{chen2016xgboost}). For the IPPW estimator $\widehat{\lambda}_{\diamond}$ under these two models, we use the plug-in variance estimator $S^2_{\diamond}(Q)$ for the simulation study. Importantly, under the full matching framework, the post-matching sample size is still $N=400$ and matches the pre-matching sample size, and the estimand $\lambda$ (the sample average treatment effect) is the same for all the methods.

\setcounter{table}{0}
\begin{table}[ht]
\caption{The mean values of the estimation bias, the confidence interval (CI) length, and the coverage rate of $95\%$ confidence intervals for the sample average treatment effect based on the conventional post-matching randomization-based inference (which ignores inexact matching), the conventional randomization-based inference without matching (based on the estimated propensity scores $\widehat{e}_{ij}$ and the oracle propensity scores $e_{ij}$), and the proposed IPPW method (based on the estimated post-matching probabilities $\widehat{p}_{ij}$ and the oracle post-matching probabilities $p_{ij}$).}
\centering
\small
\scalebox{0.88}{\begin{tabular}{ccccccc}
\toprule
\multirow{2}{*}{Model 1}&\multicolumn{3}{c}{With Mild Caliper}&\multicolumn{3}{c}{With Strict Caliper} \\
\cmidrule(rl){2-4} \cmidrule(rl){5-7} 
 & {Bias} & {CI Length} & {Coverage Rate} & {Bias} & {CI Length} & {Coverage Rate} \\
\midrule
Conventional (Post-Matching)  & 0.301 & 0.907 & 0.741 & 0.220 & 1.034 & 0.930 \\
Conventional (Without Matching) & 0.262 & 0.924 & 0.790 & 0.283 & 0.935 & 0.768 \\
IPPW (The Proposed Method) & 0.094 & 0.810 & 0.964 & 0.096 & 0.915 & 0.974 \\
Oracle Conventional (Without Matching) & -0.012 & 1.306 & 0.889 & 0.032 & 1.313 & 0.862 \\
Oracle IPPW (The Proposed Method) & 0.089 & 0.850 & 0.964 & 0.094 & 0.965 & 0.979 \\
\bottomrule
\multirow{2}{*}{Model 2}&\multicolumn{3}{c}{Without Caliper}&\multicolumn{3}{c}{With Caliper} \\
\cmidrule(rl){2-4} \cmidrule(rl){5-7} 
 & {Bias} & {CI Length} & {Coverage Rate} & {Bias} & {CI Length} & {Coverage Rate} \\
\midrule
Conventional (Post-Matching)  & 0.378 & 0.868 & 0.591 & 0.311 & 0.932 & 0.767 \\
Conventional (Without Matching) & 0.442 & 1.004 & 0.600 & 0.454 & 1.007 & 0.583 \\
IPPW (The Proposed Method) & 0.301 & 0.879 & 0.743 & 0.250 & 0.940 & 0.871 \\
Oracle Conventional (Without Matching) & 0.099 & 1.128 & 0.922 & 0.112 & 1.128 & 0.916 \\
Oracle IPPW (The Proposed Method) & 0.119 & 0.868 & 0.951 & 0.151 & 0.948 & 0.950 \\
\bottomrule
\multirow{2}{*}{Model 3}&\multicolumn{3}{c}{Without Caliper}&\multicolumn{3}{c}{With Caliper} \\
\cmidrule(rl){2-4} \cmidrule(rl){5-7} 
 & {Bias} & {CI Length} & {Coverage Rate} & {Bias} & {CI Length} & {Coverage Rate} \\
\midrule
Conventional (Post-Matching)  & 0.492 & 1.032 & 0.506 & 0.431 & 1.138 & 0.686 \\
Conventional (Without Matching) & 0.547 & 1.136 & 0.495 & 0.572 & 1.142 & 0.469 \\
IPPW (The Proposed Method) & 0.325 & 0.993 & 0.786 & 0.300 & 1.103 & 0.854 \\
Oracle Conventional (Without Matching) & 0.303 & 1.313 & 0.813 & 0.314 & 1.361 & 0.800 \\
Oracle IPPW (The Proposed Method) & 0.220 & 1.127 & 0.920 & 0.260 & 1.390 & 0.926 \\
\bottomrule

\end{tabular}}
\label{tab: simulation IPPW}
\end{table}

Table \ref{tab: simulation IPPW} delivers three important messages. First, the proposed IPPW estimator consistently outperforms the conventional randomization-based inference without matching (i.e., that based on the finite-population weighting estimator), both when the true propensity scores are used and when the propensity scores are estimated. This confirms the usefulness of matching (as well as our improved inference methods based on matching) under the considered simulation settings, which agrees with the previous literature on advocating matching as an effective and nonparametric data preprocessing step \citep{hansen2004full, guo2023statistical, pimentel2024covariate}. Second, the proposed IPPW method can evidently reduce estimation bias and improve coverage rate compared with the conventional post-matching randomization-based inference method commonly used in previous studies (\citealp{rosenbaum2002observational, fogarty2018mitigating}). Third, as expected, in nonparametric propensity score settings, the performance of the IPPW method based on $\widehat{\lambda}_{\diamond}$ is suboptimal to that based on the oracle form $\widehat{\lambda}_{*}$. This suggests that there is still space for improving the proposed IPPW method by improving the estimation and uncertainty quantification of $p_{ij}$ in nonparametric propensity score settings, which is a meaningful future research direction. 

To summarize, our simulation study confirms the following straightforward rationale: by incorporating the post-matching covariate imbalance information, the proposed IPPW method is promising to effectively reduce estimation bias and improve the coverage rate of the confidence interval for the sample average treatment effect. In contrast, the conventional post-matching randomization-based inference is typically suboptimal because it discards the covariate imbalance information after matching and simply treats the post-matching treatment assignments as uniform assignments. 
\setcounter{remark}{0}
\begin{remark}
Although the oracle conventional estimator without matching is unbiased for the sample average treatment effect, the oracle propensity scores can be highly extreme (i.e., close to zero or one) in our simulations. These extreme values may induce substantial finite-sample bias and/or yield non-informative confidence intervals. To mitigate this issue, we implement a regularization strategy to reduce the influence of extreme propensity scores. As a result, the regularized estimator is no longer exactly unbiased, which helps explain why the oracle conventional methods without matching do not achieve the 0.95 coverage rate in our simulation studies (see Remark~\ref{rem: weighting} of Appendix E for details).
\end{remark}

\subsection*{C.2: Comparison of Variance Estimators for the IPPW Estimator with Estimated Propensity Scores}

Empirically, the true propensity scores are unknown, and therefore the true post-matching treatment assignment probabilities $p_{ij}$ are also unknown. Accordingly, we adopt a plug-in strategy and use the IPPW estimator $\widehat{\lambda}_{\diamond}$, as introduced in Section 3.2 of the main text. For inference on $\widehat{\lambda}_{\diamond}$, we propose the variance estimator $S_{\mathcal{M}}^2$ based on the finite-population M-estimation framework, which enjoys the formal validity guarantee stated in Theorem 4 of the main text. In practice, however, $S_{\mathcal{M}}^2$ may be unavailable or inconvenient to compute, for example, when the propensity scores are estimated using flexible nonparametric methods. In such settings, a simple alternative is to apply the plug-in variance estimator $S^2_{\diamond}(Q)$ from Section 3.2 in the main text. Although this plug-in variance estimator does not come with a general asymptotic validity guarantee, it performs well empirically. To assess its performance, we focus on Model 1 in the simulation study in Section C.1, where the propensity score model is correctly specified and parametric. We evaluate the bias of $\widehat{\lambda}_{\diamond}$ and compare two variance estimators for constructing $95\%$ confidence intervals: (i) the plug-in variance estimator $S^2_{\diamond}(Q)$, and (ii) the asymptotically valid estimator $S_{\mathcal{M}}^2$ derived from the finite-population M-estimation framework. As shown in Table \ref{tab: comparison}, the resulting coverage rates are very similar across the two approaches, suggesting that the plug-in variance estimator can provide a reasonable practical approximation in the considered simulation setting.

\begin{table}[ht]
\caption{The mean coverage rate of $95\%$ confidence intervals for the sample average treatment effect based on the IPPW estimator $\widehat{\lambda}_{\diamond}$ with two variance estimators: the empirical (plug-in) variance estimator and the M-estimation-based variance estimator. }
\centering
\small
\begin{tabular}{ccccc}
\toprule
&\multicolumn{1}{c}{Without Caliper}&\multicolumn{1}{c}{With Caliper} \\
\midrule
Plug-in Variance Estimator & 0.965 & 0.975 \\
M-Estimation-Based Variance Estimator & 0.964 & 0.974 \\
\bottomrule
\end{tabular}
\label{tab: comparison}
\end{table}

\subsection*{C.3: Comparative Performance of the IPPW and AIPPW Methods Based on Estimated Post-Matching Treatment Assignment Probabilities}

In Section 4 of the main text, we discuss incorporating outcome models into the IPPW estimator, leading to the augmented inverse post-matching probability weighting (AIPPW) estimator. Because the true propensity scores are not available in practice, they must be estimated from the sample. In this setting, adding outcome models can either reduce or increase estimation bias (i.e., mitigate or exacerbate the impact of not knowing the true propensity scores), depending on how well the outcome models predict the outcomes. To examine this idea, we conduct additional simulation studies. The simulation setup is the same as that in Section C.1. We compare two estimators: the proposed plug-in IPPW estimator $\widehat{\lambda}_{\diamond}$ and the plug-in AIPPW estimator $\widehat{\lambda}_{\ddagger}$. As in Section C.1, the estimated propensity scores $\widehat{e}_{ij}$ (which are used to calculate the estimated post-matching treatment assignment probabilities $\widehat{p}_{ij}$ using the formulas in Section 2 of the main text) are obtained using logistic regression in Model 1, whereas they are obtained using XGBoost in Models 2 and 3. In all three models, the outcome models used in the AIPPW estimator are fitted by linear regression using the sample: specifically, we fit the outcome model $g_1(\cdot)$ using the treated units and the outcome model $g_0(\cdot)$ using the control units. 

We report the mean estimation bias, the mean length of the 95\% confidence interval, and the coverage rate of the 95\% confidence interval for the sample average treatment effect in Table \ref{tab: simulation AIPPW}. The results in Table \ref{tab: simulation AIPPW} show three main patterns. First, AIPPW has a smaller or similar bias than IPPW in Models 1 and 2, but AIPPW has a larger bias than IPPW in Model 3. This is consistent with our discussion (in Section 4 of the main text) that adding outcome models can either mitigate or worsen estimation bias, depending on model performance. Second, across all three models, AIPPW has a smaller variance than the IPPW estimator, suggesting that the outcome models explain part of the variation in the potential outcomes, so the residuals are usually less dispersed than the original outcomes. Third, despite having smaller bias in some settings, the AIPPW estimator has lower coverage rates than the IPPW estimator in all three models. This indicates that reducing bias alone does not necessarily lead to better overall inferential performance, as confidence interval coverage also depends on variance estimation and the underlying sample-specific distribution of the estimator.

\begin{table}[ht]
\caption{The mean values of the estimation bias, the confidence interval (CI) length, and the coverage rate of $95\%$ confidence intervals for the sample average treatment effect using the proposed IPPW method and AIPPW methods (both based on the estimated post-matching probabilities $\widehat{p}_{ij}$).}
\centering
\small
\begin{tabular}{ccccccc}
\toprule
\multirow{2}{*}{Model 1}&\multicolumn{3}{c}{With Mild Caliper}&\multicolumn{3}{c}{With Strict Caliper} \\
\cmidrule(rl){2-4} \cmidrule(rl){5-7} 
 & {Bias} & {CI Length} & {Coverage Rate} & {Bias} & {CI Length} & {Coverage Rate} \\
\midrule
IPPW & 0.094 & 0.810 & 0.964 & 0.096 & 0.915 & 0.974 \\
AIPPW & -0.006 & 0.716 & 0.936 & 0.001 & 0.806 & 0.964 \\
\bottomrule
\multirow{2}{*}{Model 2}&\multicolumn{3}{c}{Without Caliper}&\multicolumn{3}{c}{With Caliper} \\
\cmidrule(rl){2-4} \cmidrule(rl){5-7} 
 & {Bias} & {CI Length} & {Coverage Rate} & {Bias} & {CI Length} & {Coverage Rate} \\
\midrule
IPPW & 0.301 & 0.879 & 0.743 & 0.250 & 0.940 & 0.871 \\
AIPPW & 0.262 & 0.745 & 0.723 & 0.252 & 0.789 & 0.769 \\
\bottomrule
\multirow{2}{*}{Model 3}&\multicolumn{3}{c}{Without Caliper}&\multicolumn{3}{c}{With Caliper} \\
\cmidrule(rl){2-4} \cmidrule(rl){5-7} 
 & {Bias} & {CI Length} & {Coverage Rate} & {Bias} & {CI Length} & {Coverage Rate} \\
\midrule
IPPW & 0.325 & 0.993 & 0.786 & 0.300 & 1.103 & 0.854 \\
AIPPW & 0.365 & 0.828 & 0.578 & 0.367 & 0.892 & 0.616 \\
\bottomrule

\end{tabular}
\label{tab: simulation AIPPW}
\end{table}

\subsection*{C.4: Simulation Studies Under the Instrumental Variable Setting}\label{sec: IV simulations}

We conduct a simulation study under the instrumental variable setting to assess the bias and coverage rate of the proposed bias-corrected Wald estimator compared with the classical post-matching Wald estimator. In the simulation study, we set the sample size $N=400$. All five covariates $\mathbf{x}_{n}=(x_{n1},\dots,x_{n5})$ for each unit $n$ ($n=1,\dots, N$) are generated from the same process as that considered in Section 4 in the main text: $(x_{n1},x_{n2},x_{n3})\overset{\text{i.i.d.}}{\sim} \mathcal{N}((0,0,0),\mathbf{I}_{3 \times 3})$, $x_{n4}\overset{\text{i.i.d.}}{\sim}\text{Laplace}(0, \sqrt{2}/2)$, and $x_{n5}\overset{\text{i.i.d.}}{\sim}\text{Laplace}(0, \sqrt{2}/2)$. We let $f_1(\mathbf{x}_{n})=0.1x_{n1}^3+0.3x_{n2}+0.2\log(x_{n3}^2)+0.1x_{n4}+0.2x_{n5}+|x_{n1}x_{n2}|+(x_{n3}x_{n4})^2+0.5(x_{n2}x_{n4})^2-2.5$, then the following two models are used to generate the instrumental variable $Z_{n}$ for each unit $n$:
\begin{itemize}
\item Model 1 (Nonlinear Logistic Model): $\text{logit}\ \text{pr}(Z_{n}=1\mid \mathbf{x}_{n})=f_1(\mathbf{x}_{n})+\epsilon^{z}_{n}$ with $\epsilon^{z}_{n}\overset{\text{i.i.d.}}{\sim} N(0,1)$.
    \item Model 2 (Nonlinear Selection Model): $Z_{n}=\mathbbm{1}\{f_1(\mathbf{x}_{n})>\epsilon^{z}_{n}\}$ with $\epsilon^{z}_{n}\overset{\text{i.i.d.}}{\sim} N(0,1)$.
\end{itemize}
Then, we set $f_2(\mathbf{x}_{n})=0.7x_{n1}+0.4\text{sin}(x_{n2})+0.4|x_{n3}|+0.6x_{n4}+0.1x_{n5}+0.3x_{n3}x_{n4}-1$. For each unit $n$, we consider the following generating process for the treatment variable: $D_n=\mathbbm{1}\{f_2(\mathbf{x}_{n})+u_{n}^d+(2+0.8x_{2n}^2)Z_{n}>\epsilon_{n}^{d}\}$ with $\epsilon_{n}^{d}\overset{\text{i.i.d.}}{\sim} N(0,1)$.
Next, we consider the following generating process for the outcome: $Y_{n}=f_3(\mathbf{x}_{n})+u^y_{n}+(1+0.1x_{n1}+0.3x_{n3}^2)D_n$, where $f_3(\mathbf{x}_{n})=0.4x_{n1}^2+0.1|x_{n2}|+0.1x_{n3}^2+0.2\text{cos}(x_{n4})+0.5\text{sin}(x_{n5})$. Here, the treatment-outcome unobserved covariates $u_{n}^d$ and $u_{n}^y$ of each unit $n$ follow the following joint distribution:
\[\begin{pmatrix}
u_{n}^d \\ 
u_{n}^y
\end{pmatrix}\overset{\text{i.i.d.}}\sim N\left(\begin{pmatrix}
    0 \\ 
    0
\end{pmatrix},\begin{pmatrix}
    1&0.8 \\ 
    0.8&1
\end{pmatrix}\right).
\]
After generating 400 unmatched units in each scenario, we use the widely used optimal full matching procedure with and without propensity score caliper to generate matched datasets (\citealp{hansen2006optimal, rosenbaum2020design}). For each simulation setting, we generate 1000 matched datasets where the absolute mean differences for all the observed covariates are less than the commonly used threshold 0.2 (\citealp{rosenbaum2020design, zhang2023social, pimentel2024covariate}). 

In Table \ref{tab: simulation}, we report the mean estimation bias, the mean $95\%$ confidence interval length, and the coverage rate of $95\%$ confidence intervals for the effect ratio using the classical post-matching Wald estimator $\widehat{\theta}$, the bias-corrected post-matching Wald estimator $\widehat{\theta}_{\diamond}$ based on the estimate $\widehat{p}_{ij}$, and the oracle bias-corrected post-matching Wald estimator $\widehat{\theta}_{*}$ based on the oracle $p_{ij}$. In the bias-corrected Wald estimator $\widehat{\theta}_{\diamond}$, each estimate $\widehat{p}_{ij}$ is obtained by plugging the propensity scores estimated by XGBoost (\citealp{chen2016xgboost}) into the corresponding formulas in the main text. Both $\widehat{\theta}_{\diamond}$ and its oracle form $\widehat{\theta}_{*}$ adopt a regularization step for handling extreme values of $\widehat{p}_{ij}$ (see Remark~\ref{rem: regularization} for details).

Table \ref{tab: simulation} suggests three key points. First, the bias-corrected Wald estimator can significantly reduce the estimation bias compared with the classical Wald estimator in the considered simulation settings. Second, the coverage rates of the confidence intervals reported by the bias-corrected Wald estimator are evidently higher than those reported by the classical Wald estimator. Third, as expected, the oracle bias-corrected Wald estimator $\widehat{\theta}_{*}$ outperforms the practical bias-corrected Wald estimator $\widehat{\theta}_{\diamond}$ in terms of estimation bias and coverage rate. This suggests that there is substantial potential to enhance the performance of the bias-corrected Wald estimator $\widehat{\theta}_{\diamond}$ by improving the estimation precision and uncertainty quantification of $\widehat{p}_{ij}$, pointing out a valuable direction for future research. All these findings are consistent with the findings from the simulation study in Section C.1.

\begin{table}[ht]
\caption{The mean estimation bias, the mean confidence interval (CI) length, and the coverage rate of $95\%$ confidence intervals for the effect ratio based on different types of post-matching Wald estimators: the classical type and the proposed bias-corrected type (based on the estimate $\widehat{p}_{ij}$ and the oracle $p_{ij}$). }
\centering
\small
\begin{tabular}{ccccccc}
\toprule
\multirow{2}{*}{Model 1}&\multicolumn{3}{c}{Without Caliper}&\multicolumn{3}{c}{With Caliper} \\
\cmidrule(rl){2-4} \cmidrule(rl){5-7} 
 & {Bias} & {CI Length} & {Coverage Rate} & {Bias} & {CI Length} & {Coverage Rate} \\
\midrule
Classical Wald & 0.380 & 1.046 & 0.689 & 0.337 & 1.112 & 0.755 \\
Bias-Corrected Wald & 0.343 & 1.094 & 0.763 & 0.298 & 1.161 & 0.805 \\
Bias-Corrected Wald (Oracle) & 0.258 & 1.198 & 0.865 & 0.249 & 1.243 & 0.856 \\
\bottomrule
\multirow{2}{*}{Model 2}&\multicolumn{3}{c}{Without Caliper}&\multicolumn{3}{c}{With Caliper} \\
\cmidrule(rl){2-4} \cmidrule(rl){5-7} 
 & {Bias} & {CI Length} & {Coverage Rate} & {Bias} & {CI Length} & {Coverage Rate} \\
\midrule
Classical Wald & 0.541 & 1.122 & 0.509 & 0.514 & 1.212 & 0.570 \\
Bias-Corrected Wald & 0.474 & 1.216 & 0.657 & 0.451 & 1.309 & 0.695 \\
Bias-Corrected Wald (Oracle) & 0.373 & 1.401 & 0.784 & 0.421 & 1.492 & 0.777 \\
\bottomrule

\end{tabular}
\label{tab: simulation}
\end{table}

\section*{Appendix D:  Data Application}\label{sec: Appendix B}

Nowadays, kidney disease has been found to have a high incidence rate all over the world, especially in low- and middle-income agricultural countries (\citealp{corsi2012kidney,kinyoki2021kidney}). To explore the effect of agricultural work on chronic kidney disease in men, we use the Zimbabwe 2015 Demographic and Health Surveys (DHS) data, in which there are 4688 individual records in total. Following \citet{yuzhou2022hemoglobin}, we use optimal full matching with propensity score caliper (\citealp{hansen2006optimal, rosenbaum2020design}) to form matched sets of agricultural workers (treated units) and non-agricultural workers (controls) based on the following eight covariates: age, body mass index, wealth index, educational level, marital status, religion, and cluster agricultural percentage. After matching, the 4688 individuals were optimally grouped into 987 matched sets. Table \ref{tab: balance table} reports the pre-matching and post-matching covariate balance for the Zimbabwe 2015 DHS data. From Table \ref{tab: balance table}, we can see that the absolute standardized mean differences in covariates between the treated and control units are all less than 0.08, which would be regarded as sufficient post-matching covariate balance based on the commonly used threshold 0.2 or 0.1 (\citealp{rosenbaum2020design, heng2021sharpening, zhang2023social, pimentel2024covariate}). 

\begin{table}[ht]
    \caption{Pre-matching and post-matching covariate balance of the Zimbabwe 2015 data, measured by the standardized difference in means (std.dif) between the treated and control units. }
    \centering
    \begin{tabular}[t]{ccc}
\toprule
 Covariate & Std.dif (Pre-Matching) & Std.dif (Post-Matching)\\
\midrule
 \multicolumn{1}{l}{Age} & \multicolumn{1}{l}{\,\,\:\:$0.137$} & \multicolumn{1}{l}{$-0.014$} \\
            \multicolumn{1}{l}{Body Mass Index} & \multicolumn{1}{l}{$ -0.047$} & \multicolumn{1}{l}{$-0.004$} \\
            \multicolumn{1}{l}{Wealth Index} & \multicolumn{1}{l}{$-0.095$} & \multicolumn{1}{l}{$-0.062$} \\
            \multicolumn{1}{l}{Educational Level} & \multicolumn{1}{l}{$-0.282$} & \multicolumn{1}{l}{$-0.074$} \\
            \multicolumn{1}{l}{Currently Married or Not} & \multicolumn{1}{l}{\,\,\:\:$0.132$} & \multicolumn{1}{l}{$-0.001$} \\
            \multicolumn{1}{l}{Christian or Not} & \multicolumn{1}{l}{$-0.112$} & \multicolumn{1}{l}{$-0.013$} \\
            \multicolumn{1}{l}{No Religion or Not} & \multicolumn{1}{l}{\,\,\:\:$0.058$} & \multicolumn{1}{l}{\,\,\:\:$0.013$} \\
            \multicolumn{1}{l}{Cluster Agricultural Percentage} & \multicolumn{1}{l}{\,\,\:\:$1.075$} & \multicolumn{1}{l}{\,\,\:\:$0.062$} \\         
\bottomrule
\end{tabular}
    \label{tab: balance table}
\end{table}

When evaluating kidney dysfunction, the hemoglobin level has been commonly used as an ancillary marker due to the lack of data on serum creatinine assessments in the population (\citealp{corsi2012kidney,kinyoki2021kidney}). In our data analysis, the outcome variable is the adjusted hemoglobin level considered in \citet{yuzhou2022hemoglobin}. Table \ref{tab: results} presents the estimates and 95\% confidence intervals of the sample average treatment effect (among the whole study population), reported by the following three methods: 
\begin{itemize}
    \item The pre-matching finite-population weighting estimator $\widehat{\lambda}_{W}$ (\citealp{rosenbaum1987model,mukerjee2018using}), which represents the conventional randomization-based inference method for the sample average treatment effect without matching. 
    
    \item The post-matching finite-population difference-in-means estimator $\widehat{\lambda}$ (\citealp{rosenbaum2002observational, fogarty2018mitigating}), which represents the conventional post-matching randomization-based inference method for the sample average treatment effect. 
    
    \item The proposed inverse post-matching probability weighting estimator $\widehat{\lambda}_{\diamond}$, as a bias-corrected post-matching randomization-based inference method for the sample average treatment effect. 
\end{itemize}
All three methods target the same finite-population estimand (namely, the sample average treatment effect among the whole population) and are implemented via the same procedure as that described in Section 4 in the main text. From the results reported in Table \ref{tab: results}, we can see that the pre-matching finite-population weighting method reports a near-zero estimate (i.e., null effect), with the 95\% confidence interval centering around zero. The routinely used post-matching finite-population difference-in-means estimator detects a moderate treatment effect ($=-0.056$), and the corresponding 95\% confidence interval centers on the negative effect side. In contrast, the proposed IPPW method detects a much larger treatment effect ($=-0.102$), with the 95\% confidence interval evidently moving towards the negative effect side. 

\begin{table}[ht]
    \caption{The results of the data analyses for the sample average treatment effect among the whole study population in the Zimbabwe 2015 DHS data. }
    \centering
    \begin{tabular}[t]{lrr}
\toprule
 Randomization-Based Inference Method & Estimate & 95\% Confidence Interval\\
\midrule
Conventional (Without Matching) & $0.003$ & $[-0.106,0.112]$ \\
            Conventional (Post-Matching) & $ -0.056$ & $[-0.195,0.083]$ \\
            IPPW (The Proposed Method) & $-0.102$ & $[-0.257,0.053]$ \\
            
\bottomrule
\end{tabular}
    \label{tab: results}
\end{table}

\section*{Appendix E: Additional Remarks}\label{sec: additional remarks}

\begin{remark}\label{rem: differences with model-based methods}
    Our work, which focuses on randomization-based (finite-population) causal inference under inexact matching, is significantly different from the relevant literature on bias correction for inexact matching in super-population causal inference (e.g., \citealp{rubin1973use, ho2007matching, abadie2011bias, guo2023statistical}). First, concerning the target estimands, the relevant literature in super-population causal inference focuses on either super-population average treatment effects (or those on the treated) or some super-population constant effect. In contrast, our work focuses on finite-population average treatment effects. For detailed discussions of the advantages and limitations of finite-population versus super-population causal estimands and inferences, see \citet{imbens2015causal}, \citet{athey2017econometrics}, \citet{zhao2018randomization}, \citet{li2023randomization}, and \citet{ding2024first}. Second, concerning the sources of randomness needed for causal inference, the relevant literature in super-population causal inference needs to assume that the outcomes are i.i.d. realizations from some super-population data-generating process. In contrast, our work relies only on the randomness of treatment assignments and does not require any modeling or distributional assumptions on the outcome variable. Third, concerning the strategies for correcting for bias associated with inexact matching, the relevant literature in super-population causal inference proposes to use post-matching outcome regression to adjust for bias due to inexact matching. However, this strategy does not directly apply to randomization-based (finite-population) causal inference because the potential outcomes are treated as fixed values in randomization-based inference, and the only source of randomness is from the treatment variables instead of the post-treatment outcomes. To address this gap, we propose a different strategy for bias correction for inexact matching, namely the inverse post-matching probability weighting method, which does not rely on any distributional assumptions on the outcome and can facilitate randomization-based causal inference.
\end{remark}

\begin{remark}\label{rem: novelty}
The IPPW estimator was originally proposed in \citet{zhu2023bias}, a preliminary and earlier version of the current manuscript, which focused on estimation instead of inference. In Section A.1 of \citet{pimentel2024covariate}, Pimentel \& Huang derived the variance estimator for the IPPW estimator under the \textit{constant treatment effect model}. Therefore, the inference methods proposed in \citet{pimentel2024covariate} focus on the constant treatment effect model and its extensions (i.e., Fisher's sharp null) and cannot be applied to study the sample average treatment effect (i.e., Neyman's weak null). This motivated us to derive new inference methods based on the IPPW estimator in the newest version of our manuscript. 
\end{remark}

\begin{remark}\label{rem: motivation}
   In the randomization-based inference literature, there are two major reasons behind the importance of developing new methods for handling inexact matching in the average treatment effect (i.e., Neyman's weak null) case. First, as mentioned in the main text, the existing randomization-based inference methods that account for inexact matching primarily focus on Fisher's sharp null, such as the constant treatment effect model and its parametric extensions (\citealp{rosenbaum1988permutation, pimentel2024covariate}). In practice, those parametric treatment effect models may not hold. In contrast, the average treatment effect is always well-defined and is immune to model misspecification. Second, when studying Fisher's sharp null, an alternative strategy to bias-corrected randomization-based inference is to discard matched sets with unsatisfactory covariate balance and only use a subset of matched sets with sufficient covariate balance (\citealp{rosenbaum2012optimal, visconti2018handling}). This is because the target causal estimands in Fisher's sharp null (e.g., those in a parametric treatment effect model) typically do not change with the selection/trimming of matched sets. However, for studying average treatment effects (i.e., Neyman's weak null), the aforementioned trimming strategy may not work as it will change the original causal estimand (e.g., the sample average treatment effect among the whole study population), and developing new inference methods that can adjust for inexact matching is perhaps the only sensible option.
\end{remark}

\begin{remark}
    In this paper, we focus on the sample average treatment effect and its extensions (e.g., the effect ratio) after applying full matching, which is universally interpretable across all inference methods under either exact or inexact matching. If we use other matching methods that trimmed samples (e.g., pair matching), the interpretation of the post-matching sample average treatment effect may depend on the covariate balance and inference methods. For example, after optimal pair matching, we have $\lambda=(2I)^{-1}\sum_{i=1}^{I}\sum_{j=1}^{2}Y_{ij}(1)-Y_{ij}(0)$. Without loss of generality, we assume $Z_{i1}=1$ for each $i$. Under the ignorability assumption, if matching was exact for each pair, we have 
    \vspace{-0.5cm}
    \begin{align*}
        E[Y_{i1}(1)-Y_{i1}(0)\mid Z_{i1}=1, \mathbf{x}_{i1}] &= E[Y_{i1}(1)-Y_{i1}(0)\mid \mathbf{x}_{i1}]\\
        &=E[Y_{i2}(1)-Y_{i2}(0)\mid \mathbf{x}_{i2}]\\
        &=E[Y_{i2}(1)-Y_{i2}(0)\mid Z_{i2}=0, \mathbf{x}_{i2}].
    \end{align*}
\end{remark}
\vspace{-0.5cm}
Therefore, we have
\vspace{-0.3cm}
\begin{equation*}
   E\lambda = E_{\mathbf{X}_{T}}[E(\lambda\mid \mathbf{X}_{T})]=E_{\mathbf{X}_{T}}[E(Y_{i1}(1)-Y_{i1}(0)\mid \mathbf{x}_{i1}, Z_{i1}=1)]=E(Y(1)-Y(0) \mid Z=1).
   \vspace{-0.3cm}
\end{equation*}
However, if $\mathbf{x}_{i1}\neq \mathbf{x}_{i2}$, it may happen that
\vspace{-0.3cm}
 \begin{align*}
        E[Y_{i1}(1)-Y_{i1}(0)\mid Z_{i1}=1, \mathbf{x}_{i1}] \neq E[Y_{i2}(1)-Y_{i2}(0)\mid Z_{i2}=0, \mathbf{x}_{i2}].
    \vspace{-1cm}
    \end{align*}
In this case, we have 
\begin{align*}
   E\lambda&=\frac{1}{2}E_{\mathbf{X}_{T}}[E(Y_{i1}(1)-Y_{i1}(0)\mid \mathbf{x}_{i1}, Z_{i1}=1)]+\frac{1}{2}E_{\mathbf{X}_{C, \mathcal{M}}}[E(Y_{i2}(1)-Y_{i2}(0)\mid \mathbf{x}_{i2}, Z_{i2}=0)]\\
   &=\frac{1}{2}E(Y(1)-Y(0) \mid Z=1)+\frac{1}{2}E(Y(1)-Y(0) \mid Z=0, \mathcal{M}).
   \vspace{-0.6cm}
\end{align*}
\begin{remark}\label{rem: weighting}
   In the finite-population survey sampling literature and the finite-population causal inference literature, the classic finite-population weighting estimator (\citealp{rosenbaum1987model, mukerjee2018using}) takes the following form:
\begin{equation*}
    \widehat{\lambda}_{W}=\Bigg\{\sum_{n=1}^{N}\frac{Z_{n}}{\widehat{e}_n}\Bigg\}^{-1}\sum_{n=1}^{N}\frac{Y_{n}Z_{n}}{\widehat{e}_n}-\Bigg\{\sum_{n=1}^{N}\frac{1-Z_{n}}{1-\widehat{e}_n}\Bigg\}^{-1}\sum_{n=1}^{N}\frac{Y_n (1-Z_n)}{1-\widehat{e}_n},
\end{equation*} 
where each $Z_{n}$, $Y_{n}$, and $\widehat{e}_{n}$ denotes the observed treatment indicator, the observed outcome, and the estimated propensity score of unit $n$ among the pre-matching population, respectively ($n=1,\dots, N$). An essential difference between the finite-population weighting estimator $\widehat{\lambda}_{W}$ and our proposed IPPW estimator $\widehat{\lambda}_{\diamond}$ is that the $\widehat{\lambda}_{W}$ simply uses the estimated propensity scores $\widehat{e}_{n}$ for weighting, while the $\widehat{\lambda}_{\diamond}$ uses matching as a nonparametric data pre-processing step and then adopts the estimated \textit{post-matching} treatment assignment probabilities $\widehat{p}_{ij}$ for weighting. Therefore, the IPPW estimator can also be viewed as an extension of the classic finite-population weighting estimator from the finite-population sampling literature (e.g., \citealp{rosenbaum1987model, mukerjee2018using}) to matched observational studies. 

In Models 2 and 3 of the simulation study in Section C.1 and the data analysis in Section D, the estimated propensity scores $\widehat{e}_{n}$ used in $\widehat{\lambda}_{W}$ is obtained by the commonly used XGBoost method (\citealp{chen2016xgboost}), which are the same as those used for calculating the $\widehat{p}_{ij}$ involved in the IPPW estimator $\widehat{\lambda}_{\diamond}$. For constructing the confidence intervals based on $\widehat{\lambda}_{W}$, we use a commonly used sandwich estimator to calculate the variance of $\widehat{\lambda}_{W}$ (\citealp{jared2004variance}). In addition, to avoid extreme weights caused by $\widehat{e}_{n}$ close to 0 or 1, in both the simulation study and data application, we consider a regularized propensity score $\widehat{e}_{n}^{\text{reg}}=\widehat{e}_{n}\times \mathbbm{1}\{\widehat{e}_{n}\in (0.1, 0.9)\}+ 0.1\times \mathbbm{1}\{\widehat{e}_{n}\leq 0.1\} + 0.9\times \mathbbm{1}\{\widehat{e}_{n}\geq 0.9\}$. This regularization step follows the commonly adopted trimming or truncation strategy in the weighting literature (\citealp{crump2009dealing, ma2020robust}), with the widely used regularization threshold of 0.1 (\citealp{crump2009dealing, sturmer2021propensity}). 

\end{remark}

\begin{remark}\label{rem: regularization}
If some post-matching probabilities $\widehat{p}_{ij}$ involved in $\widehat{\lambda}_{\diamond}$ are very close to 0 or 1, the weights $1/\widehat{p}_{ij}$ or $1/(1-\widehat{p}_{ij})$ will be excessively large, which can render non-informative confidence intervals and large finite-sample bias. Similar to the trimming or truncation strategy commonly adopted in the weighting literature \citep{crump2009dealing, ma2020robust}, we propose to handle extreme values of $\widehat{p}_{ij}$ by considering some regularized post-matching probability $\widehat{p}_{ij}^{\text{reg}}=\widehat{p}_{ij}\times \mathbbm{1}\{\min_{j}\widehat{p}_{ij} > \gamma \text{ and } \max_{j}\widehat{p}_{ij} < 1-\gamma\}+(m_{i}/n_{i})\times \mathbbm{1}\{\min_{j}\widehat{p}_{ij} < \gamma \text{ or } \max_{j}\widehat{p}_{ij} > 1-\gamma\}$, where $\gamma$ is some prespecified small number. For ensuring fair comparisons in both the simulation studies and data application, similar to the regularization step described in Remark~\ref{rem: weighting}, we set the regularization threshold $\gamma = 0.1$, which is a commonly used value in the relevant trimming or truncation literature (\citealp{crump2009dealing, sturmer2021propensity}).
\end{remark}

\begin{remark}\label{rem: classical wald}
The exact form of the classical post-matching Wald estimator $\widehat{\theta}$ (\citealp{baiocchi2010building, kang2016full}) for the effect ratio $\theta$ can be expressed as   
\begin{equation*}
    \widehat{\theta}=\frac{\sum_{i=1}^{I}\frac{n_i^2}{m_i(n_i-m_i)}\sum_{j=1}^{n_{i}}(Z_{ij}-\overline{Z}_{i})(Y_{ij}-\overline{Y}_{i})}{\sum_{i=1}^{I}\frac{n_i^2}{m_i(n_i-m_i)}\sum_{j=1}^{n_{i}}(Z_{ij}-\overline{Z}_{i})(D_{ij}-\overline{D}_{i})},
\end{equation*} 
in which we define $\overline{Z}_{i}=n_{i}^{-1}\sum_{j=1}^{n_{i}}Z_{ij}(1)-Z_{ij}(0)$, $\overline{Y}_{i}=n_{i}^{-1}\sum_{j=1}^{n_{i}}Y_{ij}(1)-Y_{ij}(0)$, and $\overline{D}_{i}=n_{i}^{-1}\sum_{j=1}^{n_{i}}D_{ij}(1)-D_{ij}(0)$. In the simulation study in Section C.4, we use the classical variance estimator for $\widehat{\theta}$ proposed in \citet{baiocchi2010building} and \citet{kang2016full} to construct 95\% confidence sets for the effect ratio $\theta$.

\end{remark}

\begin{remark}\label{rem: bias-corrected wald}

In practice, the bias-corrected Wald estimator $\widehat{\theta}_{\diamond}$ can be obtained by plugging $\widehat{p}_{ij}$ in the oracle bias-corrected Wald estimator $\widehat{\theta}_{*}$:
\begin{equation*}
    \widehat{\theta}_{\diamond}=\frac{\sum_{i=1}^{I}\sum_{j=1}^{n_{i}}\frac{1}{\widehat{p}_{ij}(1-{\widehat{p}_{ij}})}Y_{ij}(Z_{ij}-\widehat{p}_{ij})}{\sum_{i=1}^{I}\sum_{j=1}^{n_{i}}\frac{1}{\widehat{p}_{ij}(1-\widehat{p}_{ij})}D_{ij}(Z_{ij}-\widehat{p}_{ij})}.
\end{equation*}
Similarly, by plugging $\widehat{p}_{ij}$ in both the oracle test statistic $A_{*}(\theta_{0})$ and its variance estimator $V_{*}^2(\theta_{0})$, we can obtain the corresponding test statistic $A_{\diamond}(\theta_{0})$ and corresponding variance estimator $V_{\diamond}^2(\theta_{0})$. Correspondingly, the proposed $100(1-\alpha)\%$ confidence set can be expressed as $CS^{\theta}_{\diamond}=\{\theta_{0}: |A_{\diamond}(\theta_{0})/\sqrt{V^{2}_{\diamond}(\theta_{0})}|\leq\Phi^{-1}(1-\alpha/2)\}$, where the prespecified level $\alpha\in (0,1/2)$.
\end{remark}

\section*{Appendix F: Concluding Remark}\label{sec: Appendix G}
Our theoretical results, simulation studies, and real data analysis convey a consistent message: even when a matched dataset appears balanced in covariates by standard practices, residual post-matching imbalance can still introduce significant bias in randomization-based estimation and inference for average treatment effects. To address this, we propose a bias-corrected randomization-based inference method--the inverse post-matching probability weighting (IPPW) method--suitable for inexactly matched observational studies. We further extend this method to examine the effect ratio, including the complier average treatment effect as a special case, in matched instrumental variable studies. A follow-up study (\citealp{frazier2024bias}) explores how to adjust the proposed method to handle continuous treatments in the pair-matching case. Unlike conventional randomization-based inference methods that overlook inexact matching, our proposed methods account for post-matching covariate imbalance, thereby reducing bias in randomization-based estimation and inference. Moreover, unlike the existing randomization-based inference methods for handling inexact matching, our proposed methods do not require assuming any treatment effect models and, therefore, can be used to study average treatment effects (i.e., Neyman's weak null).

\end{document}